\newcommand{\Rn}{\mathds{R}^{d}}
\newcommand{\R}{\mathds{R}}
\newcommand{\inner}[2]{\langle #1,#2 \rangle}
\newcommand{\vect}[1]{\mathbf{#1}}
\theoremstyle{plain}
\newtheorem{defn}{Definition}
\newtheorem{lem}{Lemma}
\newtheorem{theorem}{Theorem}
\newtheorem{model}{Model}
\newtheorem{prop}{Proposition}
\newtheorem{cor}{Corollary}
\newtheorem{example}{Example}
\newtheorem{remark}{Remark}
\theoremstyle{break}
\theoremstyle{nonumberbreak}
\newtheorem{proof}{Proof}
\theoremstyle{nonumberplain}
\theoremstyle{empty}
\title{Bounded confidence dynamics and graph control: enforcing consensus}
\author{GuanLin Li}
\affil{Georgia Institute of Technology, Department of Physics, Atlanta, GA 30332}
\author{Sebastien Motsch \thanks{Electronic address: \texttt{smotsch@asu.edu}}}
\author{Dylan Weber \thanks{Electronic address: \texttt{djweber3@asu.edu}}}
\affil{Arizona State University,  School of Mathematical and Statistical Sciences, Tempe, AZ 85257-1804, USA}
\begin{document}
\maketitle
\begin{abstract}
  A generic feature of bounded confidence type models is the formation of clusters of agents.  We propose and study a variant of bounded confidence dynamics with the goal of inducing unconditional convergence to a consensus.  The defining feature of these dynamics which we name the \textit{No one left behind dynamics} is the introduction of a local control on the agents which preserves the connectivity of the interaction network.  We rigorously demonstrate that these dynamics result in unconditional convergence to a consensus.  The qualitative nature of our argument prevents us quantifying how fast a consensus emerges, however we present numerical evidence that sharp convergence rates would be challenging to obtain for such dynamics.  Finally, we propose a relaxed version of the control.  The dynamics that result maintain many of the qualitative features of the bounded confidence dynamics yet ultimately still converge to a consensus as the control still maintains connectivity of the interaction network.
\end{abstract}
\tableofcontents

\medskip
\noindent
{\bf Acknowledgments:} The second author wishes to thank Benedetto Picolli for helpful discussions.

\section{Introduction}
Mathematical models of opinion formation have long been objects of theoretical interest.  Models in this context are often posed in an agent based framework where the potential for agents to interact is encoded in a network \cite{olfati-saber_consensus_2007, xia_opinion_2011, lorenz_continuous_2007, castellano_statistical_2009, hegselmann_opinion_2002, vicsek_novel_1995}.  The rise of social networks as some of the main forums for the exchange of ideas clearly motivates the need to continue the study of these models.  Analysis of the metadata associated with social networks shows that an emergent feature is the formation of polarized communities or "echo chambers" \cite{goldie_using_2014, gilbert_blogs_2009, garimella_political_2018, quattrociocchi_echo_2016}.  In this paper we study a class of models that exhibit this phenomenon.  We are especially interested in the emergence of a \textit{consensus} - when the opinions of all agents agree.

The defining feature of the agent-based approach is the study of how locally defined interaction rules affect globally observed behavior among the agents.  Models of opinion formation often have the feature that agents can only interact if they are connected in an underlying network structure.  Therefore, a hallmark of the study of these models is examining how the interplay between the topology of the underlying network and the interaction rules affect the distribution of opinions among the agents \cite{motsch_heterophilious_2014, saber_consensus_2003, weber_deterministic_2019}.  Of particular interest is how these factors can lead to the emergence of a consensus among the agents.  Models in this context have the generic assumption that the opinion of a given agent is continuously influenced by those to whom it is connected in the network according to relatively simple interaction rules which are globally defined. Often these rules carry an assumption of \textit{local consensus}; if agents interact only with each other then they should agree in some sense.  This assumption can also be interpreted as saying that there are only attractive forces present among the agents.  One might assume that the attractive nature of the interactions causes the emergence of consensus to be a ubiquitous feature of this class of models, however this is not the case. The manner in which agents are connected in the underlying network has a large effect on the distribution of opinions observed among the agents \cite{motsch_heterophilious_2014,  weber_deterministic_2019}.  The interplay between the network structure and the interaction rules can often cause the analysis of these models to be very involved; a popular strategy is to use simplifying assumptions on the network structure such as symmetry of connections or static connections that do not change throughout the evolution of the model \cite{saber_consensus_2003, spanos_dynamic_2005, yu_second-order_2010, olfati-saber_consensus_2007}.  Given an interaction rule, the second strategy could be viewed as studying a "linearization" of a model with the same interaction rule but dynamic connections.  A main takeaway from the study of these models is that a necessary condition for the emergence of a consensus is the persistence of a suitable degree of connectivity in the network throughout the evolution of the dynamics. This allows for "heterophilic" interactions; agents with disparate opinions interact and due to the attractive nature of the interaction rules eventually agree \cite{motsch_heterophilious_2014}.

We study a class of models inspired by the Hegselmann-Krause bounded confidence model \cite{hegselmann_opinion_2002} in which the connections between agents are dynamic; a connection forms between agents when their opinions are within an interaction range.  This dynamic (combined with an attractive interaction rule) causes the formation of "clusters" of opinions in the long time limit to be a generic behavior; consensus is rare.  In fact, it is relatively easy to show that a consensus can only occur if the initial opinions of all the agents are within the interaction range of each other.  For this reason much of the study of this class of models has focused on characterizing the clustering behavior \cite{blondel_krauses_2009, deffuant_mixing_2000, jabin_clustering_2014, lorenz_consensus_2006, blondel_2r_2007, krause_discrete_nodate}.  We take a different viewpoint in this manuscript and instead investigate controls on collections of agents \cite{caponigro2013sparse,piccoli2019sparse}   otherwise evolving according to bounded confidence dynamics that result in consensus.  The interaction range in bounded confidence dynamics causes interactions between agents to be "homophilic"; agents only interact with agents who are sufficiently "similar".  This tendency causes the interaction network of a collection of agents to quickly become disconnected and prevents a consensus from occurring despite the fact that agents who do interact attract each other. Therefore, the controls that we impose on agents are generally motivated by maintaining connectivity among the agents.

We investigate two different ways of augmenting the bounded confidence dynamics with the goal of achieving consensus.  The first strategy which we dub the \textit{no one left behind dynamics} imposes the rule that once agents become connected they remain connected.  We prove rigorously that under this augmentation of the bounded confidence dynamics, connectivity of the initial interaction network is sufficient for a consensus to emerge for agents whose opinion can be of arbitrary dimension.  If we restrict agent opinions to being one dimensional we can quantify how fast a consensus is reached as we derive explicit convergence rates.  Here, an interesting phenomenon is observed as we find that the convergence occurs in two stages.  Before all agents are within the interaction range provided by the bounded confidence dynamics the convergence is linear, afterwards the convergence to consensus spontaneously becomes exponential.  The preservation of connections among agents is sufficient to preserve the connectivity of the network however it isn't necessary.  If the existence of \textit{paths} between agents is maintained then the connectivity of the interaction network is maintained as well.  The second strategy which we dub the \textit{relaxed no one left behind dynamics} takes advantage of this observation and demands that agents who are connected by a \textit{path} in the interaction network remain connected by a path.  We find numerical evidence that this less restrictive control is sufficient for consensus as well. We also demonstrate numerically that this strategy is, in a sense, an interpolation between the bounded confidence dynamics and the no one left behind dynamics - most agents evolve according to the bounded confidence dynamics and a high degree of clustering initially occurs. However several "bridging" agents alter their trajectories in order to maintain connectivity of the interaction network and ensure convergence to a consensus.

% Talk about numerical evidence for slower convergence rate? Two stage convergence?

% how to extend study?

\section{Bounded confidence opinion dynamics}

We will consider a collection of $N$ agents where the opinion of the $i$th agent is denoted by ${\bf x}_{i} \in \Rn$.  We will be concerned with a class of opinion dynamics that take the form:
\begin{equation}
  \label{eq:opinion_formation}
  \dot{\bf x}_{i}= \sum_{j=1}^{N} a_{ij}({\bf x}_{j}-{\bf x}_{i}),\quad  a_{ij}=\frac{\Phi_{ij}}{\sum_{k=1}^N \Phi_{ik}} \;\;\;\text{ with }\; \Phi_{ij}=\Phi(\left | {\bf x}_{j}-{\bf x}_{i} \right |).
\end{equation}
Here, $\Phi$ represents the so-called \textit{interaction function} and can be thought of as encoding how much influence one agent exerts over another, i.e. if $a_{ij} \neq 0$ then agent $j$ is influencing agent $i$ with strength $a_{ij}$. The coefficients $a_{ij}$ can be thought of as encoding the structure of a directed network on which the agents interact. As the $a_{ij}$'s are time dependent, the structure of the network changes in time as well.  We will refer to this network as the \textit{interaction network} and denote it by $G = (V,E)$ where $V$ is the set of agents. Throughout the following we will assume that the interaction function has compact support on the interval $[0,1]$, that it is positive on its support and has a minimum and maximum on this interval:
\begin{equation}
  \label{eq:m_and_M}
  m=\min_{r\in[0,1]}\Phi(r) \quad , \quad M=\max_{r\in[0,1]}\Phi(r).
\end{equation}
This assumption encodes that individuals will only interact and share ideas if their opinions are close enough to begin with. Notice that these conditions on the interaction function allow for a discontinuity at $x = 1$ from above in general; a prototypical example would be the indicator function on the interval $[0,1]$, i.e. $\Phi(r)=\mathbbm{1}_{[0,1]}(r)$.
% \begin{displaymath}
%   \Phi(|{\bf x}_{i}-{\bf x}_{j}|) = \left\{
%       \begin{array}{ll}
%         0 & |{\bf x}_{i}-{\bf x}_{j}| > 1 \\
%         1 & |{\bf x}_{i}-{\bf x}_{j}| \leq 1
%       \end{array}
%     \right.
% \end{displaymath}

This model represents a continuous version of the \textit{bounded confidence} opinion dynamics introduced in \cite{hegselmann_opinion_2002}. Notice that in general $a_{ij} \neq a_{ji}$ and thus the dynamics are not symmetric (the center of mass is not preserved). However, if $a_{ij} > 0$ then we must also have $a_{ji} > 0$, in other words if $j$ influences $i$ then $i$ must influence $j$.
\begin{remark}
  Notice that since the interaction coefficients $a_{ij}$ satisfy $\sum_{j=1}^Na_{ij} = 1$ we can rewrite the dynamics  \eqref{eq:opinion_formation} as:
  \begin{equation}
    \label{eq:opinion_formation_bar}
    \dot{\bf x}_{i}= \overline{\bf x}_{i}-{\bf x}_{i},\quad  \overline{\bf x}_{i}= \sum_{j=1}^{N} a_{ij}{\bf x}_{j},
  \end{equation}
  so ${\bf x}_{i}$ moves towards $\overline{\bf x}_{i}$; the average opinion of all agents within the interaction radius of agent $i$ weighted by their influence on agent $i$ (see figure \ref{fig:bounded_confidence}).
  \begin{figure}[ht]
    \centering
    \includegraphics[scale = 2]{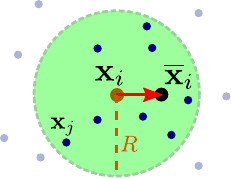}
    \caption{The movement of an agent according to the bounded confidence dynamics \eqref{eq:opinion_formation_bar}.}
    \label{fig:bounded_confidence}
  \end{figure}
\end{remark}
In this manuscript we will be concerned with conditions that cause the opinions of all agents to converge to a consensus.
\begin{defn}
  We say that the dynamics \eqref{eq:opinion_formation} converge to a \textbf{consensus} if there exists ${\bf x}_*$ such that:
  \begin{equation}\label{consensus}
    \lim_{t\rightarrow+\infty}{\bf x}_{i}(t) = {\bf x}_* \quad \text{for any } i.
  \end{equation}
\end{defn}
We will see that due to the local consensus assumption, the notion of \textit{connectivity} is crucial to the formation of a consensus.

\begin{defn}
  \label{connectivity - network}
  We say that the configuration of agents $\set{{\bf x}_{1},...,{\bf x}_{N}}$ is \textbf{connected} if for any two agents $i$ and $j$ there exists a \textit{path} between $i$ and $j$; that is a subset $\set{i_{1},...,i_{m}}\subseteq\set{1,...,N}$ such that $i_{1} = i$, $i_{m} = j$ and:
  \begin{equation*}
    a_{i_{k},i_{k+1}}\neq0\quad\text{for all $k$}.
  \end{equation*}
\end{defn}

Clearly, connectedness of the initial configuration is a necessary condition for the emergence of a consensus.  Due to the local consensus assumption, two agents who initially do not have a path between them will never become connected and therefore will not converge on the same opinion.  However, connectedness of the initial condition is not sufficient for the emergence of a consensus as the dynamics do not necessarily preserve connectedness between two agents (see for instance figure \ref{fig:example_withWithout_ctrl}-left); the dynamics \eqref{eq:opinion_formation} must be modified in some manner for connectedness of the initial condition to be sufficient for consensus.

\section{No one left behind - enforcing consensus}

\subsection{Critical region}

We now modify the dynamics \eqref{eq:opinion_formation} with the aim of preserving connectivity between agents.  We will distinguish between the 1-dimensional and $d$-dimensional cases.  The intuitive idea is to introduce a control on the bounded confidence dynamics that causes an agent to alter its trajectory if it is close to disconnecting with a neighbor.  With this aim in mind we introduce the notion of a \textit{critical region} associated with each agent (see also an illustration in figure \ref{fig:critical_region}).

\begin{defn}
  Fix $0\leq r_* \leq 1$ and let $\set{{\bf x}_{1},...,{\bf x}_{N}}\subseteq \Rn$ be a configuration of agents.  The \textbf{critical region} associated with agent $i$ is given by:
  \begin{equation}
    \label{eq:B_i}
    \mathcal{B}_i = \{{\bf x}\in\mathbb{R}^{d} \;|\; 1-r_*\leq|{\bf x}-{\bf x}_{i}|\leq1 \text{ and } \langle\overline{\bf x}_{i}-{\bf x}_{i},{\bf x}-{\bf x}_{i}\rangle \leq0\}.
  \end{equation}
  where $\overline{\bf x}_i$ is given by \eqref{eq:opinion_formation_bar}.
\end{defn}

\begin{figure}[ht]
  \centering
  \includegraphics[width=.97\textwidth]{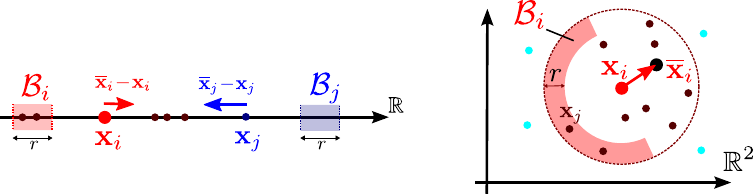}
  \caption{Illustration of the critical regions \eqref{eq:B_i} in $\mathbb{R}$ (interval {\it behind} ${\bf x}_i$) and $\mathbb{R}^2$ (semi-annulus region). The opinion ${\bf x}_i$ is attracted toward the local average $\overline{\bf x}_i$, hence moves with velocity $\overline{\bf x}_i-{\bf x}_i$. In the ``No-left behind dynamics'' \eqref{model:1D_NOLB}, ${\bf x}_i$ can only move only if there is no one in its critical region $\mathcal{B}_i$. Thus, ${\bf x}_i$ freezes whereas ${\bf x}_j$ is free to move in the left illustration.}
  \label{fig:critical_region}
\end{figure}

\subsection{No one left behind dynamics in $\mathbb{R}$}

Notice that the critical region of any agent depends on the local average that the agent will move its opinion towards, \eqref{eq:opinion_formation_bar}; the critical region of an agent is always "behind" the agent in the sense that it is always in the opposite direction of the direction of movement of the agent.  The critical region is the main tool used to enforce connectivity preservation in the bounded confidence dynamics \eqref{eq:opinion_formation}.  We first illustrate the main idea in one dimension.

\begin{model}[1D NOLB]\label{model:1D_NOLB}
  Consider a collection of agents with opinions $\set{x_{1},...,x_{N}}$ in $\R$.  The \textbf{1-D No one left behind} dynamics are given by:
  \begin{equation}
    \label{eq:1D_NOLB}
    \dot{x}_{i}= \mu_i\big(\overline{x}_{i}-{x}_{i}\big),\quad  \mu_i = \left\{
      \begin{array}{ll}
        0 & \text{if there exists } {x}_{j} \in \mathcal{B}_i \\
        1 & \text{otherwise}
      \end{array}
    \right.
  \end{equation}
  where $\overline{x}_{i}$ is the local average defined in \eqref{eq:opinion_formation_bar}.
\end{model}

The scalar $\mu_{i}$ can be interpreted as a local control on the opinion of the $i$th agent.  Under the dynamics given by Model \ref{model:1D_NOLB} an agent evolves according to the normal bounded confidence model \eqref{eq:opinion_formation} unless there is another agent in its critical region in which case it does not move.  In other words if the normal bounded confidence dynamics will cause an agent's opinion to change in such a way that it will become disconnected from one of its neighbors then it will stop moving and not leave its neighbor behind.

\begin{figure}[ht]
  \centering
  \includegraphics[scale=.85]{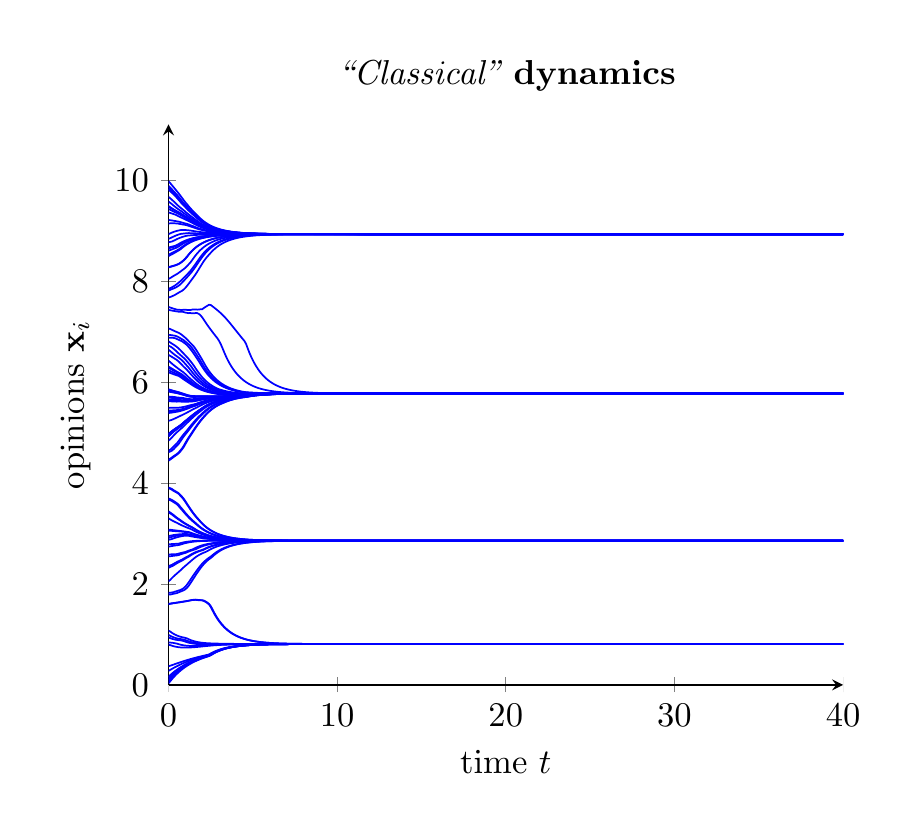} \quad
  \includegraphics[scale=.85]{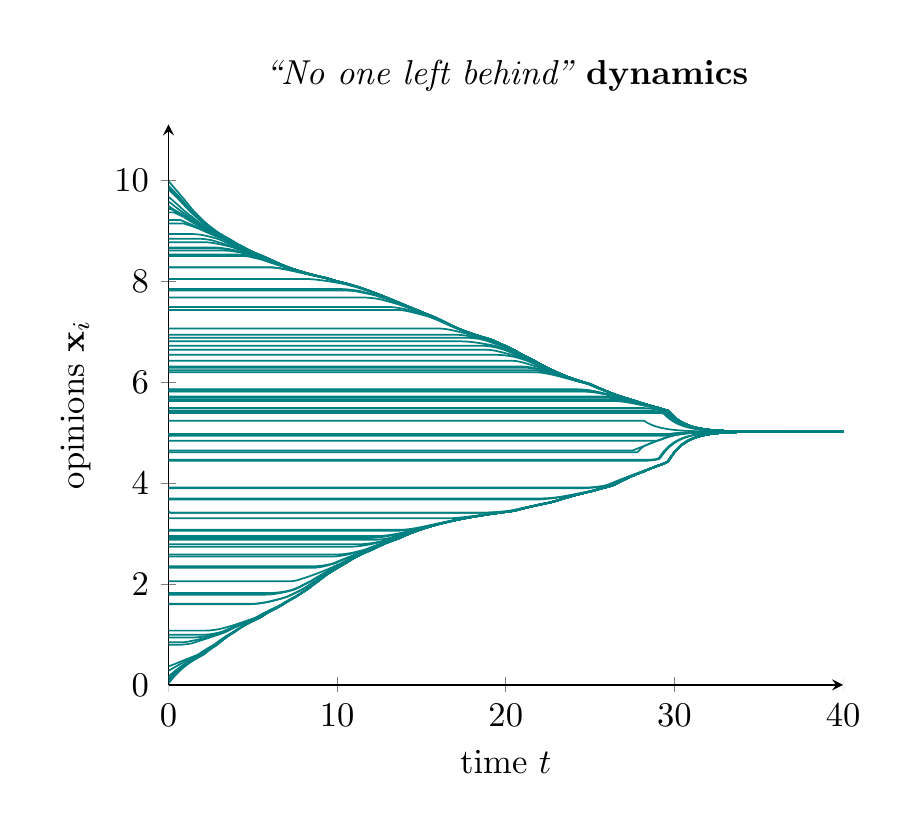} \\
  \caption{Simulation of the opinion dynamics without and with control (resp. left and right figure), e.g. solving resp. \eqref{eq:opinion_formation_bar} and Model \ref{model:1D_NOLB} with $r_{*} = \frac12$. With the control (right), the dynamics converge to a consensus.
    % We also observe that the  diameter $d(t)$ converges exponentially fast once it reaches $d=1$.
  }
  \label{fig:example_withWithout_ctrl}
\end{figure}

\begin{remark}
  \label{rem:behind_graph}
  In addition to the network defined by the interactions among agents, the critical region allows one to impose an additional network structure on the collection of agents; there exists a link from agent $i$ to agent $j$ if agent $j$ is in the critical region of agent $i$.  We will refer to this network as the \textit{behind graph} and denote it by $G^{\mathcal{B}} = (V, E^{\mathcal{B}})$ where $V$ is the set of agents $\set{1\dots N}$ and $E^{\mathcal{B}}$ is the set of edges:
  \begin{equation}
    \label{eq:behind_graph}
    (i,j) \in E^{\mathcal{B}} \text{ if } j \in \mathcal{B}_i.
  \end{equation}
    In this notation we could write \eqref{model:1D_NOLB} as:
  \begin{equation*}
    \dot{x}_{i}= \mu_i\big(\overline{x}_{i}-x_{i}\big),\quad  \mu_i = \left\{
      \begin{array}{ll}
        0 & \text{if} \; (i,j)\in E_{\mathcal{B}} \\
        1 & \text{otherwise}
      \end{array}
    \right.
  \end{equation*}

Note that while the nature of bounded confidence dynamics forces the interaction network to be undirected, the behind graph must be directed as the presence of agent $j$ in the behind region of agent $i$ does \textit{not} imply the opposite.  Note also that if one denotes $G=(V,E)$ as the interaction graph (i.e. $(i,j)\in E$ if $a_{ij}, a_{ji} > 0$), then the behind graph $E^{\mathcal{B}}$ is a directed subgraph of $E$ - see Figure \ref{fig:behind_graph}.

\begin{figure}[ht]
  \centering
  \includegraphics[scale=2.2]{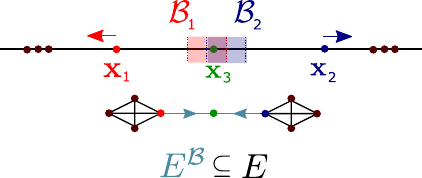}
  \caption{A configuration of agents (top) and the resulting interaction graph (edge set E, black) and behind graph (edge set $E^{\mathcal{B})}$, light blue).  Note that the behind graph is a directed subgraph of the interaction graph.}
    % We also observe that the  diameter $d(t)$ converges exponentially fast once it reaches $d=1$.
  \label{fig:behind_graph}
\end{figure}

\end{remark}

We examine the effect that the augmented dynamics have on the long time behavior of the opinions in Figure \ref{fig:example_withWithout_ctrl} and note that for the same initial condition that normal bounded confidence dynamics do not result in a consensus but four "clusters" of opinions whereas the controlled dynamics preserve the connectivity of the agents and result in a consensus.  Interestingly, we find that the dynamics introduced in Model \eqref{model:1D_NOLB} are not sufficient to ensure consensus in dimensions larger than one.

\subsection{No one left behind dynamics in $\mathbb{R}^d$}

Given that the dynamics defined in Model \eqref{model:1D_NOLB} are sufficient for convergence to a consensus for a connected initial configuration in one dimension (see Remark \ref{rem:1D_special_case} and Theorem 2), one might expect that they should be sufficient for consensus in the $d$ - dimensional case as well.  Interestingly, this is not the case as the following example illustrates.
\begin{example}
  \label{rem:c_ex_2d}
  We illustrate that the dynamics introduced in Model \eqref{model:1D_NOLB} do not ensure consensus in dimension 2 - this example can clearly be generalized to larger dimensions. Consider $6$ {\it clusters} of opinions located on a regular hexagon with equal sides of length $d = 1-\frac{r_*}{2}$ with $r_*\ll 1$. We denote the vertices of this regular hexagon as ${\bf x}_{i}$ with $i=1,\dots,6$, and the number of opinions in the cluster ${\bf x}_{i}$ as $\mathcal{N}({\bf x}_{i})$. Consider for instance, we have the following distribution of agents (see fig.~\ref{fig:counter_ex2D}):
  \begin{displaymath}
    \mathcal{N}({\bf x}_{1}) = \mathcal{N}({\bf x}_6) = 1, \quad \mathcal{N}({\bf x}_{2}) = \mathcal{N}({\bf x}_{5})= 10, \quad \mathcal{N}({\bf x}_{3})=\mathcal{N}({\bf x}_4) = 100.
  \end{displaymath}
  In this setting, all the agents have another agent in their critical region. Thus, if one uses the same dynamics as in the one dimensional setting \eqref{model:1D_NOLB}, we find $\mu_i=0$ for all $i$ and therefore the agents are stuck in this initial configuration. Thus, the "naive" control fails to achieve consensus.

  \begin{figure}[ht]
    \centering
    \includegraphics[scale=0.9]{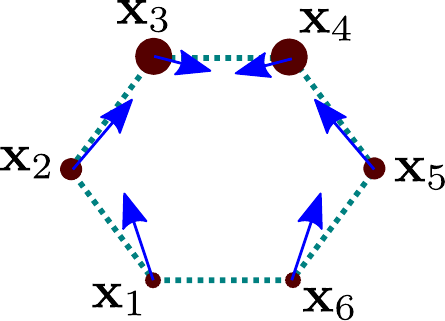}
    \caption{Counter-example in multi-dimension. Blue arrow is the velocity of each cluster. In this setting, every agent has someone in its critical region $\mathcal{B}_i$. Thus, the {\it naive} control in Model \ref{model:1D_NOLB} would prevent anyone from moving.}
    \label{fig:counter_ex2D}
  \end{figure}
\end{example}

Clearly, in the $d$-dimensional case, the condition that an agent may not move if another agent is in its critical region must be weakened in order to achieve a consensus.  However, we still want to maintain the property that an agent may not move away (and possibly disconnect) from agents in its critical region as we know that connectivity is necessary for consensus.  With this in mind we introduce the notion of \textit{admissible velocity}.

\begin{defn}
  Let $\set{{\bf x}_{1},...,{\bf x}_{N}}\subseteq \Rn$ be a configuration of agents.  The \textbf{cone of admissible velocity } associated with agent $i$ is given by:
  \begin{equation}
    \label{eq:Admissible}
    \mathcal{C}_{i} = \{ {\bf v}\in\mathbb{R}^d\,|\;\langle {\bf v}, {\bf x}_{j}-{\bf x}_{i} \rangle \geq 0\quad \text{for all }  {\bf x}_{j}\in \mathcal{B}_i \}.
  \end{equation}
  where $\mathcal{B}_{i}$ is the critical region \eqref{eq:B_i} associated to agent $i$. If the critical region $\mathcal{B}_i$ is empty, then $\mathcal{C}_{i}=\mathbb{R}^d$.
\end{defn}

\begin{remark}
  We note that we can define the cone of admissible velocities in terms of the behind graph introduced in Remark \ref{rem:behind_graph}. We just have to replace ${\bf x}_j\in \mathcal{B}_i$ by $(i,j)\in E^{\mathcal{B}}$ in the definition of $\mathcal{C}_i$.
\end{remark}

\begin{figure}[ht]
  \centering
  \includegraphics[scale=2.5]{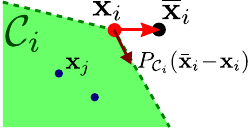}
  \caption{The velocity of agent $i$ is the projection of the {\it desired} velocity $\overline{\bf x}_i-{\bf x}_i$ onto the cone of admissible velocity $\mathcal{C}_{i}$.}
  \label{fig:krause_control_2D}
\end{figure}

We can now weaken the dynamics introduced in Model \ref{model:1D_NOLB} by merely enforcing that the velocity of an agent belong to its cone of admissible velocity via a projection operator \cite{hiriart2012fundamentals}.  Intuitively, instead of forcing an agent to stop whenever its critical region is nonempty it can "take care" of agents in its critical region by moving closer to those agents \textit{and} its local average (if possible).

\begin{model}[NOLB]\label{model:no_one_left_behind}
  Let $\set{{\bf x}_{1},...,{\bf x}_{N}}\subseteq \Rn$ be a configuration of agents.  The \textbf{no one left behind (NOLB)} dynamics are given by:
  \begin{equation}\label{eq:no_one_left_behind}
    {\bf x}_{i}' =  P_{\mathcal{C}_i}\big(\overline{\bf x}_{i} - {\bf x}_{i}\big)
  \end{equation}
  where $\overline{\bf x}_{i}$ is the average velocity defined in \eqref{eq:opinion_formation_bar} and $P_{\mathcal{C}_{i}}: \Rn \rightarrow \mathcal{C}_{i}$ is the projection operator associated to the cone of admissible velocities $\mathcal{C}_i$ \eqref{eq:Admissible}.
\end{model}

% \begin{remark}
%   \label{rem:lagrangian}
%   Since the admissible set $\mathcal{C}_{i}$ is a closed cone, the projection $P_{\mathcal{C}_i}({\bf v})$ is unique and for any ${\bf v}$ satisfies:
%   %   $\mathcal{C}_i$ is a polyhedral
%   \begin{equation}
%     \label{eq:lagrange_mult_proj}
%     {\bf v}-P_{\mathcal{C}_i}({\bf v}) + \sum_{j\in\mathcal{B}_i} \lambda_j ({\bf x}_j-{\bf x}_i)=0 \qquad \text{with } \lambda_j\geq0,
%   \end{equation}
%   where the $\{\lambda_j\}_j$ correspond to the Lagrange multipliers of the minimization problem.
% \end{remark}

\begin{remark}
  \label{rem:1D_special_case}
  We note that the 1-D NOLB dynamics introduced in \eqref{model:1D_NOLB} are a special case of the general NOLB dynamics introduced in Model \ref{model:no_one_left_behind}.  Indeed, in one dimension the cone of admissible velocity is given by
  \begin{equation*}
    \mathcal{C}_{i} = \{{\bf v}\in\mathbb{R}\,|\;(  {\bf v}, {\bf x}_{j}-{\bf x}_{i} )\geq 0\quad \forall  {\bf x}_{j}\in \mathcal{B}_i \}.
  \end{equation*}
  Now, if $\mathcal{B}_{i}$ is empty then we must have that $\mathcal{C}_{i} = \R$ and therefore the projection operator $P_{\mathcal{C}_{i}}$ must be the identity, i.e
  \begin{equation*}
    {\bf x}_{i}' = P_{\mathcal{C}_{i}}\big(\overline{\bf x}_{i} - {\bf x}_{i}\big) = \overline{\bf x}_{i} - {\bf x}_{i}.
  \end{equation*}
  On the other hand, if there exists $x_{j} \in \mathcal{B}_{i}$  (and assuming without loss of generality that $\overline{\bf x}_{i}\geq {\bf x}_{i}$) we must have that ${\bf x}_{j}\leq {\bf x}_{i}$ which implies that:
  \begin{equation*}
    \mathcal{C}_{i} = \set{{\bf v}\in \R | {\bf v}\leq 0}.
  \end{equation*}
  Since $\overline{\bf x}_{i} - {\bf x}_{i}\geq 0$ we therefore must have that
  \begin{equation*}
    {\bf x}_{i}' = P_{\mathcal{C}_{i}}\big(\overline{\bf x}_{i} - {\bf x}_{i}\big) = 0.
  \end{equation*}
\end{remark}

We illustrate the dynamics \eqref{eq:no_one_left_behind} and its long term behavior  in Figure \ref{fig:2D_timelapse}. As in the 1D case, a consensus is reached after some time whereas the classical dynamics generate multiple clusters.

\begin{figure}[ht]
  \centering
  \includegraphics[width=.97\textwidth]{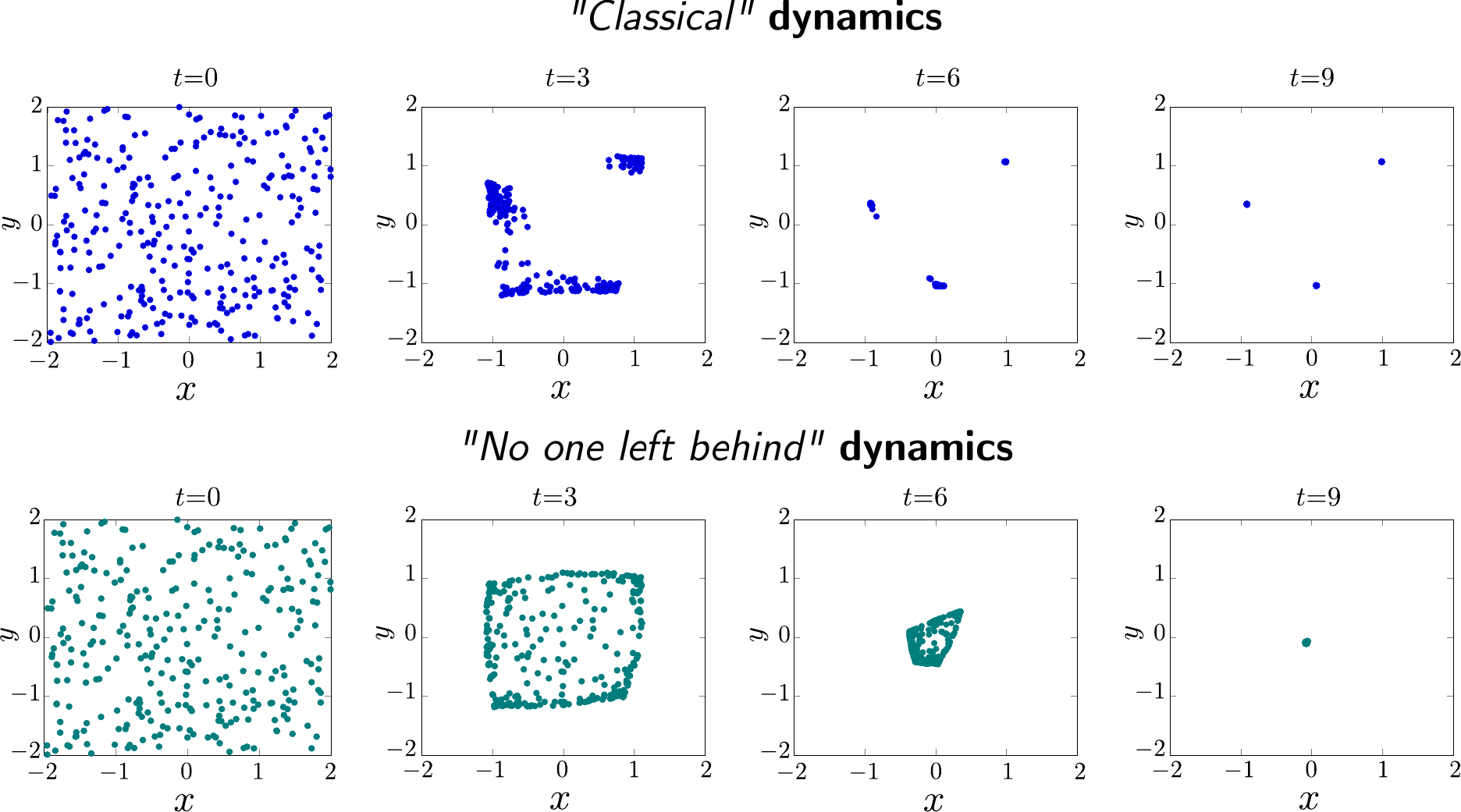}
  \caption{2D simulation of opinion dynamics without and with control (resp, top and bottom figure), e.g. solving resp. \eqref{model:1D_NOLB} and \eqref{eq:no_one_left_behind} with $r_* = \frac12$.  With the control (bottom), the dynamics converge to a consensus.}
  \label{fig:2D_timelapse}
\end{figure}

We now rigorously show that augmenting the bounded confidence dynamics in this manner is sufficient to ensure consensus in the case that the initial configuration of the agents is connected.

\section{Convergence to a consensus}

\subsection{Preservation of connectivity}

A unifying feature of the classical bounded confidence dynamics and the NOLB dynamics is that due to the local consensus assumption they both result in a configuration of agents $\set{{\bf x}_{i}(t)}$ \textit{contracting} in space.  More specifically let us denote the convex hull of a configuration $\set{{\bf x}_{i}(t)}$  by $\Omega(t)$, i.e.
\begin{equation*}
  \Omega(t) = \text{Conv}\{{\bf x}_{1}(t),...,{\bf x}_{N}(t)\}.
\end{equation*}
The agents contract in the following sense (see \cite{jabin_clustering_2014} for a proof).
\begin{prop}\label{prop:contract}
  If $\set{x_{i}}_{i}$ evolves according to the bounded confidence dynamics or the NOLB dynamics then the convex hull $\Omega$ satisfies:
  \begin{equation}
    \Omega(t_{1}) \subset \Omega(t_{0})\quad\text{for any}\quad t_{1}\geq t_{0}
  \end{equation}
\end{prop}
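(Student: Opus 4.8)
The plan is to reduce the set inclusion to a family of scalar differential inequalities via support functions. For a unit vector ${\bf n}\in\Rn$, write $m(t,{\bf n})=\max_{i}\langle{\bf x}_{i}(t),{\bf n}\rangle$ for the support function of $\Omega(t)$. Since a compact convex set is the intersection of its supporting half-spaces, the inclusion $\Omega(t_{1})\subset\Omega(t_{0})$ for $t_{1}\geq t_{0}$ is equivalent to $m(t_{1},{\bf n})\leq m(t_{0},{\bf n})$ for every direction ${\bf n}$. So it suffices to fix ${\bf n}$ and prove that $t\mapsto m(t,{\bf n})$ is non-increasing.

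First I would record that the trajectories are Lipschitz in time: the desired velocity $\overline{\bf x}_{i}-{\bf x}_{i}$ is a difference of points of $\Omega(t_{0})$, hence bounded by $\operatorname{diam}\Omega(t_{0})$, and the cone projection is non-expansive and fixes the origin so $|P_{\mathcal{C}_{i}}(\cdot)|\leq|\cdot|$; thus each ${\bf x}_{i}(\cdot)$, and hence $m(\cdot,{\bf n})$ as a finite maximum, is Lipschitz and absolutely continuous. It therefore suffices to show $\frac{d}{dt}m(t,{\bf n})\leq0$ at a.e. $t$. At any time of differentiability, letting $I(t)=\{i:\langle{\bf x}_{i}(t),{\bf n}\rangle=m(t,{\bf n})\}$ be the set of maximizers, the standard envelope estimate for a maximum of Lipschitz functions gives $\frac{d}{dt}m(t,{\bf n})\leq\max_{i\in I(t)}\langle{\bf x}_{i}'(t),{\bf n}\rangle$, so everything reduces to bounding $\langle{\bf x}_{i}',{\bf n}\rangle$ for an index $i$ that attains the maximum of $\langle\cdot,{\bf n}\rangle$.

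The crucial point, and the only place the two models differ, is this velocity bound. For the classical dynamics ${\bf x}_{i}'=\overline{\bf x}_{i}-{\bf x}_{i}$, and since $\overline{\bf x}_{i}=\sum_{j}a_{ij}{\bf x}_{j}$ is a convex combination ($a_{ij}\geq0$, $\sum_{j}a_{ij}=1$), maximality of $i$ gives $\langle\overline{\bf x}_{i},{\bf n}\rangle\leq\max_{j}\langle{\bf x}_{j},{\bf n}\rangle=\langle{\bf x}_{i},{\bf n}\rangle$, whence $\langle{\bf x}_{i}',{\bf n}\rangle\leq0$. For the NOLB dynamics the velocity is the projection ${\bf x}_{i}'=P_{\mathcal{C}_{i}}({\bf w})$ with ${\bf w}=\overline{\bf x}_{i}-{\bf x}_{i}$, and I expect this projection step to be the main obstacle, since projecting onto the cone can rotate the velocity away from the inward direction. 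The resolution is the observation that $-{\bf n}\in\mathcal{C}_{i}$: indeed any ${\bf x}_{j}\in\mathcal{B}_{i}$ satisfies $\langle-{\bf n},{\bf x}_{j}-{\bf x}_{i}\rangle=\langle{\bf x}_{i}-{\bf x}_{j},{\bf n}\rangle\geq0$ by maximality of $i$, which is exactly the defining condition of $\mathcal{C}_{i}$ in \eqref{eq:Admissible}. I would then invoke the elementary monotonicity property of projection onto a closed convex cone, namely $\langle P_{\mathcal{C}}({\bf w}),{\bf z}\rangle\geq\langle{\bf w},{\bf z}\rangle$ for every ${\bf z}\in\mathcal{C}$, which follows from ${\bf w}-P_{\mathcal{C}}({\bf w})$ lying in the polar cone. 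Taking ${\bf z}=-{\bf n}$ gives $\langle{\bf x}_{i}',-{\bf n}\rangle\geq\langle{\bf w},-{\bf n}\rangle=-\langle\overline{\bf x}_{i}-{\bf x}_{i},{\bf n}\rangle\geq0$, where the last inequality is again the convex-combination bound above; hence $\langle{\bf x}_{i}',{\bf n}\rangle\leq0$. The empty-critical-region case is automatically included, since then $\mathcal{C}_{i}=\Rn$ and the projection is the identity.

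Combining the two cases, $\frac{d}{dt}m(t,{\bf n})\leq0$ for a.e. $t$ in both models, so $m(\cdot,{\bf n})$ is non-increasing for each ${\bf n}$, and the support-function characterization yields $\Omega(t_{1})\subset\Omega(t_{0})$. The one genuinely delicate ingredient is the cone-projection inequality together with the geometric fact $-{\bf n}\in\mathcal{C}_{i}$; the remaining technical care (differentiating a maximum of Lipschitz functions, and the absolute continuity needed to pass from an a.e. derivative bound to monotonicity) is standard.
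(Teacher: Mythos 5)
Your proof is correct. The paper itself does not supply an argument for this proposition --- it defers to the reference [Jabin--Motsch 2014], which covers the classical bounded confidence case --- so the genuinely new content here is the NOLB half, and you handle it exactly as the paper's own machinery would: the observation that $-{\bf n}\in\mathcal{C}_{i}$ whenever $i$ maximizes $\langle\cdot,{\bf n}\rangle$ (since every ${\bf x}_{j}\in\mathcal{B}_{i}$ then satisfies $\langle-{\bf n},{\bf x}_{j}-{\bf x}_{i}\rangle\geq0$), combined with the cone-projection inequality $\langle P_{\mathcal{C}}({\bf w}),{\bf z}\rangle\geq\langle{\bf w},{\bf z}\rangle$ for ${\bf z}\in\mathcal{C}$, is precisely Lemma \ref{lem:project_help}, which the paper proves in the appendix and deploys in the same way inside the proof of Theorem \ref{them:consensus}. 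The support-function reduction, the Lipschitz/absolute-continuity bookkeeping, and the envelope bound on the derivative of a finite maximum are all sound, and the convex-combination bound $\langle\overline{\bf x}_{i}-{\bf x}_{i},{\bf n}\rangle\leq0$ at a maximizer correctly closes both cases. No gaps.
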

The contractive nature of the dynamics implies that for at least a subsequence of times the configuration approaches a limiting configuration thanks to Bolzano-Weierstrass theorem.
\begin{cor}\label{cor:sequence}
  There exists a limiting configuration $\set{{\bf x}_{i}^{\infty}}_{i}$ and a sequence of times $(t_{n})_{n}$ such that $t_{n}\rightarrow\infty$ and ${\bf x}_{i}(t_{n})\rightarrow {\bf x}_{i}^{\infty}$ as $n\rightarrow \infty$. %Additionally there exists a set $\Omega^{\infty}$ such that $\Omega(t_{n})\rightarrow \Omega^{\infty}$ as $n\rightarrow\infty$  in the sense of sets and $\text{Conv}(\vect{x}^{\infty}) = \Omega^{\infty}$.
\end{cor}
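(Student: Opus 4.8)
The plan is to obtain the limiting configuration as a subsequential limit via a compactness argument, exactly as the statement's reference to Bolzano--Weierstrass suggests. The starting point is Proposition \ref{prop:contract}: since $\Omega(t_1)\subset\Omega(t_0)$ whenever $t_1\geq t_0$, in particular $\Omega(t)\subseteq\Omega(0)$ for every $t\geq 0$. Because $\Omega(0)$ is the convex hull of the finitely many initial opinions $\set{{\bf x}_1(0),\dots,{\bf x}_N(0)}$, it is a bounded (indeed compact) subset of $\Rn$, and each ${\bf x}_i(t)\in\Omega(t)\subseteq\Omega(0)$. Hence every agent's opinion is confined to a single fixed compact set for all time.

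Next I would view the whole configuration as a single trajectory in the product space. Writing ${\bf X}(t)=({\bf x}_1(t),\dots,{\bf x}_N(t))\in(\Rn)^N\cong\mathbb{R}^{dN}$, the previous step shows ${\bf X}(t)\in\Omega(0)^N$, which is compact. Choosing any increasing sequence of times with $t_n\to\infty$ (for instance $t_n=n$), the points $\set{{\bf X}(t_n)}_n$ form a bounded sequence in $\mathbb{R}^{dN}$.

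The Bolzano--Weierstrass theorem then yields a subsequence $(t_{n_k})_k$ along which ${\bf X}(t_{n_k})$ converges to some limit ${\bf X}^\infty=({\bf x}_1^\infty,\dots,{\bf x}_N^\infty)\in\Omega(0)^N$; equivalently ${\bf x}_i(t_{n_k})\to{\bf x}_i^\infty$ for each $i$ as $k\to\infty$. Relabelling $(t_{n_k})_k$ as $(t_n)_n$ gives precisely the claimed sequence of times and limiting configuration $\set{{\bf x}_i^\infty}_i$.

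There is no genuine analytic obstacle here --- the entire content is the boundedness supplied by the contraction of the convex hull, after which the conclusion is immediate from compactness. The one point worth flagging is that this argument produces convergence only along a subsequence, not for the full trajectory; promoting this subsequential limit to genuine convergence of every ${\bf x}_i(t)$ as $t\to\infty$ (i.e.\ the actual consensus statement) will require the additional structural information coming from the NOLB control and is deferred to the subsequent analysis.
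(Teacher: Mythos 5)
Your argument is correct and is exactly the route the paper intends: the paper gives no written proof of this corollary, merely noting that it follows from the contractiveness of Proposition \ref{prop:contract} together with the Bolzano--Weierstrass theorem, which is precisely what you have spelled out (boundedness via $\Omega(t)\subseteq\Omega(0)$, then subsequential compactness in $\mathbb{R}^{dN}$). Your closing caveat that this yields only a subsequential limit, with full convergence deferred to the consensus theorem, also matches the paper's subsequent strategy.
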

Another consequence is that the so-called diameter of the configuration $d(t)$ is decaying.

\begin{cor}\label{cor:decay_diameter}
  Denote the diameter $d(t) = \max_{1\leq i,j\leq N} |{\bf x}_{i}(t) - {\bf x}_{j}(t)|$. The diameter is non-increasing:
  \begin{equation*}
    d(t_{2})\leq d(t_{1})\quad\text{for any $t_{2}\geq t_{1}$}.
  \end{equation*}
\end{cor}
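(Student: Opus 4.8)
The plan is to deduce this directly from the contraction of the convex hull established in Proposition \ref{prop:contract}. The crucial elementary observation is that the diameter of a finite point configuration coincides with the diameter of its convex hull. First I would show that $d(t) = \mathrm{diam}\,\Omega(t)$, where $\mathrm{diam}\,S := \sup_{{\bf x},{\bf y}\in S}|{\bf x}-{\bf y}|$. The inequality $d(t) \leq \mathrm{diam}\,\Omega(t)$ is immediate, since each agent ${\bf x}_{i}(t)$ lies in $\Omega(t)$, so every pairwise distance appearing in $d(t)$ is realized by a pair of points of $\Omega(t)$. For the reverse inequality I would use that the Euclidean norm is convex: the map $({\bf x},{\bf y}) \mapsto |{\bf x}-{\bf y}|$ attains its maximum over the compact convex set $\Omega(t)\times\Omega(t)$ at a pair of extreme points. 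Since the extreme points of $\Omega(t) = \mathrm{Conv}\{{\bf x}_{1}(t),\dots,{\bf x}_{N}(t)\}$ form a subset of the agents, this maximum is already attained among pairs of agents, which gives $\mathrm{diam}\,\Omega(t) \leq d(t)$ and hence equality.

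With the identification $d(t) = \mathrm{diam}\,\Omega(t)$ in hand, the conclusion follows from monotonicity of the diameter under set inclusion. For any $t_{2}\geq t_{1}$, Proposition \ref{prop:contract} gives the nesting $\Omega(t_{2}) \subset \Omega(t_{1})$; any pair of points of the smaller set is also a pair of points of the larger set, so $\mathrm{diam}\,\Omega(t_{2}) \leq \mathrm{diam}\,\Omega(t_{1})$. Chaining the two equalities with this inequality then yields $d(t_{2}) = \mathrm{diam}\,\Omega(t_{2}) \leq \mathrm{diam}\,\Omega(t_{1}) = d(t_{1})$, which is the desired claim.

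I do not anticipate any genuine obstacle here: the entire dynamical content is already packaged into Proposition \ref{prop:contract}, and what remains is the purely geometric fact that the diameter is determined by the convex hull and is monotone under inclusion. The only step deserving a word of care is the justification that the maximizing pair of points can be taken among the extreme points (and therefore among the original agents), which is precisely where convexity of the norm is invoked; everything else is a direct consequence of the inclusion $\Omega(t_{2})\subset\Omega(t_{1})$.
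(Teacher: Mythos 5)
Your argument is correct and matches the paper's intent: the corollary is stated there as an immediate consequence of Proposition \ref{prop:contract}, with the identification of the configuration's diameter with that of its convex hull and the monotonicity under inclusion left implicit. You have simply written out those elementary steps in full.
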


% \begin{proof} Notice that by Proposition \ref{prop:contract} that we have for any $t$ that
%   \begin{equation*}
%     \set{x_{i}(t)}_{i} \subset \Omega(0).
%   \end{equation*}
%   Therefore Bolzano-Weierstrass theorem implies the first result.  The second assertion follows immediately from the fact that $\Omega(t_{n})$ is a decreasing sequence of sets by Proposition \ref{prop:contract}.
% \end{proof}
So in both the bounded confidence model and the NOLB model, agents remain close to each other in the sense of Proposition \ref{prop:contract}.  However, we have already seen that the bounded confidence dynamics do not converge to a consensus from a connected initial condition; contractiveness alone is not sufficient for consensus.

The fundamental property of the NOLB dynamics \eqref{eq:no_one_left_behind} that distinguish them from the classical bounded confidence dynamics is that they rule out the possibility of agents who are connected later becoming disconnected as illustrated in Figures \ref{fig:example_withWithout_ctrl} and \ref{fig:2D_timelapse}.  The mechanism through which the local control accomplishes this can be seen in the following result.

\begin{prop} \label{prop:distance_decaying}
  Fix $0\leq r_* \leq 1$. Suppose that at time $t$ the opinions of agents $i$ and $j$ evolve according to \eqref{eq:no_one_left_behind} and satisfy: $1 - r_* \leq |{\bf x}_{i}(t) - {\bf x}_{j}(t)|\leq 1$ (i.e. agent $i$ and agent $j$ are within the critical distance of each other), then:
  \begin{equation}
    \label{eq:distance_decaying}
    \frac{d}{dt}|{\bf x}_{i}(t)-{\bf x}_{j}(t)| \leq 0.
  \end{equation}
\end{prop}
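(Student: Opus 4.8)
The plan is to work with the squared distance. Since $|{\bf x}_i-{\bf x}_j|\ge 1-r_*>0$ (assuming $r_*<1$; the endpoint $r_*=1$ is degenerate and is treated separately), the two points are distinct and
\[
\frac{d}{dt}|{\bf x}_i-{\bf x}_j| \;=\; \frac{1}{|{\bf x}_i-{\bf x}_j|}\,\langle {\bf x}_i-{\bf x}_j,\,\dot{\bf x}_i-\dot{\bf x}_j\rangle ,
\]
so it suffices to prove $\langle {\bf x}_i-{\bf x}_j,\dot{\bf x}_i\rangle + \langle {\bf x}_j-{\bf x}_i,\dot{\bf x}_j\rangle\le 0$. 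The engine of the argument is the variational description of the projection onto the cone $\mathcal C_i$: the realized velocity $\dot{\bf x}_i=P_{\mathcal C_i}(\overline{\bf x}_i-{\bf x}_i)$ belongs to $\mathcal C_i$, and the residual ${\bf p}_i:=(\overline{\bf x}_i-{\bf x}_i)-\dot{\bf x}_i$ lies in the polar cone, i.e. it is a nonnegative combination of the inward directions $\{{\bf x}_i-{\bf x}_k:{\bf x}_k\in\mathcal B_i\}$. In particular $\dot{\bf x}_i\in\mathcal C_i$ encodes exactly that $\langle \dot{\bf x}_i,{\bf x}_k-{\bf x}_i\rangle\ge 0$ for every agent $k$ in the critical region of $i$ --- agent $i$ is never allowed to recede from a behind-neighbor.

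Because the hypothesis $1-r_*\le|{\bf x}_i-{\bf x}_j|\le 1$ is symmetric, the distance part of the membership conditions ``$j\in\mathcal B_i$'' and ``$i\in\mathcal B_j$'' both hold, so whether each membership actually holds is decided solely by the two angle signs $\langle\overline{\bf x}_i-{\bf x}_i,{\bf x}_j-{\bf x}_i\rangle$ and $\langle\overline{\bf x}_j-{\bf x}_j,{\bf x}_i-{\bf x}_j\rangle$. I would organize the proof around these signs. In the favorable case $j\in\mathcal B_i$ and $i\in\mathcal B_j$ simultaneously, the projection fact applies to both agents: ${\bf x}_j-{\bf x}_i$ is one of the constraint directions defining $\mathcal C_i$, so $\langle\dot{\bf x}_i,{\bf x}_j-{\bf x}_i\rangle\ge 0$, i.e. $\langle{\bf x}_i-{\bf x}_j,\dot{\bf x}_i\rangle\le 0$, and symmetrically $\langle{\bf x}_j-{\bf x}_i,\dot{\bf x}_j\rangle\le 0$; summing gives the claim at once.

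The difficulty is the asymmetric situation, say $j\notin\mathcal B_i$. By definition of the critical region this forces the angle condition to fail, $\langle\overline{\bf x}_i-{\bf x}_i,{\bf x}_j-{\bf x}_i\rangle>0$, meaning that $i$'s target $\overline{\bf x}_i$ already lies on $j$'s side. One cannot bound $i$'s contribution in isolation, because projecting onto $\mathcal C_i$ can give $\dot{\bf x}_i$ a component that recedes from $j$ (the behind-neighbors of $i$ may deflect the desired velocity, as one can check on a small explicit configuration). The remedy is not to split the two contributions but to combine them through the residual representation: substituting $\dot{\bf x}_i=(\overline{\bf x}_i-{\bf x}_i)-{\bf p}_i$ and $\dot{\bf x}_j=(\overline{\bf x}_j-{\bf x}_j)-{\bf p}_j$ yields
\[
\langle {\bf x}_i-{\bf x}_j,\dot{\bf x}_i-\dot{\bf x}_j\rangle = -\langle\overline{\bf x}_i-{\bf x}_i,{\bf x}_j-{\bf x}_i\rangle-\langle\overline{\bf x}_j-{\bf x}_j,{\bf x}_i-{\bf x}_j\rangle-\langle{\bf x}_i-{\bf x}_j,\,{\bf p}_i-{\bf p}_j\rangle ,
\]
where the first two terms are governed by the angle conditions and become advantageous precisely when a membership fails.

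I expect the last step to be the main obstacle: one must show that the projection-residual term $\langle{\bf x}_i-{\bf x}_j,{\bf p}_i-{\bf p}_j\rangle$ cannot overwhelm the gain from the angle terms. I would control it using the a priori bound $|\overline{\bf x}_i-{\bf x}_i|\le 1$ --- the local average lies in the convex hull of agents within unit distance, whence also $|{\bf p}_i|\le 1$ --- together with $|{\bf x}_i-{\bf x}_j|\ge 1-r_*$ and the explicit semi-annulus geometry of $\mathcal B_i$, which constrains the directions ${\bf x}_i-{\bf x}_k$ entering ${\bf p}_i$ relative to ${\bf x}_j-{\bf x}_i$. Making this quantitative and sign-definite across the genuinely mixed sub-case, where one membership holds and the other fails, is where the specific shape of the critical region and the unit bound on the velocity are indispensable; the symmetric both-behind case is, by contrast, essentially free.
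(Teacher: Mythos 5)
Your reduction to showing $\langle {\bf x}_i-{\bf x}_j,\dot{\bf x}_i-\dot{\bf x}_j\rangle\le 0$ and your treatment of the case where $j\in\mathcal{B}_i$ and $i\in\mathcal{B}_j$ simultaneously coincide with the paper's proof: since $\dot{\bf x}_i=P_{\mathcal{C}_i}(\overline{\bf x}_i-{\bf x}_i)$ lies in $\mathcal{C}_i$ and ${\bf x}_j-{\bf x}_i$ is one of the constraint directions defining $\mathcal{C}_i$, one gets $\langle\dot{\bf x}_i,{\bf x}_j-{\bf x}_i\rangle\ge 0$, and symmetrically for $j$; summing finishes that case. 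Up to this point the proposal is correct and is essentially the published argument.

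The gap is that you never actually prove the mixed case, and the strategy you sketch for it cannot be pushed through with the bounds you cite. After introducing the residuals ${\bf p}_i,{\bf p}_j$ you say the remaining term ``must be shown'' not to overwhelm the angle terms and that you ``would control it'' using $|{\bf p}_i|\le 1$ and $|{\bf x}_i-{\bf x}_j|\ge 1-r_*$, but no inequality is derived, so the proof stops exactly where you yourself locate the difficulty. Moreover, those crude norm bounds cannot close it: when $i\notin\mathcal{B}_j$ the gain $\langle\overline{\bf x}_j-{\bf x}_j,{\bf x}_i-{\bf x}_j\rangle>0$ can be made arbitrarily small (desired velocity of $j$ nearly orthogonal to ${\bf x}_i-{\bf x}_j$), while the residual contribution is of order one as soon as some agent $k\in\mathcal{B}_j$ sits at an obtuse angle to ${\bf x}_i-{\bf x}_j$. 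Concretely, a planar configuration with ${\bf x}_j$ at the origin, ${\bf x}_i=(1,0)$ frozen because $j\in\mathcal{B}_i$ and $\overline{\bf x}_i-{\bf x}_i$ points away from $j$, and an agent $k$ at unit distance in the lower-left quadrant deflecting $\overline{\bf x}_j-{\bf x}_j=(\varepsilon,c)$ (with $0<\varepsilon\ll c$) into the half-space $\{v_1+v_2\le 0\}$, yields $\langle\dot{\bf x}_j,{\bf x}_i-{\bf x}_j\rangle=(\varepsilon-c)/2<0$ and $\langle\dot{\bf x}_i,{\bf x}_j-{\bf x}_i\rangle=0$, so the required inequality fails. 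Your instinct that this sub-case is the real obstruction is therefore well founded, but your proposal does not resolve it. Be aware that the paper's own proof is thin at the identical spot: it assumes WLOG that $j\in\mathcal{B}_i$ and then asserts the second inequality follows ``similarly,'' which tacitly presupposes $i\in\mathcal{B}_j$, and it dismisses the neither-membership case with ``otherwise we would be finished.'' In short, your proposal reproduces the paper's argument where the paper is rigorous and stalls where the paper hand-waves; as a self-contained proof it is incomplete.
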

\begin{proof}
  As agent $i$ and agent $j$ are within the critical distance of each other we can assume that either agent $i$ is in the critical region of agent $j$ or vice versa, otherwise we would be finished.  Without loss of generality assume that agent $j$ is in the critical region of agent $i$.  First notice that:
  \begin{equation*}
    \frac{d}{dt}|{\bf x}_{i} - {\bf x}_{j}|^{2} = -2\langle(P_{\mathcal{C}_{i}}(\overline{{\bf x}}_{i} - {\bf x}_{i}), {\bf x}_{j} - {\bf x}_{i})\rangle - 2\langle(P_{\mathcal{C}_{j}}(\overline{{\bf x}}_{j} - {\bf x}_{j}), {\bf x}_{i} - {\bf x}_{j})\rangle.
  \end{equation*}
  % We can now proceed in two cases. First, suppose $\mathcal{C}_{i}=\{0\}$, i.e. no matter where agent $i$ will move, it will increase its distance with one of its 'critical neighbors'. Here $P_{\mathcal{C}_{i}}(\bar{\bf x}_{i}-{\bf x}_{i}) = 0$. Otherwise,
  Denote  ${\bf v}_i=P_{\mathcal{C}_{i}}(\bar{\bf x}_{i}-{\bf x}_{i})$. Since $j$ is in the critical region of agent $i$ by assumption, it must satisfy $\langle {\bf v}_i, {\bf x}_{j}-{\bf x}_{i} \rangle \geq 0$, therefore $\langle P_{\mathcal{C}_{i}}(\bar{\bf x}_{i}-{\bf x}_{i}), {\bf x}_{j}-{\bf x}_{i} \rangle \geq 0$. We can prove similarly that $\langle P_{\mathcal{C}_{j}}(\bar{\bf x}_{j}-{\bf x}_{j}), {\bf x}_{i}-{\bf x}_{j} \rangle \geq 0$ and therefore,
  \begin{displaymath}
    \frac{d}{dt} |{\bf x}_{i}-{\bf x}_{j}|^2 \;\leq\; 0.
  \end{displaymath}

  % % 1-D PROOF:
  % First notice that:
  % \begin{equation*}
  %   \frac{d}{dt} |x_{i}-x_{j}|^{2} = 2\mu_{i}(x_{i} - x_{j})(\bar{x}_{i}-x_{i}) + 2\mu_{j}(x_{j}-\bar{x}_{j})(x_{i}-x_{j})
  % \end{equation*}
  % We will show that $2\mu_{i}(x_{i} - x_{j})(\bar{x}_{i}-x_{i})\leq 0$ by proceeding in two cases.  The argument that $2\mu_{j}(x_{j}-\bar{x}_{j})(x_{i}-x_{j})\leq 0$ is analogous.
  %
  % \begin{Case}
  %   Assume that $ (x_{i} - x_{j})(\bar{x}_{i}-x_{i}) \leq 0$.  Then, since $\mu_{i} \geq 0$ we have that $2\mu_{i}(x_{i} - x_{j})(\bar{x}_{i}-x_{i})\leq 0$.
  % \end{Case}
  %
  % \begin{Case}
  %   Assume that $(x_{i} - x_{j})(\bar{x}_{i}-x_{i}) \geq 0$.  Then, since $(x_{j} - x_{i})(\bar{x}_{i}-x_{i}) \leq 0$, $x_{j}$ is in the critical region of $x_{i}$ and therefore $\mu_{i} = 0$ and so $2\mu_{i}(x_{i} - x_{j})(\bar{x}_{i}-x_{i})\leq 0$.
  % \end{Case}
  % \setcounter{Case}{0}

\end{proof}

The critical region acts as a "trap".  If the distance between two particles is less than or equal to $1$ but starts to increase they will eventually be within the critical distance of each other and their distance cannot increase any longer.

\begin{cor}\label{cor:preserve_connectivity}
  Suppose $|{\bf x}_i(t_0)-{\bf x}_j(t_0)|\leq1$. Then
  \begin{displaymath}
    |{\bf x}_i(t)-{\bf x}_j(t)|\leq1 \quad \text{for all} \quad t\geq t_0.
  \end{displaymath}
\end{cor}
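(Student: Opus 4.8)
The plan is to treat the corollary as a trapping (barrier) statement for the scalar distance $f(t) := |{\bf x}_i(t) - {\bf x}_j(t)|$ and to propagate the closed condition $\{f\le 1\}$ forward in time. First I would record that the dynamics have bounded speed: since $0\in\mathcal{C}_i$ and the projection $P_{\mathcal{C}_i}$ is non-expansive, $|{\bf x}_i'| = |P_{\mathcal{C}_i}(\overline{\bf x}_i-{\bf x}_i)| \le |\overline{\bf x}_i-{\bf x}_i| \le d(t)$, which is bounded for $t\ge t_0$ by Corollary \ref{cor:decay_diameter} (indeed $\overline{\bf x}_i$ is a convex combination of the $\{{\bf x}_k\}$, so lies in $\Omega(t)$). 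Hence each trajectory is Lipschitz, $f$ is Lipschitz — thus continuous and differentiable almost everywhere — and Proposition \ref{prop:distance_decaying} may be read as: at (a.e.) time $t$ with $f(t)\in[1-r_*,1]$ one has $f'(t)\le 0$.

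Next I would argue by contradiction via a first-passage time. Assume $f(T)>1$ for some $T>t_0$ and set $\tau := \inf\{t\ge t_0 : f(t)>1\}$. By continuity and $f(t_0)\le 1$ we have $t_0\le\tau<T$, $f(\tau)=1$, and $f\le 1$ on $[t_0,\tau]$. Because $f(\tau)=1>1-r_*$, continuity supplies a right-neighborhood $[\tau,\tau+\delta]$ on which $f\ge 1-r_*$. On that neighborhood, as long as $f\le 1$ we have $f\in[1-r_*,1]$, so Proposition \ref{prop:distance_decaying} forces $f$ to be non-increasing there; in particular $f$ cannot climb above its value $f(\tau)=1$. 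This contradicts the existence of times $t_n\downarrow\tau$ with $f(t_n)>1$ guaranteed by the definition of $\tau$, and would establish $f(t)\le 1$ for all $t\ge t_0$. Since the argument uses nothing special about the pair, the bound then holds for every pair $(i,j)$, which is the content of the corollary.

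The delicate point — and the step I expect to require the most care — is the claim that $f$ ``cannot climb above $1$'' at the very instant it equals $1$. Proposition \ref{prop:distance_decaying} only controls $f$ inside the closed band $[1-r_*,1]$: the moment $f$ exceeds $1$ the agents leave each other's critical regions, the cone constraints are released, and the proposition says nothing. A purely pointwise derivative bound therefore leaves open a degenerate ``tangential'' crossing in which $f$ touches $1$ with vanishing radial velocity and then increases. I would close this gap using the positive width $r_*>0$ of the band together with the closedness of $\{f\le 1\}$: on $[\tau,\tau+\delta]$ let $\rho$ be the supremum of times up to which $f\le 1$; if $\rho<\tau+\delta$ then $f(\rho)=1$ and the monotonicity just established pins $f\equiv 1$ on $[\tau,\rho]$, so the configuration re-enters the band rather than escaping it — this is precisely the ``trap'' mechanism described after the proposition. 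Making this exclusion fully rigorous is the crux of the proof; it is most cleanly handled by strengthening Proposition \ref{prop:distance_decaying} to a quantitative decay estimate when the cone constraint is active, which rules out the tangential escape outright. (For $r_*=0$ the band degenerates to the sphere $|{\bf x}-{\bf x}_i|=1$, and the statement should be read as the corresponding limiting case.)
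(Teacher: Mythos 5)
Your proposal follows the same overall strategy as the paper: argue by contradiction at a first crossing of the threshold $1$ and invoke Proposition \ref{prop:distance_decaying} on the band $[1-r_*,1]$. The one substantive difference is which side of the crossing you work on, and it matters. You set $\tau=\inf\{t\ge t_0: f(t)>1\}$ (with $f(t)=|{\bf x}_i(t)-{\bf x}_j(t)|$) and try to propagate $f\le 1$ forward on a right-neighborhood of $\tau$; this is exactly where the ``tangential escape'' scenario you describe lives, since Proposition \ref{prop:distance_decaying} is silent the instant $f$ exceeds $1$, and a $C^1$ function such as $f(t)=1+\max(0,t-\tau)^2$ satisfies $f'\le 0$ wherever $f\le 1$ yet still escapes. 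You correctly identify this as the crux, but your proposed repair (upgrading the proposition to a strict, quantitative decay) is not carried out, so as written the proof is incomplete at precisely the step you flag. The paper instead takes $t_1$ to be the \emph{first time $f$ equals $1$}, so that $f(t)<1$ strictly for $t<t_1$, and argues on a left-neighborhood $(t_1-\delta,t_1)$ where $f\in[1-r_*,1]$: the non-increase given by the proposition then forces $f(t_1)\le f(t_1-\delta/2)<1$, a strict contradiction with $f(t_1)=1$. In other words, the contradiction is extracted \emph{before} the trajectory ever reaches the boundary, so the degenerate grazing behavior on the far side never has to be excluded and no strengthening of Proposition \ref{prop:distance_decaying} is needed.

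Two smaller remarks. First, your opening paragraph (bounded speed, Lipschitz regularity of $f$, differentiability a.e.) supplies justification the paper omits for passing from a pointwise sign condition on $f'$ to actual monotonicity; that is a genuine improvement in rigor. Second, both your argument and the paper's quietly assume the crossing is approached from strictly below: the case $|{\bf x}_i(t_0)-{\bf x}_j(t_0)|=1$ exactly (or $f$ sitting at $1$ on an interval before escaping) is not covered by the paper's construction of $t_1$ either, and is in fact the configuration in which your tangential-escape worry is sharpest. If you adopt the paper's left-sided first-hitting-time argument, you should still note how that boundary case is dispatched rather than leaving it to a strengthened proposition.
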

\begin{proof}
  Suppose for the sake of contradiction that there exists $T>t_{0}$ such that $|{\bf x}_{i}(T) - {\bf x}_{j}(T)|>1$.  Since the dynamics are continuous there must exist an exit time $t_{1}$ satisfying $|{\bf x}_{i}(t_{1}) - {\bf x}_{j}(t_{1})| = 1$ and $|{\bf x}_{i}(t) - {\bf x}_{j}(t)| < 1$ for $t<t_{1}$.  Thus there must exist some $\delta > 0$ such that $1 - r \leq |{\bf x}_{i}(t) - {\bf x}_{j}(t)|\leq 1$ for $t_{1} - \delta < t < t_{1}$. Therefore, by Proposition \ref{prop:distance_decaying} $|{\bf x}_{i}(t) - {\bf x}_{j}(t)|$ is decaying on this time interval, a contradiction as $|{\bf x}_{i}(t_{1}) - {\bf x}_{j}(t_{1})| = 1$.
\end{proof}

So, adding the control indeed prevents situations like the ones presented in Figures \ref{fig:example_withWithout_ctrl} and \ref{fig:2D_timelapse} and preserves connectivity.

\begin{cor}\label{cor:preserve_connectivity_final}
  The dynamics \eqref{eq:no_one_left_behind} preserve connectivity, i.e. if the configuration $\{{\bf x}_i(t_0)\}_i$ is connected, then $\{{\bf x}_i(t)\}_i$ will be connected for any $t\geq t_0$.
\end{cor}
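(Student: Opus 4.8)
The plan is to reduce the statement entirely to Corollary \ref{cor:preserve_connectivity} by recalling that, under our standing assumptions on $\Phi$, an edge of the interaction network is present precisely when the two agents lie within unit distance of each other. First I would note that since $\Phi$ is supported on $[0,1]$ and strictly positive there (with $\Phi(r)=0$ for $r>1$), the coefficient $a_{ij}=\Phi_{ij}/\sum_k\Phi_{ik}$ is nonzero if and only if $|{\bf x}_i-{\bf x}_j|\leq 1$. Hence the presence of the edge $(i,j)$, and thus the existence of a path between two agents, is equivalent to a purely geometric condition on the consecutive distances along that path.

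With this observation in hand, the argument becomes combinatorial. Fix any $t\geq t_0$ and any two agents $i$ and $j$. Since $\{{\bf x}_i(t_0)\}_i$ is connected there is a path $i=i_1,i_2,\dots,i_m=j$ with $a_{i_k,i_{k+1}}\neq 0$ for every $k$ at time $t_0$, which by the equivalence above means $|{\bf x}_{i_k}(t_0)-{\bf x}_{i_{k+1}}(t_0)|\leq 1$ for each consecutive pair. I would then apply Corollary \ref{cor:preserve_connectivity} separately to each edge of this path: each consecutive distance, being at most $1$ at time $t_0$, stays at most $1$ for all later times. Consequently every coefficient $a_{i_k,i_{k+1}}(t)$ remains nonzero, so the very same index sequence $i_1,\dots,i_m$ is still a valid path at time $t$. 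As $i$ and $j$ were arbitrary, the configuration is connected at time $t$, and since $t\geq t_0$ was arbitrary, connectivity is preserved for all future time.

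I do not anticipate a genuine obstacle, since the substance of the result is already contained in Corollary \ref{cor:preserve_connectivity}; the remaining work is only the translation between the combinatorial notion of connectivity and the geometric fact that edges persist. The one point deserving a little care is the boundary distance $|{\bf x}_i-{\bf x}_j|=1$: because $\Phi$ is positive on the \emph{closed} interval $[0,1]$, an edge at distance exactly $1$ is still present, so the threshold $\leq 1$ appearing in Corollary \ref{cor:preserve_connectivity} matches exactly the threshold defining an edge. This ensures that no edge is spuriously lost at the endpoint and that the path-persistence argument goes through without modification.
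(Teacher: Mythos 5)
Your proposal is correct and is exactly the argument the paper intends: the corollary is stated without proof as an immediate consequence of Corollary \ref{cor:preserve_connectivity}, obtained by applying that pairwise persistence result to each consecutive edge of a path, using that $a_{ij}\neq 0$ if and only if $|{\bf x}_i-{\bf x}_j|\leq 1$ under the standing assumptions on $\Phi$. Your extra remark about the boundary case $|{\bf x}_i-{\bf x}_j|=1$ is a sensible check and consistent with the paper's convention that $\Phi$ is positive on the closed interval $[0,1]$.
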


We will now examine how the preservation of connectivity enforced by \eqref{eq:no_one_left_behind} leads to a consensus.

\subsection{Emergence of a consensus}
We will now see that under the NOLB dynamics defined in \eqref{eq:no_one_left_behind}, connectedness of the initial condition is sufficient for convergence to a consensus.  In fact, this demonstrates that connectedness of the initial condition is equivalent to emergence of a consensus as we have previously noted that it is at least necessary.  We will examine the convergence in two cases.  In the general multidimensional case we are prevented from employing traditional ODE methods due to the discontinuous nature of the flow of the NOLB dynamics.  We find that the interplay between the contractive nature of the dynamics and the preservation of connectivity allows us to circumvent this difficulty and deduce that a connected initial configuration is sufficient for convergence to a consensus.  However we aren't able to say anything in general about the rate at which the dynamics converge to a consensus.  In the one dimensional case this information is available and we derive explicit rates of convergence.

\subsubsection{Multi Dimensional Case}
We first examine the general multi dimensional case.  Before proceeding to the main results we note that in the special case that $r_* = 1$ that connectivity of the initial condition is \textit{not} sufficient for convergence to a consensus; the large value of $r_*$ prevents some agents who are connected from exerting influence over each other.
\begin{example}\label{ex:counter_example}
  For simplicity we examine the one dimensional case.  Suppose that $r_* = 1$ and consider the initial configuration given by:
  \begin{equation*}
    {\bf x}_{1}(0) = 1, \; {\bf x}_{2}(0) = 2, \; {\bf x}_{3}(0) = 3, \; {\bf x}_{4}(0) = 4.
  \end{equation*}
  Here, ${\bf x}_{1}$ will move towards ${\bf x}_{2}$ and ${\bf x}_{4}$ will move towards ${\bf x}_{3}$. However, since $r_* = 1$, ${\bf x}_{2}$ and ${\bf x}_{3}$ may never move towards each other despite their connection as ${\bf x}_{1}$ and ${\bf x}_{4}$ will always be in their respective critical regions - see Figure \ref{fig:counter_example}.  Additionally we note that this example provides an illustration that the NOLB dynamics are \textit{not} continuous as the dynamics converge to a state that is not an equilibrium of the dynamics.
  % Indeed, if we rewrite the dynamics:
  % \begin{equation*}
  %   \vect{x}' =\vect{f}(\vect{x}) \quad \text{where}\quad \vect{f}(\vect{x}) = \begin{pmatrix} \overline{x}_{1} - x_{1} \\ \vdots \\ \overline{x}_{n} - x_{n} \\ \end{pmatrix}
  % \end{equation*}
  % Then, as the example above illustrates, for any $\delta\leq1$ we have that:
  % \begin{equation*}
  %   \vect{f}(\begin{pmatrix} 3+ \delta \\ 3 \\ 2 \\ 2-\delta \\ \end{pmatrix}) = \begin{pmatrix}  -\delta \\ 0 \\ 0 \\ \delta \\ \end{pmatrix} \rightarrow \vect{0}\;\text{as}\;\delta\rightarrow 0.
  % \end{equation*}
  % However we also have that:
  % \begin{equation*}
  %   \vect{f}(\begin{pmatrix} 3\\ 3 \\ 2 \\ 2\\ \end{pmatrix}) = \begin{pmatrix}  -2/3 \\ -2/3 \\ 2/3 \\ 2/3 \\ \end{pmatrix} \neq \vect{0},
  % \end{equation*}
  % so $\vect{f}$ is not continuous.

  \begin{figure}[ht]
    \centering
    \includegraphics{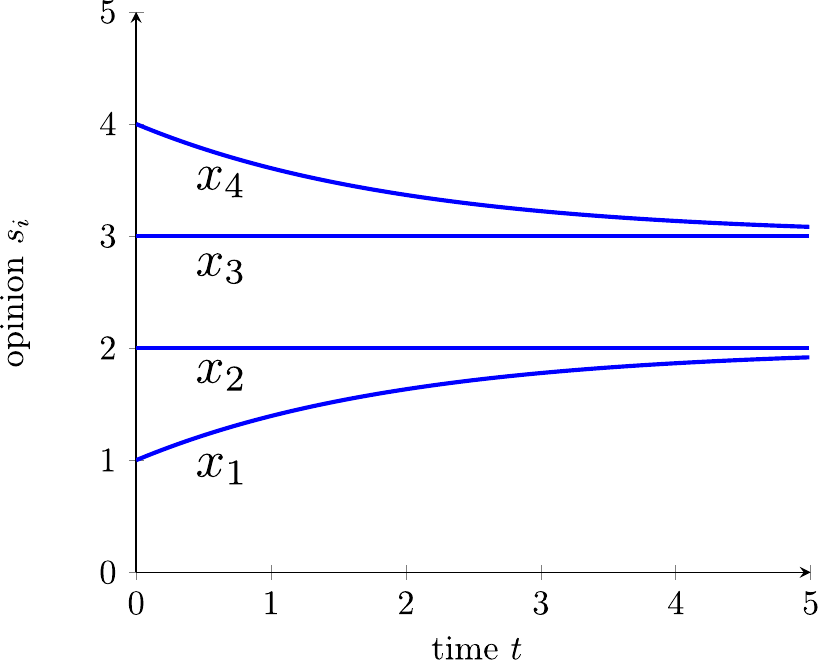}
    \caption{Preserving connectivity does not imply the convergence to a consensus. Here, when $r_*=1$, the extreme points $x_1$ and $x_4$ will converge towards $x_2$ and $x_3$ respectively. However, $x_2$ and $x_3$ cannot move since $x_1$ and $x_4$ are always in their respective critical regions. }
    \label{fig:counter_example}
  \end{figure}
\end{example}

We will now show that the situation described in Example \ref{ex:counter_example} is indeed a special case.  As long as $0 < r_* < 1$ we will see that the control is sufficient to guarantee consensus given that the initial configuration is connected.  Our main obstacle in this argument is discontinuities in the flow caused by the definition of the critical region  and the discontinuity in the interaction function, $\Phi$.  Since this rules out many of the standard tools in ODE theory, our argument will rely on the interplay between the contractive nature of the dynamics and the fact that they preserve connectivity of the configuration of agents.  Our strategy to show that the NOLB dynamics result in a consensus will be to show that the limiting configuration provided by Corollary \ref{cor:sequence}, $\vect{x}^{\infty}$, is a consensus.  In general this isn't sufficient to conclude that $\vect{x}(t)$ even converges, much less to a consensus.  However, the contractive nature of the dynamics allows us to say more.  % Notice that as a consequence of Proposition \ref{prop:contract} that the \textit{diameter}, $d(t) = \max_{ij} |x_{i}(t) - x_{j}(t)|$, of the configuration decays:
% \begin{equation*}
%   d(t_{2})\leq d(t_{1})\quad\text{for any $t_{2}\geq t_{1}$}.
% \end{equation*}
% Therefore if $\vect{x}^{\infty}$ is a consensus then we may conclude that $\vect{x}(t)$ converges to one as well as:
% \begin{align*}
    %     \lim_{t\rightarrow\infty}d(t) = d^{\infty} = 0.
    %   \end{align*}

We will proceed by contradiction and show that if $\textbf{x}^{\infty}$ is \textit{not} a consensus then it is possible to find a term in the sequence $\set{\vect{x}(t_{n})}$ provided by Corollary \ref{cor:sequence} that when taken as initial condition of the dynamics results in points of $\Omega^{\infty}$ being outside of the convex hull of the configuration at a finite time - a contradiction of Proposition \ref{prop:contract}.  We now prove the main result.

    %     \begin{lem}\label{lem:trap}
    %     Assume $\textup{Conv}(\set{x_{i}}_{i}))\subseteq\textup{Conv}(\set{y_{i}}_{i})$ and $\vect{y}$ evolves according to \eqref{eq:no_one_left_behind}.  Then, if $y_{i}$ is an extreme point of $\textup{Conv}(\set{y_{i}}_{i})$ we have that:
    %     \begin{align*}
    %     \frac{d}{dt}[\|x_{i} - y_{i}\|] \leq 0
    %   \end{align*}
    %     \end{lem}
    %     \begin{proof}
    %     See Appendix
    %     \end{proof}

\begin{theorem}\label{them:consensus}
  Assume $r_* < 1$ and that $\set{{\bf x}_{i}(0)}_{i}$ is connected. Then if $\set{{\bf x}_{i}(t)}_{i}$ evolves according to the NOLB dynamics \eqref{eq:no_one_left_behind} it will converge to a consensus. %the  we must have that the diameter $d(t)$ satisfies $\lim_{t\rightarrow\infty}d(t) = 0$.
\end{theorem}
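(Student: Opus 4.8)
The plan is to show that the subsequential limit $\set{{\bf x}_i^\infty}$ supplied by Corollary~\ref{cor:sequence} is itself a consensus, and then to upgrade subsequential convergence to genuine convergence. First I would record three structural facts. Since the hulls $\Omega(t)$ are nested (Proposition~\ref{prop:contract}) and ${\bf x}_i(t_n)\to{\bf x}_i^\infty$, every limit point lies in every earlier hull, so $\Omega^\infty:=\text{Conv}\set{{\bf x}_1^\infty,\dots,{\bf x}_N^\infty}\subseteq\Omega(t)$ for all $t$. Because connectivity is preserved (Corollary~\ref{cor:preserve_connectivity_final}) and distances that are $\leq 1$ stay $\leq 1$, the limiting configuration is again connected. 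Finally, for any unit vector $e$ the support function $h_e(t):=\max_i\langle e,{\bf x}_i(t)\rangle$ is non-increasing, again by nesting of the hulls.

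The analytic engine is the following observation about the projection, which says the control never lets a ``leading'' agent move outward. Suppose $i$ maximizes $\langle e,\cdot\rangle$ over the current configuration, so that $\langle e,{\bf x}_j-{\bf x}_i\rangle\leq 0$ for every $j$. Then in particular $\langle -e,{\bf x}_j-{\bf x}_i\rangle\geq 0$ for every ${\bf x}_j\in\mathcal{B}_i$, i.e. $-e\in\mathcal{C}_i$. Writing ${\bf v}_i=\overline{\bf x}_i-{\bf x}_i$ and ${\bf w}_i=P_{\mathcal{C}_i}({\bf v}_i)$ and using the variational characterization of the projection onto a convex cone, $\langle {\bf v}_i-{\bf w}_i,u\rangle\leq 0$ for all $u\in\mathcal{C}_i$; taking $u=-e$ gives $\langle e,{\bf w}_i\rangle\leq\langle e,{\bf v}_i\rangle\leq 0$. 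Thus a leading agent descends at least as fast, in the $e$-direction, as the uncontrolled bounded confidence dynamics would dictate, and it descends \emph{strictly} whenever it has a neighbor strictly below it.

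Now assume for contradiction that ${\bf x}^\infty$ is not a consensus. I would choose an extreme point ${\bf x}_{i_0}^\infty$ of $\Omega^\infty$ together with a strictly supporting direction $e$ and level $c=\langle e,{\bf x}_{i_0}^\infty\rangle$, so that $\langle e,{\bf x}_j^\infty\rangle<c$ for every agent away from the vertex. Since ${\bf x}_{i_0}^\infty\in\Omega^\infty\subseteq\Omega(t)$ we have $h_e(t)\geq c$ for all $t$. I would then restart the dynamics from ${\bf x}(t_n)$ for $n$ large, so that the cluster $I$ of agents sitting at the vertex is grouped within a tiny radius $\epsilon$ near level $c$, while every other agent sits below $c-2\delta$ and, being bounded by $h_e$, can never climb back to $c$. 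The goal is to show that in a finite time $T$ the entire configuration drops strictly below level $c$: this forces ${\bf x}_{i_0}^\infty\notin\Omega(t_n+T)$, contradicting $h_e(t_n+T)\geq c$, and the contradiction proves ${\bf x}^\infty$ is a consensus.

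The heart of the matter, and the step I expect to be the main obstacle, is producing that finite time $T$. The difficulty is exactly the phenomenon behind Example~\ref{ex:counter_example} together with the discontinuity of the flow: an interior vertex agent whose neighbors all sit momentarily at level $c$ has ${\bf v}_i=0$ and is frozen, so $h_e$ can be instantaneously flat and one cannot simply differentiate the maximum. To get around this I would track the aggregate $\Psi(t)=\sum_{i\in I}\langle e,{\bf x}_i(t)\rangle$ rather than the maximum. By the projection estimate above each summand has non-positive derivative, while connectivity, preserved for all time with every active edge weight bounded below by $m/(NM)$, guarantees a persistent edge from $I$ to a strictly lower agent; tightness of the cluster makes the within-$I$ contributions $O(\epsilon)$, so $\dot\Psi\leq -\eta<0$ for a uniform $\eta$ as long as any vertex agent remains near level $c$. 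Since $\Psi$ is bounded below, this regime cannot persist, and a short induction over the finitely many ``layers'' of the cluster (each layer descending once the layer beneath it has, by the same rate estimate) clears level $c$ in finite time. Finally, once ${\bf x}^\infty$ is known to be a consensus $\set{{\bf x}_*}$, the diameter satisfies $d(t_n)\to 0$; since $d$ is non-increasing (Corollary~\ref{cor:decay_diameter}) we obtain $d(t)\to 0$, and as the nested hulls shrink to the single point ${\bf x}_*$ every ${\bf x}_i(t)\to{\bf x}_*$, giving convergence to a consensus.
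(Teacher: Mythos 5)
Your overall architecture---pass to the subsequential limit from Corollary \ref{cor:sequence}, assume it is not a consensus, take a strictly supporting hyperplane at an extreme point of $\Omega^\infty$, use the variational characterization of the cone projection to compare $\langle e,P_{\mathcal{C}_i}({\bf v}_i)\rangle$ with $\langle e,{\bf v}_i\rangle$, and finally upgrade subsequential to genuine convergence via the monotone diameter---is exactly the paper's strategy, and your projection observation is the paper's Lemma \ref{lem:project_help} with the sign of the test vector flipped. The easy regime, where the vertex agent has a neighbor at limiting distance strictly between $0$ and $1$, is handled correctly and corresponds to the paper's Cases 1--2.

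The gap is in the hard case, and it is genuine. First, the claim that ``each summand has non-positive derivative by the projection estimate above'' is unjustified: you established $-e\in\mathcal{C}_i$ only for the agent maximizing $\langle e,\cdot\rangle$ over the \emph{whole} configuration. For a non-leading agent $i$ in the vertex cluster $I$, $-e\in\mathcal{C}_i$ fails exactly when some agent slightly above $i$ in the $e$-direction lies in $\mathcal{B}_i$; such an agent sits at distance at least $1-r_*$ from ${\bf x}_i$, so this is excluded only when $1-r_*>0$ and the cluster radius $\epsilon$ is chosen smaller than $1-r_*$. Nowhere in your sketch does the hypothesis $r_*<1$ appear, yet the statement is false for $r_*=1$: in Example \ref{ex:counter_example} your rate estimate $\dot\Psi\leq-\eta$ fails precisely because $x_3$ has its cluster-mate $x_4$ inside $\mathcal{B}_3$ and is frozen, even though it has a strictly lower neighbor $x_2$ at distance $1$ with weight bounded below by $m/(NM)$. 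Any complete proof must locate where $r_*<1$ enters; the paper uses it to show that cluster agents eventually exit each other's critical regions, so that the limiting cone is generated only by the distance-exactly-one neighbors, and then removes that last obstruction by passing to the limit in $\langle P_{\mathcal{C}_p}(\overline{\bf x}_p-{\bf x}_p),{\bf x}_j-{\bf x}_p\rangle=0$ to force $P_{\mathcal{C}_p^\infty}({\bf v})=0$ and contradicting this with the supporting direction. Second, the ``induction over the finitely many layers'' is not yet an argument: the layers are undefined, the structural hypotheses behind $\dot\Psi\leq-\eta$ must be shown to persist over a time interval of definite length (note that $h_e(t)\geq c$ for all $t$, so ``being bounded by $h_e$'' does \emph{not} prevent an outside agent from climbing back to level $c$), and the assertion that each layer descends once the one beneath it has is exactly what fails in Example \ref{ex:counter_example}. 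So the proposal reproduces the paper's skeleton but does not close the case corresponding to the paper's Case 3.
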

\begin{proof}
  If $\set{{\bf x}_{i}}_{i}$ evolves according to the NOLB dynamics then by Corollary \ref{cor:sequence} (contractiveness) we know there exists a limiting configuration ${\bf X}^{\infty} = \set{{\bf x}_{i}^{\infty}}_{i}$ and a sequence of times $(t_{n})_{n}$ such that $t_{n}\rightarrow\infty$ and ${\bf x}_{i}(t_{n})\rightarrow {\bf x}_{i}^{\infty}$ as $n\rightarrow \infty$.  Note that since by assumption ${\bf X}(0)$ is connected we must have by Corollary \ref{cor:preserve_connectivity} (preservation of connectivity) that ${\bf X}^{\infty}$ is connected as well.  We denote by $p$ and $q$ the extreme points such that:
  \begin{align*}
    \|{\bf x}_{p}^{\infty} - {\bf x}_{q}^{\infty}\| = \max_{ij}\|{\bf x}_{i}^{\infty} - {\bf x}_{j}^{\infty}\|.
  \end{align*}
  Denote $\mathcal{N}_p^\infty$ as the set of neighbors of $p$. The main difficulty in the proof is to handle neighbors of ${\bf x}_{p}^{\infty}$ at a distance exactly $1$. We call them {\it extreme} neighbors and denote them by $\mathcal{E}_p^\infty$:
  \begin{eqnarray}
    \label{eq:neighbors}
    \mathcal{N}_p^\infty &=& \{j \;\;|\;\; \|{\bf x}_{j}^{\infty} - {\bf x}_{p}^{\infty}\| \leq 1\} \\
    \label{eq:extreme}
    \mathcal{E}_p^\infty &=& \{j \;\;|\;\; \|{\bf x}_{j}^{\infty} - {\bf x}_{p}^{\infty}\| = 1\}.
  \end{eqnarray}
  We are going to investigate 3 cases detailed in figure \ref{fig:proof_case_summary}.

  \begin{figure}[ht]
    \centering
    \includegraphics[width=.97\textwidth]{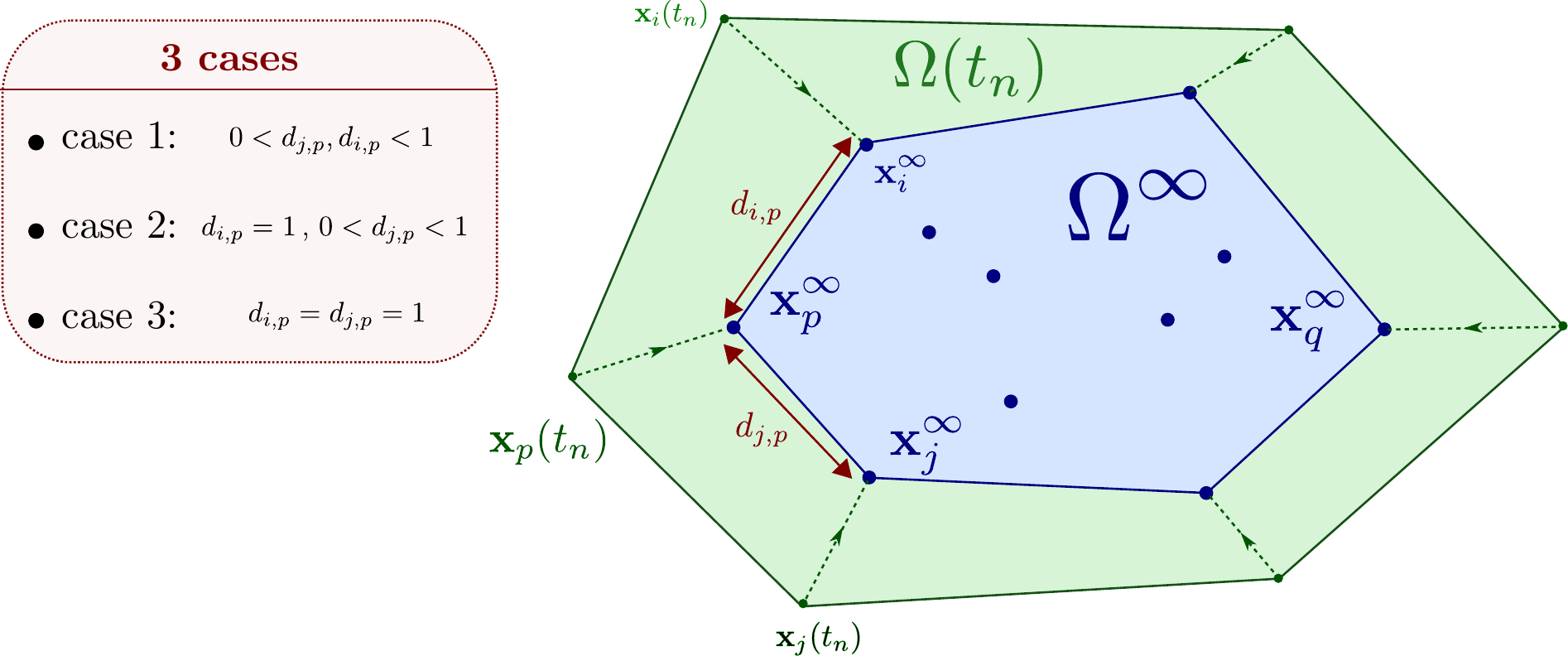}
    \caption{The convex hull $\Omega(t_n)$ has to converge to a limit configuration $\Omega^\infty$. The dynamics converge to a consensus if $\Omega^\infty$ is reduced to a single point which we prove by contradiction. We distinguish three cases of limit configuration $\Omega^\infty$  depending if the extreme point ${\bf x}_p^\infty$ has a so-called {\it extreme} neighbor $j$, i.e. $\|{\bf x}_p^\infty-{\bf x}_j^\infty\|=1$.}
    \label{fig:proof_case_summary}
  \end{figure}

  % We consider $p$ such that ${\bf x}_{p}(t_n) \stackrel{n \to +\infty}{\longrightarrow} {\bf x}_{p}^{\infty}$ and we denote its neighbors $\mathcal{N}_p$. The number of neighbors of $p$ evolves over time but due to preservation of connectivity (corollary \ref{cor:preserve_connectivity}), the set is actually strictly increasing. In all the following, we suppose that we look at time $t_n$ where the set $\mathcal{N}_p$ does not evolve anymore.

  \medskip

  $\bullet$ {\bf Case 1: no extreme neighbors $\mathcal{E}_p^\infty=\emptyset$.}

  \smallskip

  If all the neighbors of ${\bf x}_p^\infty$ are at distance $0$, then the dynamics converge to a consensus. Thus, we need to look at the case where there exists $j\in\mathcal{N}_p^\infty$ neighbor of $p$ at distance $0<\|{\bf x}_j^\infty-{\bf x}_p^\infty\|<1$ (the distance $1$ is excluded in Case 1).

  Denote $\Omega(t)$ and $\Omega^{\infty}$ the convex hull of $\set{{\bf x}_{i}(t)}_{i}$ and $\set{{\bf x}_{i}^{\infty}}$ respectively. Since the dynamics is contracting (Proposition \ref{prop:contract}), we must have $\Omega^{\infty} \subseteq \Omega(t)$ for any $t$. Take now a supporting hyperplane that is tangent to $\Omega^{\infty}$ at the extreme point ${\bf x}_{p}^{\infty}$ (see figure \ref{fig:proof_case1}). More specifically, we parametrize this supporting hyperplane using $\upvarphi_{p}:\Rn\rightarrow\R$ affine function
  \begin{align*}
    \upvarphi_{p}({\bf x}) = \inner{{\bf u}_{p}}{{\bf x}} + b
  \end{align*}
   where the vector ${\bf u}_p$ and constant $b$ are such that $\upvarphi_{p}({\bf x}^{\infty}_{p}) = 0$ and $\upvarphi_{p}({\bf x}) > 0$ for all ${\bf x}$ in $\Omega^{\infty}$ where ${\bf x}\neq {\bf x}^{\infty}_{p}$.  % In particular we have that:
  % \begin{align} \label{eq:special_normal}
      %       \inner{{\bf x}-{\bf x}^{\infty}_{p}}{{\bf u}_{p}} > 0
      %     \end{align}
      %       for all ${\bf x}$ in $\Omega^{\infty}$ where $x\neq x^{\infty}_{p}$.

  If there exists $j$ such that $0<\|{\bf x}_{p}^{\infty} - {\bf x}_{j}^{\infty}\|<1$, then the coefficient $a_{pj}^\infty$ satisfies:
  \begin{displaymath}
    a_{pj}^\infty=\frac{\Phi_{pj}^\infty}{\sum_{k=1}^N \Phi_{pk}^\infty} >0. %\;\;\;\text{ with }\; \Phi_{ij}^\infty=\Phi(\left | {\bf x}_{j}^\infty-{\bf x}_{i}^\infty \right |).
  \end{displaymath}
  Therefore, the local average $\overline{\bf x}_{p}^{\infty} = \sum_{k=1}^Na_{pk}^{\infty}{\bf x}_{j}^{\infty}$ is different from ${\bf x}_{p}^{\infty}$. Moreover since $\overline{\bf x}_{p}^{\infty} \in\Omega^\infty$, we deduce $\upvarphi_{p}(\overline{\bf x}_{p}^{\infty}) > 0$.

  \begin{figure}[ht]
    \centering
    \includegraphics[width=.5\textwidth]{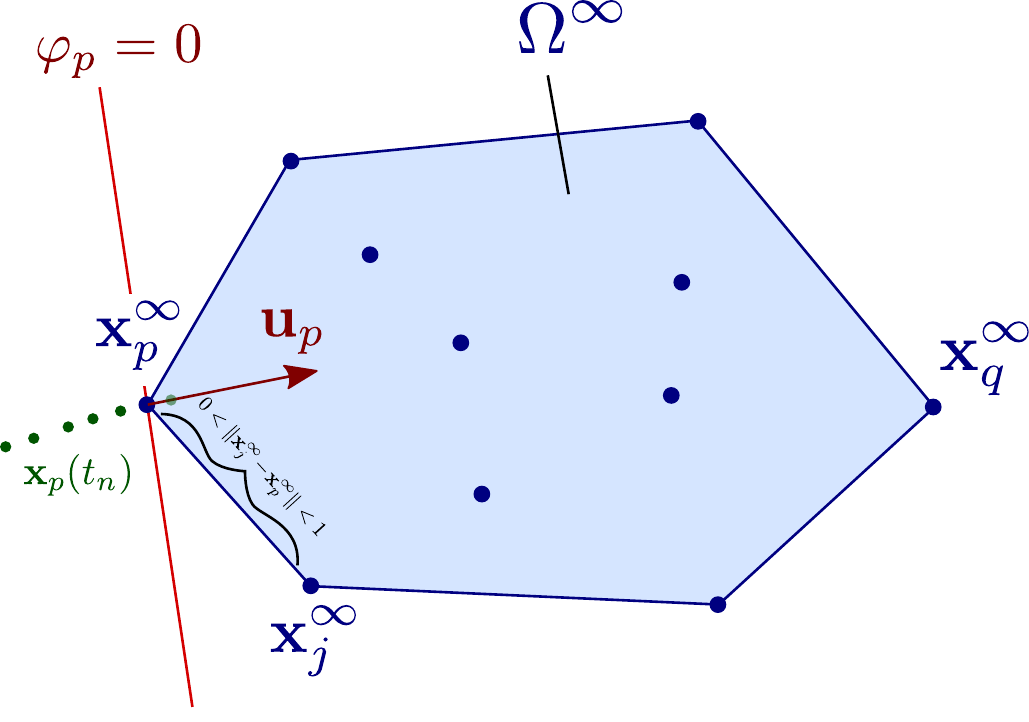}
    \caption{If the limit configuration $\{{\bf x}_k^\infty\}_k$ is not a consensus, the extreme point ${\bf x}_p(t_n)$ will eventually get inside the convex hull $\Omega^\infty$ which gives a contradiction.}
    \label{fig:proof_case1}
  \end{figure}

  We will get a contradiction if we can show that the sequence ${\bf x}_{p}(t_n)$ once closed to $\overline{\bf x}_{p}^{\infty}$ will {\it cross} the hyperplane $\{\varphi_p = 0\}$. Thus, we investigate the values of $\varphi_p({\bf x}_p(t_n))$. Its time evolution is given by:
  \begin{equation}
    \label{eq:start_computation}
    \frac{d}{dt}[\upvarphi_{p}({\bf x}_{p}(t_{n}))] = \inner{{\bf u}_{p}}{P_{\mathcal{C}_p(t_{n})}(\overline{\bf x}_{p}(t_{n}) - {\bf x}_{p}(t_{n}))}.
  \end{equation}
  To get rid of the projection operator, we notice that the projection is actually increasing the scalar product {\it if} ${\bf u}_p$ is in $\mathcal{C}_{p}(t_{n})$ thanks to the following lemma (the proof is in appendix).

  \begin{lem}\label{lem:project_help}
    Let $\set{{\bf x}_{i}}_{i}\subseteq\Rn$ and consider $C = \set{{\bf v} | \inner{{\bf v}}{{\bf x}_{i}}\geq 0\;\text{for all $i$}}$.  If ${\bf u}$ is in $C$ then:
    \begin{equation}
      \label{eq:increase_scalar_product}
      \inner{P_{C}({\bf x})}{{\bf u}}\geq\inner{{\bf x}}{{\bf u}} \quad \text{for all } {\bf x} \in \Rn.
    \end{equation}

  \end{lem}
It remains to show that ${\bf u}_{p}\in\mathcal{C}_p(t_{n})$. Notice that $\inner{{\bf x}-{\bf x}^{\infty}_{p}}{{\bf u}_{p}} \geq 0$ for all ${\bf x}$ in $\Omega^{\infty}$ and therefore ${\bf u}_{p} \in \mathcal{C}_{x_{p}^{\infty}}$.  Eventually this is true for the approximating sequence ${\bf x}_{p}(t_n)$ as well: ${\bf u}_{p} \in \mathcal{C}_{x_{p^{*}}(t_{n})}$ for $t_n$ large enough. Indeed, take any $k\neq p$. There are two cases. First case, ${\bf x}_k(t_n) \to {\bf x}_p^\infty$ and therefore $\|{\bf x}_k(t_n)-{\bf x}_p(t_n)\| \stackrel{n \to +\infty}{\longrightarrow} 0$, meaning that ${\bf x}_k(t_n)$ will not be in the critical region of agent $p$, i.e. $k\notin \mathcal{B}_{{\bf x}_p(t_n)}$ ($r_*<1$). In the second case, ${\bf x}_k(t_n) \to {\bf x}_k^\infty \neq {\bf x}_p^\infty$. Then
  \begin{align*}
    \inner{{\bf u}_{p}}{{\bf x}_{k}(t_{n}) - {\bf x}_{p}(t_{n})} &=\inner{{\bf u}_{p}}{{\bf x}_{k}(t_{n}) + {\bf x}^{\infty}_{k}-{\bf x}^{\infty}_{p}+{\bf x}^{\infty}_{p}-{\bf x}^{\infty}_{k}-{\bf x}_{p^{*}}(t_{n})} \\
                                                                 &= \inner{{\bf u}_{p}}{{\bf x}_{k}(t_{n})-{\bf x}^{\infty}_{k}} + \inner{{\bf u}_{p}}{{\bf x}^{\infty}_{p}-{\bf x}_{p^{*}}(t_{n})} + \inner{{\bf u}_{p}}{{\bf x}^{\infty}_{k} - {\bf x}_{p}^{\infty}} \\
                                                                 &\geq -2\norm{{\bf u}_{p}} \delta_n+ \inner{{\bf x}^{\infty}_{k}-{\bf x}^{\infty}_{p}}{{\bf u}_{p}}
  \end{align*}
  using Cauchy-Schwarz with $\delta_n = \max(\|{\bf x}_{k}(t_{n})-{\bf x}^{\infty}_{k}\|,\|{\bf x}_{p}(t_{n})-{\bf x}^{\infty}_{p}\|)$.  Since $\delta_n \stackrel{n \to +\infty}{\longrightarrow} 0$ and $\inner{{\bf x}^{\infty}_{k}-{\bf x}^{\infty}_{p}}{{\bf u}_{p}}>0$, we conclude that: $\inner{{\bf u}_{p}}{{\bf x}_{k}(t_{n}) - {\bf x}_{p}(t_{n})}>0$ for $t_n$ large enough, i.e. ${\bf u}_p\in \mathcal{C}_{p}(t_n)$.

  We can now continue our computation \eqref{eq:start_computation}:
  \begin{align}\label{eq:getinside_bound}
    \begin{split}
      \frac{d}{dt}[\upvarphi_{p}({\bf x}_{p}(t_{n}))] &= \inner{{\bf u}_{p}}{P_{\mathcal{C}_p(t_{n})}(\overline{\bf x}_{p}(t_{n}) - {\bf x}_{p}(t_{n}))} \\
      &\geq \inner{{\bf u}_{p}}{\overline{\bf x}_{p}(t_{n}) - {\bf x}_{p}(t_{n})} \\
      &= \varphi_p(\overline{\bf x}_{p}(t_{n}) - \varphi_p({\bf x}_{p}(t_{n})) \quad \rightarrow \quad \upvarphi_{p}(\overline{\bf x}_{p}^{\infty}) - \upvarphi_{p}({\bf x}_{p}^{\infty})>0
    \end{split}
  \end{align}
  by continuity of $\varphi_p$. Indeed, $\overline{\bf x}_{p}(t_{n}) \stackrel{n \to +\infty}{\longrightarrow} \to \overline{\bf x}_{p}^\infty$  since $\Phi$ has only a discontinuity at distance $1$ and the case 1 assumption avoids this possibility for any coefficients $a_{pk}$.

  We deduce a contradiction since we have two conflicting properties:
  \begin{eqnarray}
    &i)& \;\upvarphi_{p}({\bf x}_{p}(t_{n})) \to \upvarphi_{p}({\bf x}_{p}^\infty)=0 \quad \text{since }{\bf x}_{p}(t_{n}) \to {\bf x}_{p}^\infty, \\
    &ii)& \;\frac{d}{dt}[\upvarphi_{p}({\bf x}_{p}(t_{n}))]\geq \varphi_p(\overline{\bf x}_{p}^{\infty})>0 \text{ for large $t_n$}.
  \end{eqnarray}

  \medskip

  $\bullet$ {\bf Case 2: there exists an extreme neighbor (i.e. $\mathcal{E}_p^\infty \neq \emptyset$) AND there exists $j$ such that $0<\|{\bf x}_j^\infty-{\bf x}_p^\infty\|<1$.}

  \smallskip

  In this situation, agent $p$ might connects with a neighbor $k$ only at infinity and thus the local  average $\overline{\bf x}_p(t_n)$ might not converge to $\overline{\bf x}_p^\infty$ as $t_n\to+\infty$. But thanks to the non-extreme neighbor $j$, we are going to have a contradiction as in Case 1.

          %           There exists $j^{*}\in N_{p}^{\infty}$ such that $0<\|x_{p}^{\infty} - x_{j^{*}}^{\infty}\|<1$ - there is at least one nonextreme neighbor}

  The coefficient $a_{pj}$ is lower bounded at infinity since:
  \begin{align*}
    \liminf_{n\rightarrow \infty}a_{pj}(t_{n}) \geq \liminf_{n\rightarrow \infty} \frac{\Phi(\|{\bf x}_p(t_n)-{\bf x}_j(t_n)\|)}{N\cdot M} \geq \frac{m}{N\cdot M}>0
  \end{align*}
  where $m$ and $M$ are respectively the minimum and maximum of $\Phi$ on the interval $[0,1]$. This is enough to show that, as in case 1, the derivative $\frac{d}{dt}[\upvarphi_{p}({\bf x}_{p}(t_{n}))]$ is bounded below by a positive constant (for large $t_n$) leading to a contradiction. Indeed,
  \begin{eqnarray*}
    \liminf_{t_n\to+\infty} \inner{{\bf u}_p}{\overline{\bf x}_{p}(t_n)} &=& \liminf_{t_n\to+\infty}  \left(\sum_{k=1..N,\,k\neq j} a_{pk}\inner{{\bf u}_p}{{\bf x}_{k}(t_n)} \;\;+\; a_{pj} \inner{{\bf u}_p}{{\bf x}_{j}(t_n)} \right) \\
                                                                  & \geq & 0 + \frac{m}{N\cdot M} \inner{{\bf u}_p}{{\bf x}_j^\infty} >0.
  \end{eqnarray*}
  Thus, we conclude that:
  \begin{align*}
    \liminf_{n\rightarrow\infty} \big(\upvarphi_{p}(\overline{\bf x}_{p}(t_{n})) - \upvarphi_{p}({\bf x}_{p}(t_{n}))\big) \geq c > 0
  \end{align*}
  and we may apply the same argument as in Case 1.

  \medskip

  $\bullet$ {\bf Case 3: there exists an extreme neighbor (i.e. $\mathcal{E}_p^\infty \neq \emptyset$) AND for all $j\in \mathcal{N}_p^\infty$, $\|{\bf x}_{p}^{\infty} - {\bf x}_{j}^{\infty}\| = 0$ or $\|{\bf x}_{p}^{\infty} - {\bf x}_{j}^{\infty}\| = 1$.}

  \smallskip

   This is the most delicate case since the neighbors at distance {\it exactly} one might appear only asymptotically (i.e. at ``$t=infinity$''). However, the assumption is also helping: a neighbor $j$ of $p$ {\it must} converge to $p$ as we will see. More precisely, by the assumption of Case 3, any neighbor $j$ of ${\bf x}_p(t_n)$ must satisfy one of the two following scenario:
  \begin{enumerate}
  \item $\|{\bf x}_{j}(t_{n})  - {\bf x}_{p}(t_{n})\| \stackrel{n \to +\infty}{\longrightarrow}  0$
  \item $\|{\bf x}_{j}(t_{n})  - {\bf x}_{p}(t_{n})\| = 1$ for all $n$
  \end{enumerate}
  Indeed, if there exists a time $t_n$ such that $\|{\bf x}_{j}(t_{n})  - {\bf x}_{p}(t_{n})\|<1$, then it is impossible that $\|{\bf x}_{j}(t)  - {\bf x}_{p}(t)\|=1$ at a later time $t>t_n$ (Proposition \ref{prop:distance_decaying}). Since the limit $\|{\bf x}_{j}(t_{n})  - {\bf x}_{p}(t_{n})\|$ cannot be in $(0,1)$ due to the assumption of Case 3, it must converge to zero.

  To prove that a consensus emerges, we have to rule out scenario 2: neighbors of ${\bf x}_p(t_n)$ cannot stay at a distance exactly $1$. Notice that this is precisely what is happening in the counter-example of figure \ref{fig:counter_example}: $\|{\bf x}_3(t)-{\bf x}_2(t)\|=1$ for all $t$. But in this counter-example $r_*=1$ which is not the case in the present scenario.

  Let's proceed once again by contradiction and assume that there exists $j$ such that $\|{\bf x}_j(t_n)-{\bf x}_p(t_n)\|=1$ for all $t_n$. Denote $\mathcal{E}_p$ all the neighbors $j$ of $p$ satisfying this property. Under such assumption, we must have: $\frac{d}{dt} \|{\bf x}_j(t_n)-{\bf x}_p(t_n)\|^2 = 0$ and therefore:
  \begin{equation}
    \label{eq:proj_zero}
    \inner{P_{\mathcal{C}_p(t_n)}(\overline{\bf x}_{p}(t_{n}) - {\bf x}_{p}(t_{n}))}{{\bf x}_{j}(t_{n}) - {\bf x}_{p}(t_{n})} = 0.
  \end{equation}
  A key observation is to notice that the desired velocity $\overline{\bf x}_{p}(t_{n})-{\bf x}_{p}(t_{n})$ converges to the average over the neighbors in $\mathcal{E}_p$. Indeed, we rewrite:
  \begin{displaymath}
    \overline{\bf x}_{p}(t_{n})-{\bf x}_{p}(t_{n}) = \sum_{j=1}^N a_{pj}(t_n)\big({\bf x}_{j}(t_{n})-{\bf x}_{p}(t_{n})\big).
  \end{displaymath}
  If $j \notin \mathcal{E}_p$, either $a_{pj}(t_n)=0$ ($j$ not a neighbor of $p$) or ${\bf x}_{j}(t_{n})-{\bf x}_{p}(t_{n}) \stackrel{n \to +\infty}{\longrightarrow} 0$. Thus,
  \begin{equation}
    \label{eq:tp}
    \overline{\bf x}_{p}(t_{n})-{\bf x}_{p}(t_{n}) \stackrel{n \to +\infty}{\longrightarrow} \sum_{j \in \mathcal{E}_p} a_{pj}^\infty\big({\bf x}_{j}^\infty-{\bf x}_{p}^\infty\big).
  \end{equation}
  Moreover, $a_{pj}^\infty=c>0$ for all $j\in\mathcal{E}_p$ since  $\Phi(\|{\bf x}_{j}(t_n)-{\bf x}_{p}(t_n)\|) = \Phi(1)>0$.

  We can now pass to the limit in  \eqref{eq:proj_zero}:
  \begin{equation}
    \label{eq:tp2}
    \inner{P_{C_p^\infty}\big(\sum_{k \in \mathcal{E}_p} c({\bf x}_{k}^\infty-{\bf x}_{p}^\infty)\big)}{{\bf x}_{j}^\infty - {\bf x}_{p}^\infty} = 0,
  \end{equation}
  where $C_p^\infty$ is the critical region defined by:
  \begin{equation}
    \label{eq:C_p_inf}
    \mathcal{C}_{p}^\infty = \{{\bf v}\in\mathbb{R}^d\,|\;(  {\bf v}, {\bf x}_{j}^\infty-{\bf x}_{p}^\infty )\geq 0\quad \forall  j\in \mathcal{E}_p \}.
  \end{equation}
  Notice that the definition of $\mathcal{C}_{p}^\infty$ only includes extreme neighbors (i.e. $j\in \mathcal{E}_p$). Indeed, the other neighbor ${\bf x}_k$ converges to ${\bf x}_p$ as $t_n\to+\infty$. Since $r_*<1$, ${\bf x}_k(t_n)$ is no longer in the critical region $\mathcal{B}_{{\bf x}_p(t_n)}$ for $t_n$ large enough.

  Summing the previous expression over the neighbors $j$ in $\mathcal{E}_p$ gives:
  \begin{equation}
    \label{eq:tp3}
    \inner{P_{C_p^\infty}({\bf v})}{{\bf v}} = 0 \qquad \text{with} \quad {\bf v} = \sum_{j \in \mathcal{E}_p} ({\bf x}_{j}^\infty-{\bf x}_{p}^\infty).
  \end{equation}
  Since $C_{C_p^\infty}$ is a convex cone, we deduce that:
  \begin{equation}
    \inner{P_{C_p^\infty}({\bf v})}{P_{C_p^\infty}({\bf v})} = \inner{P_{C_p^\infty}({\bf v})}{{\bf v}},
  \end{equation}
  and therefore $P_{C_p^\infty}({\bf v})=0$.

  To get a contradiction, we use once again ${\bf u}_p$ define from the supporting hyperplane at ${\bf x}_p^\infty$ (see figure \ref{fig:proof_case1}), we have:
  \begin{equation}
    \label{eq:tp4}
    \inner{P_{C_p^\infty}({\bf v})}{{\bf u}} \geq \inner{{\bf v}}{{\bf u}}
  \end{equation}
  since ${\bf u}$ is in the cone $C_p^\infty$. We deduce:
  \begin{equation}
    \inner{P_{C_p^\infty}({\bf v})}{{\bf u}} \geq  \sum_{j\in \mathcal{E}_p} \inner{{\bf x}_j^\infty}{{\bf u}} >0,
  \end{equation}
  since ${\bf x}_j^\infty$ is strictly inside the convex hull $\Omega^\infty$ ($p$ being an extreme point). This proves that $P_{C_{{\bf x}_p^\infty}}({\bf v}) \neq 0$ and conclude the proof.
\end{proof}

\subsubsection{1-D Case}
We now investigate consensus in the special case that the dynamics are one dimensional.  We know from the previous section that consensus occurs however in this case we will also be able to quantify the rate at which this consensus emerges.  Our main tool will be estimates on the diameter $d(t)$.  Recall from Remark \ref{rem:1D_special_case} that in one dimension Model \eqref{model:no_one_left_behind} reduces to Model \eqref{model:1D_NOLB} - we will use the notation in the latter in the following. We will see that the control $\mu_{i}$ causes the convergence to occur in two stages.  Clearly if a consensus emerges there must exist a time $\tau$ after which all agents are directly interacting with each other, that is $|x_{i}(t) - x_{j}(t)| \leq 1$ for any $i$ and $j$ and $t>\tau$; in this case the network on which agents interact is \textit{fully connected}.  Before this time there necessarily exist pairs of agents who do not interact but are merely connected by a path.  We will first examine the case where all agents are interacting - in this case the dynamics converge towards a consensus at an exponential rate that depends on the extreme values of the interaction function.
\begin{prop}\label{ppo:exp_decay}
  Suppose $d(0)\leq 1$.  Then:
  \begin{equation}
    \label{eq:exponential_rate}
    d(t) \leq d(0) e^{-\frac{m}{M}\cdot t}
  \end{equation}
  where $m = \min_{x\in[0,1]} \Phi(x)$ and $M = \max_{x\in [0,1]} \Phi(x)$.
\end{prop}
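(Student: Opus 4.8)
The plan is to control the time derivative of the diameter $d(t)=x_p(t)-x_q(t)$, where at each time $p$ and $q$ denote the rightmost and leftmost agents, by establishing the differential inequality $\frac{d}{dt}d \le -\frac{m}{M}\,d$; the estimate \eqref{eq:exponential_rate} then follows from Gr\"onwall's lemma. First I would record the two consequences of the hypothesis $d(0)\le 1$: by Corollary \ref{cor:decay_diameter} the diameter is non-increasing, so $d(t)\le 1$ for all $t\ge 0$; hence every pair of agents stays within the interaction range, the interaction network is complete, and $m\le\Phi_{ij}\le M$ for all $i,j$.

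The second ingredient is that the two extreme agents are never frozen by the control, i.e. $\mu_p=\mu_q=1$ throughout. Indeed, the rightmost agent satisfies $\overline{x}_p-x_p\le 0$, so by the definition of the critical region its region $\mathcal{B}_p$ consists of points lying \emph{to the right} of $x_p$ (the side opposite its velocity) at distance between $1-r_*$ and $1$; since $p$ is the rightmost agent, no agent can occupy this region and $\mathcal{B}_p=\emptyset$, whence $\mu_p=1$. The symmetric argument gives $\mu_q=1$. (If $\overline{x}_p=x_p$, then the network being complete forces all agents to coincide, so $d=0$ and the bound is trivial.) Consequently both extreme agents obey the free dynamics of Model \ref{model:1D_NOLB}, and
\[
\frac{d}{dt}d = (\overline{x}_p-x_p)-(\overline{x}_q-x_q).
\]

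The key step — the one that yields the $N$-free rate $m/M$ — is to bound each term by comparison with the unweighted arithmetic mean $c=\frac1N\sum_{k=1}^N x_k$. Writing $x_p-\overline{x}_p=\big(\sum_k \Phi_{pk}(x_p-x_k)\big)\big/\big(\sum_k \Phi_{pk}\big)$, I would use $\Phi_{pk}\ge m$ in the numerator (where every summand $x_p-x_k$ is nonnegative) and $\Phi_{pk}\le M$ over the $N$ terms of the denominator; the factor $N$ then cancels and
\[
x_p-\overline{x}_p \;\ge\; \frac{m}{M}\,(x_p-c), \qquad \overline{x}_q-x_q \;\ge\; \frac{m}{M}\,(c-x_q).
\]
Adding these two inequalities, the mean $c$ drops out and the right-hand side telescopes to $\frac{m}{M}(x_p-x_q)=\frac{m}{M}d$, which gives $\frac{d}{dt}d\le-\frac{m}{M}d$, as desired.

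The main obstacle I anticipate is purely a regularity matter rather than a conceptual one: the diameter need not be everywhere differentiable (and the coefficients $a_{ij}$ may be discontinuous in time if a pairwise distance rests at the jump of $\Phi$ at $r=1$). I would handle this in the standard way, interpreting the inequality through the upper right Dini derivative of the Lipschitz, non-increasing function $d(t)$ and integrating the resulting differential inequality. The point that makes this routine is that the identification $\mu_p=\mu_q=1$ of the extreme agents as unconstrained holds at \emph{every} time, so the term-by-term bound above is valid at all points where $d$ is differentiated, and the integrated Gr\"onwall estimate \eqref{eq:exponential_rate} follows.
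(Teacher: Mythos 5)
Your proposal is correct, but the key estimate is obtained by a genuinely different route than the paper's. Both proofs begin identically: since $d(0)\leq 1$ and the diameter is non-increasing, the interaction graph stays complete with $m\leq \Phi_{ij}\leq M$, and the extreme agents $p,q$ have empty critical regions so $\mu_p=\mu_q=1$ (your treatment of the degenerate case $\overline{x}_p=x_p$ is a nice touch the paper omits). The divergence is in how the rate $m/M$ is extracted. The paper works with $d^2$, applies Cauchy--Schwarz, and then bounds $|\overline{x}_p-\overline{x}_q|\leq(1-\boldsymbol{\eta})|x_p-x_q|$ by subtracting the common weight mass $\eta_i=\min(a_{pi},a_{qi})\geq\frac{m}{MN}$ from both convex combinations and observing that the renormalized averages remain in the convex hull of the opinions. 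You instead compare each extreme agent's drift to the unweighted centroid $c=\frac1N\sum_k x_k$, using $\Phi_{pk}\geq m$ in the numerator (where all summands $x_p-x_k$ have one sign) and $\sum_k\Phi_{pk}\leq NM$ in the denominator, so that $x_p-\overline{x}_p\geq\frac{m}{M}(x_p-c)$ and symmetrically for $q$; adding cancels $c$ and gives $\frac{d}{dt}d\leq-\frac{m}{M}d$ directly. Your version is more elementary (no Cauchy--Schwarz, no renormalization argument, works with $d$ rather than $d^2$) but leans on the one-dimensional ordering through the sign of $x_p-x_k$, whereas the paper's common-mass trick is the kind of argument that generalizes to vector-valued averages; since the proposition is stated in one dimension this costs you nothing here. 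You also explicitly address the differentiability of $d$ via Dini derivatives where the argmax pair changes, a point the paper passes over silently.
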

\begin{proof}
  Fix $t$ and denote $p$ and $q$ such that $d(t) = |x_{p} - x_{q}|$.  Notice that since $p$ and $q$ are the two agents with \textit{extreme} opinions we must have that $\mu_{p} = \mu_{q} = 1$ as they cannot have any agents in their critical regions.  We aim to get a bound on $(d^{2})'$ in terms of $d^{2}$ in order to apply Gronwall's lemma.  By the Cauchy-Schwarz inequality we have that:
  \begin{equation}\label{eq:bound1_prop2}
    \begin{split}
      \frac{d}{dt}[d^{2}(t)] = 2(\dot{x}_{p} - \dot{x}_{q}, x_{p}-x_{q}) &= 2\Big((\overline{x}_{p} - \overline{x}_{q})(x_{p}-x_{q})- (x_{p}-x_{q})^{2}\Big)\\
      &\leq 2\Big(|\overline{x}_{p} - \overline{x}_{q}||x_{p}-x_{q}| - |x_{p}-x_{q}|^{2}\Big).\\
    \end{split}
  \end{equation}

  To obtain the bound we desire all that remains is to bound $|\overline{x}_{p} - \overline{x}_{q}|$ above by a constant multiple of $|x_{p}-x_{q}|$. We will exploit the fact that the local averages $\overline{x}_{p}$ and $\overline{x}_{q}$ must be inside the convex hull of opinions and therefore since agents $p$ and $q$ are the agents with the most extreme opinions the difference between their local averages must be smaller than the difference between their opinions.  Denote $\eta_{i} = \text{min}(a_{pi}, a_{qi})$ and notice that since $\Phi$ is bounded and $d(0)\leq 1$ we must have that
  $\eta_{i} \geq \frac{m}{MN}$ where $m$ and $M$ are given by \eqref{eq:m_and_M}.  Notice that:
  \begin{equation}\label{eq:squiggle1_prop2}
    \begin{split}
      \left|\overline{x}_{p} - \overline{x}_{q}\right| &= \left|\sum_{i = 1}^{N}a_{pi}x_{i} - \sum_{i = 1}^{N}a_{qi}x_{i}\right|  = \left|\sum_{i = 1}^{N}(a_{pi} - \eta_{i})x_{i} - \sum_{i =1}^{N}(a_{qi}-\eta_{i})x_{i}\right|\\
      & =:\left|\sum_{i = 1}^{N}\widetilde{a}_{pi}x_{i} - \sum_{i = 1}^{N}\widetilde{a}_{qi}x_{i}\right|.
    \end{split}
  \end{equation}
  Let $\boldsymbol{\eta} = \sum_{i=1}^{N}\eta_{i}$. We have by \eqref{eq:squiggle1_prop2} that:
  \begin{equation}\label{eq:squiggle2_prop2}
    \left|\overline{x}_{p} - \overline{x}_{q}\right|  = (1-\boldsymbol{\eta})\left|\sum_{i = 1}^{N}\frac{\widetilde{a}_{pi}x_{i}}{1-\boldsymbol{\eta}} - \sum_{i = 1}^{N}\frac{\widetilde{a}_{qi}x_{i}}{1-\boldsymbol{\eta}}\right|
    := (1-\boldsymbol{\eta})\left|\widetilde{x}_{p} - \widetilde{x}_{q}|\right.
  \end{equation}
  Notice that
  \begin{equation}\label{eq:convex_prop2}
    \sum_{i = 1}^{N}\frac{\widetilde{a}_{qi}}{1-\boldsymbol{\eta}} = \sum_{i = 1}^{N}\frac{\widetilde{a}_{pi}}{1-\boldsymbol{\eta}} = 1,
  \end{equation}
  and therefore we must have that $\widetilde{x}_{p}$ and $\widetilde{x}_{q}$ are in the convex hull of $\set{x_{i}}_{i=1}^{N}$ so necessarily we have that $|\widetilde{x}_{p} - \widetilde{x}_{q}|\leq |\overline{x}_{p} - \overline{x}_{q}|$.
  Therefore by \eqref{eq:squiggle2_prop2} we have that $|\overline{x}_{p} - \overline{x}_{q}| \leq (1- \boldsymbol{\eta})|x_{p} - x_{q}|$ which by \eqref{eq:bound1_prop2} implies:
  \begin{equation}\label{eq:prop2_finalbound}
    \begin{split}
      \frac{d}{dt}[d^{2}(t)] &\leq 2\Big((1- \boldsymbol{\eta})|x_{p} - x_{q}||x_{p}-x_{q}| - |x_{p}-x_{q}|^{2}\Big)\\
      & = -2\boldsymbol{\eta}|x_{p} - x_{q}|^{2}.
    \end{split}
  \end{equation}
  Finally, since by definition of the weights $a_{iq}, a_{ip}$ and $\boldsymbol{\eta}$ we have that $\boldsymbol{\eta}\geq \frac{m}{M}$ we can conclude using \eqref{eq:prop2_finalbound} that:
  \begin{equation}\label{eq:prop2_conclude}
    \frac{d}{dt}[d^{2}(t)] = 2d(t)d'(t) \leq -\frac{m}{M}d(t).
  \end{equation}
  An application of Gronwall's lemma provides the final result.
\end{proof}

So, in the case that all agents are interacting, i.e. that $|x_{i}(t) - x_{j}(t)| \leq 1$ for any $i$ and $j$, a consensus is reached exponentially fast at a rate that depends on the maximum and minimum values of the interaction function.  However, starting from an initial condition that is connected does not mean that all agents are directly interacting; any two agents are merely connected by a path.  We now examine the rate of convergence for $t<\tau$, i.e. \textit{before} all agents are interacting.

\begin{theorem}
  \label{thm:cv_1d}
  Suppose the initial condition $\{x_i(0)\}_i$ is connected and let $\eta=\frac{m}{M\cdot N}$. There exist $\delta>0$ and $T>0$ such that while $d(t)\geq1$, we have:
  \begin{equation}
    \label{eq:linear_decay}
    d(t)\leq d(0) + \eta\delta(N\cdot T-t).
  \end{equation}
        %         1=d_0+c_1-c_2 T  (d_0+c_1-1)/c_2
  Thus, after $t\geq N\cdot T+\frac{d_0-1}{\eta\delta}$, the diameter is converging exponentially fast toward zero.
\end{theorem}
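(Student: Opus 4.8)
The plan is to work with the one–dimensional reduction in Model~\ref{model:1D_NOLB}, order the agents so that $x_1(t)\leq x_2(t)\leq\cdots\leq x_N(t)$ (an ordering the scalar dynamics cannot violate, since freezing only ever slows agents down), and track the consecutive gaps $g_k(t)=x_{k+1}(t)-x_k(t)$ together with the diameter $d(t)=x_N(t)-x_1(t)=\sum_{k=1}^{N-1}g_k(t)$. By Corollary~\ref{cor:preserve_connectivity_final} connectivity is preserved, which in one dimension is exactly the statement that every consecutive gap remains bounded: $g_k(t)\leq 1$ for all $t$ and all $k$. The regime of interest is $d(t)\geq1$, where the configuration is connected but not fully interacting, and the goal is a \emph{linear} decay estimate that later hands off to the exponential estimate of Proposition~\ref{ppo:exp_decay} once $d$ drops to $1$.

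First I would record the inward drift of the two extreme agents. Since the critical region of an agent lies behind it and nothing lies beyond $x_1$ or $x_N$, the extreme agents never freeze, i.e.\ $\mu_1=\mu_N=1$. Because $g_1,g_{N-1}\leq1$, the nearest–neighbour weights satisfy $a_{12},a_{N,N-1}\geq\eta=\tfrac{m}{M\cdot N}$, exactly as in the proof of Proposition~\ref{ppo:exp_decay}, and since all displacements $x_j-x_1$ are nonnegative (resp.\ $x_j-x_N$ nonpositive) we obtain
\begin{equation*}
  \dot{x}_1\geq\eta\,g_1,\qquad \dot{x}_N\leq-\eta\,g_{N-1},\qquad\text{so that}\qquad -\dot{d}=\dot{x}_1-\dot{x}_N\geq\eta\,(g_1+g_{N-1}).
\end{equation*}
This already re-proves that $d$ is non-increasing (Corollary~\ref{cor:decay_diameter}); the difficulty is that the bound degenerates as soon as the extreme gaps become small, which they do whenever the outermost agents cluster with their neighbours.

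The heart of the argument is to upgrade this to a \emph{uniform} rate by exploiting the control. The mechanism is that once a gap enters the critical band $g_k\in[1-r_*,1]$, the trailing agent sits in the leading agent's critical region, so the leading agent is frozen and cannot widen the gap; combined with the inward pull this gives $\dot g_1\leq-\eta\,g_1$ while $g_1\geq1-r_*$, forcing the leftmost gap below $1-r_*$ within time $T:=\tfrac{1}{\eta}\log\tfrac{1}{1-r_*}$, and Proposition~\ref{prop:distance_decaying} guarantees that a gap which has dropped below $1-r_*$ can never re-expand past it. I would run this as an induction from the left: the leftmost pair binds within time $T$ and thereafter behaves as a single tight block, whose leading gap then binds within a further time $T$, and so on. Since there are at most $N-1$ gaps, after a burn-in of at most $N\cdot T$ the configuration has tightened to the point where the relevant inward drift is bounded below by some constant $\eta\delta$ whenever $d(t)\geq1$; integrating $-\dot d\geq\eta\delta$ from the burn-in yields the claimed bound $d(t)\leq d(0)+\eta\delta(N\cdot T-t)$ (which is trivially true for $t\le N\cdot T$).

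Finally, the linear upper bound reaches the value $1$ precisely at $t_*:=N\cdot T+\tfrac{d_0-1}{\eta\delta}$, so $d(t_*)\leq1$; at that moment all agents lie within unit distance and Proposition~\ref{ppo:exp_decay} applies with initial diameter $d(t_*)\leq1$, giving $d(t)\leq e^{-\frac{m}{M}(t-t_*)}$ for $t\geq t_*$ — the two–stage linear–then–exponential picture asserted. The main obstacle is exactly the degeneracy flagged above: because the extreme agents slow down as their neighbours cluster, the uniform rate cannot be charged to any single agent and must instead come from the collective tightening enforced by the control, while the freezing/unfreezing of agents makes the flow discontinuous and shifts the identity of the binding gap over time. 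The delicate technical point is therefore establishing the uniform lower bound $\delta$ on the relevant drift after the burn-in (using Proposition~\ref{prop:distance_decaying} to certify that progress made in one stage is never undone, and, if an explicit constant proves elusive, extracting $\delta>0$ from a compactness argument over the trajectory segment on which $d\geq1$ together with the convergence already guaranteed by Theorem~\ref{them:consensus}); this is what prevents the ordinary bounded-confidence clustering from reasserting itself while $d\geq1$.
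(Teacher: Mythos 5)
Your skeleton is right — ordering preservation, the non-freezing of the extreme agents, the bound $\dot{x}_1\geq \eta\,g_1$ with $\eta=\frac{m}{MN}$, the handoff to Proposition \ref{ppo:exp_decay} at $d=1$, and even the use of Proposition \ref{prop:distance_decaying} to certify that progress is never undone all match the paper's Case 1. The genuine gap is in the step you yourself flag as "the heart": the claim that after a burn-in of $N\cdot T$ the inward drift is uniformly bounded below by $\eta\delta$ whenever $d(t)\geq 1$. Your induction does not deliver this. First, binding a gap below $1-r_*$ does not prevent it from becoming arbitrarily small, so $\dot{x}_1\geq\eta g_1$ degenerates again immediately after the "binding"; having all $N-1$ consecutive gaps below $1-r_*$ only gives $d\leq (N-1)(1-r_*)$, which is compatible with $d\geq 1$ and with vanishing drift at both extremes. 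Second, the induction step fails for interior gaps: your argument that $g_1$ closes uses that agent $1$ \emph{never} freezes and always moves right, but agent $k$ for $k\geq 2$ can be frozen by agent $k-1$ sitting in its critical band, or can have $\overline{x}_k<x_k$ and move left toward the block, so the "leading gap of the block" has no uniform closing rate. Third, the block does not stay tight: as soon as $g_1<1-r_*$, agent $2$ is free to move right away from agent $1$, so the leapfrogging you would need to control is exactly the regime your induction skips. The compactness fallback cannot rescue this: $\dot d$ genuinely vanishes on nontrivial time intervals (the degenerate periods), so no pointwise lower bound on the drift holds over all of $\{d\geq 1\}$, and extracting $\delta$ a posteriori from the trajectory makes \eqref{eq:linear_decay} vacuous.

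What the paper does instead, and what is missing from your proposal, is a quantitative \emph{neighbor-acquisition} argument in the degenerate case. When every neighbor of the extreme agent $p$ lies within $\delta$ (Case 2), connectivity supplies a chain $p\sim p_1\sim p_2$ with $p\not\sim p_2$, hence $|x_{p_1}-x_{p_2}|\geq 1-\delta$. Choosing $\delta,T$ via Lemma \ref{lem:delta_T} so that $\delta+2T\leq\min(r_*,1-r_*)$, agent $p_1$ remains in the critical region of $p_2$ throughout $[t_*,t_*+T]$, so $p_2$ cannot move right; meanwhile $p_2$ pulls $p_1$ away from $p$, and the resulting gap $x_{p_1}-x_p\gtrsim\big(\frac{\eta}{N}(1-\delta-2T)-2(\delta+T)\big)t$ feeds back into $\dot{x}_p\geq\frac{\eta}{N}(x_{p_1}-x_p)$, yielding $x_p(t_*+T)-x_p(t_*)\geq\delta$ by condition \eqref{eq:delta_T_eq2} and hence $p\sim p_2$ at time $t_*+T$. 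Since $p$ can acquire a new neighbor at most $N$ times, the degenerate periods occupy total time at most $N\cdot T$, and outside them Case 1 gives $\dot d\leq-\eta\delta$; integrating yields \eqref{eq:linear_decay}. It is this counting of acquisition events — not a left-to-right tightening of consecutive gaps — that bounds the exceptional time and produces constants $\delta,T$ depending only on $r_*,\eta,N$. You would need to supply this (or an equivalent) mechanism for your proof to close.
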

\begin{proof}
  Denote $p$ and $q$ such that $d(t)=x_q(t)-x_p(t)$. Suppose $d(t)>1$, the case $d(t)\leq1$ has been treated in proposition \ref{ppo:exp_decay}. We analyze the behavior of $p$ and $q$ separately as they do not affect each other. We fix the two constant $\delta>0$ and $T>0$ satisfying the two technical conditions:
  \begin{eqnarray}
    \label{eq:delta_T_eq1}
    \delta+2T &\leq& \min(r_*,1-r_*) \\               % |x_p2-x_p1|\geq1-r_*, |x_p-x_p1|\leq1-r_*
    \label{eq:delta_T_eq2}
    \frac{T^2}{2N^2} \eta\big(\eta (1-\delta-2T)-2N(\delta+2T)\big) &\geq& \delta. % x_p connects with x_p2
  \end{eqnarray}
  It is always possible to find such constant (see lemma \ref{lem:delta_T}).

  We split the study of $\dot{x}_p$ at a given time $t_*$ in two cases.

      %       In the first case, we can find  find a lower bound for the velocity of $p$ and thus deduce a linear decay rate for $d(t)$.\\
      %       However, in the second case, we are not able to obtain a lower bound for $p$'s velocity. Instead, we prove that the agent $p$ will join its {\it nearest unconnected agent} (denoted $p_{2}$) after a time $T$ (upper bound). By induction, agent $p$ will repeat this procedure and finally achieve consensus.

  $\bullet$ {\bf Case 1}: suppose there exists $p_1$ such that $\delta\leq|x_p(t_*)-x_{p_1}(t_*)|\leq1$. \\
In this case, we can easily deduce a lower bound for the speed of $p$:
  \begin{displaymath}
    \dot{x}_{p} = \sum_{j=1}^{N}(x_{j}-x_{p})a_{jp} \geq\ (x_{p_{1}}-x_{p})a_{p,p_{1}}\geq \delta  \eta
  \end{displaymath}
  with $\eta = \frac{m}{M\cdot N}$. Thus, since we have on the other end $\dot{x}_q\leq0$, we deduce that the diameter is decaying at a minimum speed $\delta\eta$.

  $\bullet$ {\bf Case 2}:  $|x_p(t_*)-x_i(t_*)|<\delta$ or $|x_p(t_*)-x_i(t_*)|>1$ for any $i\neq p$. \\
  In other words, all the neighbors of $p$ are at a distance less than $\delta$. We cannot find a lower bound for $\dot{x}_p$ anymore. Instead, we show that a {\it neighbor of a neighbor} (denoted $p_2$) will become a new neighbor of $p$ in a finite time less than $T$.\\
  Since connectivity is preserved, there exists $p_1$ and $p_2$ such that: $p\sim p_1$, $p_1\sim p_2$ and $p \not \sim p_2$. Therefore, we deduce (see figure )
  \begin{displaymath}
    |x_p(t_*)-x_{p_1}(t_*)|\leq\delta , \quad |x_{p_1}(t_*)-x_{p_2}(t_*)|\leq1, \quad |x_{p}(t_*)-x_{p_2}(t_*)|>1.
  \end{displaymath}
  By triangular inequality, we deduce that $|x_{p_1}(t_*)-x_{p_2}(t_*)|\geq1-\delta$. Let us show that after time $T$, we have $|x_{p}(t_*+T)-x_{p_2}(t_*+T)|\leq1$.

  \begin{figure}[ht]
    \centering
    \includegraphics[width=.6\textwidth]{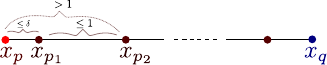}
    \caption{Situation in the case 2. The extreme point $x_p$ needs $x_{p_2}$ the neighbor of its neighbor $x_{p_1}$ to be pushed further to the right.}
    \label{fig:proof_1d}
  \end{figure}

  First, we show that $\dot{x}_{p_2}\leq0$ thanks to $x_{p_1}$. During the time interval $t\leq T$, as the velocity $|\dot{x}_i|\leq 1$ for any $i$, we have:
  \begin{displaymath}
    |x_{p_1}(t_*+t)-x_{p_2}(t_*+t)| \geq |x_{p_1}(t_*)-x_{p_2}(t_*)| - 2\cdot t \geq 1-\delta -2T \geq1-r_*,
  \end{displaymath}
  by the assumption \eqref{eq:delta_T_eq1}. As the consequence, $x_{p_1}$ is always in the critical region of the agent $p_2$. Thus, $x_{p_2}$ can only move left, which implies $\dot{x}_{p_2}\leq0$.

  Second, we show that $x_p$ increases by at least $\delta$ during the period $T$ which will imply that $p\sim p_2$. The main idea is to show that $x_{p_1}$ is going to {\it pull out} $x_p$. To prove it, we compute
  \begin{eqnarray*}
    \dot{x}_{p} &=& \mu_{p}(\bar{x}_{p}-x_{p}) = 1\cdot\sum_k a_{p,k}(x_k-x_p) \\
                &=& a_{p,p_1}(x_{p_1}-x_p) + \sum_{k\neq p_1} a_{p,k}(x_k-x_p) \;\;\geq\; \frac{\eta}{N} (x_{p_1}-x_p) + 0,
  \end{eqnarray*}
  since $x_k-x_p\geq0$. We now need to find a lower bound for $x_{p_1}-x_p$. With this aim, we compute the time derivative:
  \begin{eqnarray*}
    \dot{x}_{p_1}-\dot{x}_p &=& \mu_{p_1}(\bar{x}_{p_1}-x_{p_1}) - \mu_{p}(\bar{x}_{p}-x_{p}) \\
                            &\geq& (\bar{x}_{p_1}-x_{p_1}) - (\delta+T).
  \end{eqnarray*}
                                %                                 \bar{x}_{p}-x_{p} \leq x_{p_1}-x_p \leq  \delta+T
  Here, we use that $\dot{x}_{p_1}$ is necessarily positive as $p_2$ is in its critical region, thus we have a lower bound by replacing $\mu_{p_1}$ by $1$. We now also suppose that $x_p$ has not connected with $p_2$ (otherwise there is no need to go further) and therefore its neighbors are at a distance bounded by $\delta+T$ on the time interval $[t_*,t_*+T]$.

  Following the same inequality as for $\dot{x}_{p}$, we deduce that:
  \begin{eqnarray*}
    \bar{x}_{p_1}-x_{p_1} &=& a_{p_1,p_2}(x_{p_2}-x_{p_1}) + \sum_{k\neq p_2} a_{p_1,k}(x_k-x_{p_1}) \\
                          &\geq& \frac{\eta}{N} (1-\delta-2T) + (x_p-x_{p_1}) \;\;\geq\; \frac{\eta}{N} (1-\delta-2T) - (\delta+T).
  \end{eqnarray*}
  Therefore,
  \begin{displaymath}
    \dot{x}_{p_1}-\dot{x}_p \geq \frac{\eta}{N} (1-\delta-2T) - 2(\delta+T).
  \end{displaymath}
  We conclude that:
  \begin{eqnarray*}
    x_{p_1}(t_*+t)-x_p(t_*+t) &\geq& x_{p_1}(t_*)-x_p(t_*) + \int_0^t \frac{\eta}{N} (1-\delta-2T) - 2(\delta+T) \,\mathrm{d} t \\
                      &\geq& 0 + \left(\frac{\eta}{N} (1-\delta-2T) - 2(\delta+T)\right)t.
  \end{eqnarray*}
  Coming back to $x_p$, we obtain:
  \begin{eqnarray*}
    x_p(t_*+T)-x_p(t_*) &\geq& \int_0^T \frac{\eta}{N} (x_{p_1}(t_*+t)-x_p(t_*+t))\,\mathrm{d}t \\
                  &\geq& \frac{\eta}{N}\left(\frac{\eta}{N} (1-\delta-2T) - 2(\delta+T)\right) \frac{T^2}{2}\;\;\geq\;\delta,
  \end{eqnarray*}
  using \eqref{eq:delta_T_eq2}. Therefore, at time $t_*+T$, we have $x_{p_2}(t_*+T)-x_p(t_*+T)\leq1$, and thus $p\sim p_2$.

  \bigskip

  To conclude, since there is only a finite number of particles $N$, situations as in {\it case 2} can only appear a finite number of time (less than $N$ times) and thus spread over a period less than $N\cdot T$. Thus, outside these periods, the decay of $d(t)$ satisfies $\dot{d}\leq-\eta\delta$ leading to the upper-bound \eqref{eq:linear_decay} which concludes the proof.
\end{proof}

\begin{figure}[ht]
  \centering
  \includegraphics[scale=.68]{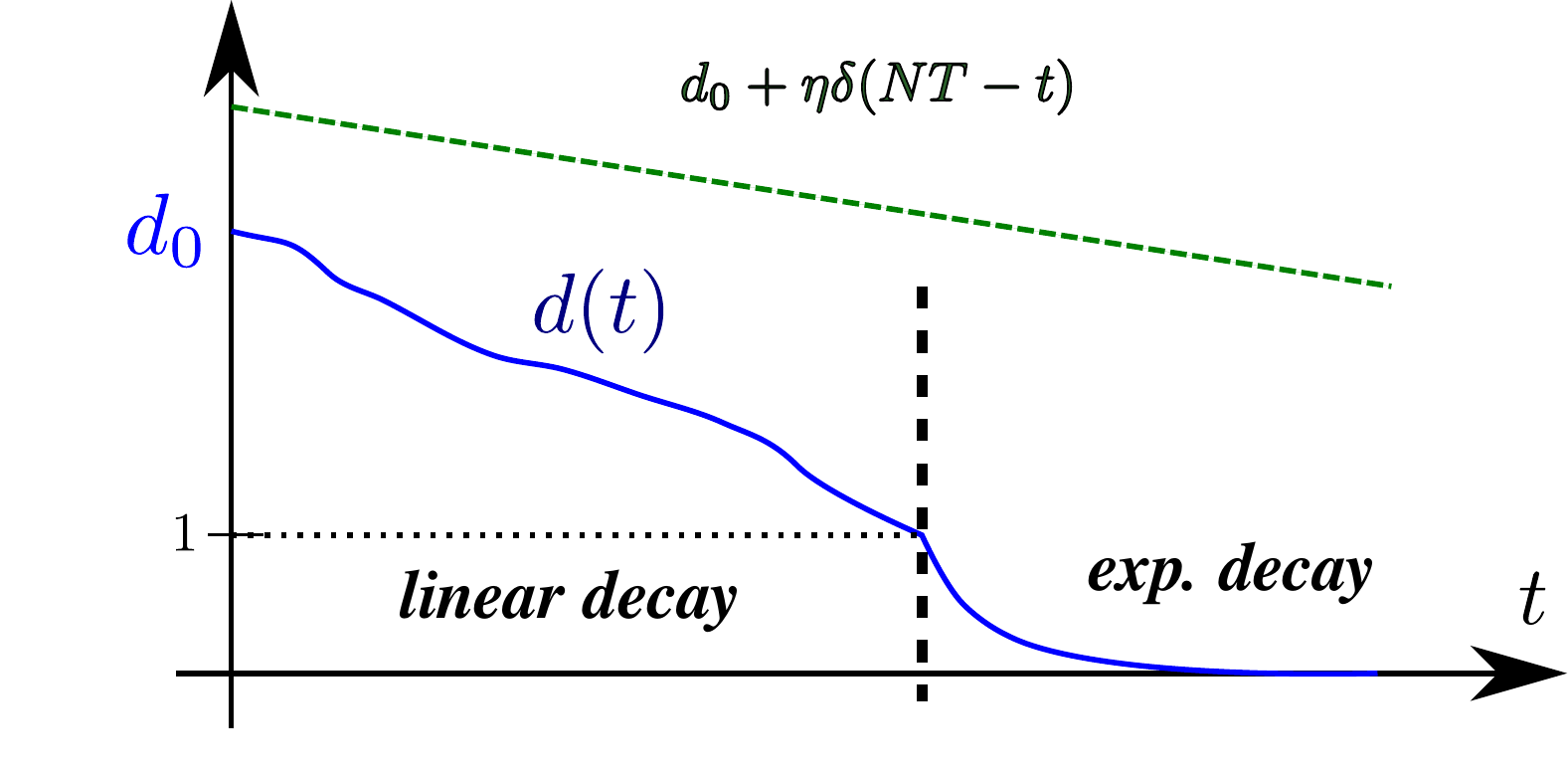}
  \caption{The decay of the diameter $d(t)$ is first linear and then exponential after the diameter $d(t)$ becomes less than $1$.}
  \label{fig:decay_d}
\end{figure}

\begin{remark}
   Notice that we could not derive an explicit decay rate in the multi-dimension case. Indeed, in one dimension, we exploited the property that the  left behind dynamics preserve the ordering of the agents and in particular that the diameter forming agents are the same agents for all time (i.e. $p$ and $q$ are time independent). In multiple dimensions it is possible for the diameter forming agents to change - this prevents us from applying the techniques used in one dimension to multiple dimensions.
\end{remark}

\subsection{Diameter decay: numerical experiments}

From the previous results, we can see that convergence to a consensus occurs in two stages. Before the diameter $d$ becomes less than $1$ the convergence is at least linear, after $d$ is less than $1$ the convergence becomes exponential.  The estimation of the convergence rates provided by Theorem \ref{thm:cv_1d} is fairly rough. We would like to explore numerically if in practice the decay of the diameter is faster.

We perform $100$ realizations of the $NOLB$ dynamics with initial conditions for the configuration ${\bf x}_i$ taken from a uniform distribution on the interval $[0,10]$ (1D simulation). We then compute the evolution of the diameter $d(t)=\max_{i,j}|x_i(t)-x_j(t)|$ for all the realizations. In Figure \ref{fig:1D_decay_d}-left, we plot the decay of the 'median' of the diameter (red) along with the slowest and fastest decay (dashed blue). To measure the disparity of $d(t)$, we also plot the $5\%$ and $95\%$ quantile. We observe two phases in the decay of the diameter: initially $d(t)$ decays quickly and then starts to slow down until it reaches the distance $1$ when it decays exponentially fast. We notice that there are a large variation between the different realizations. Indeed, if we denote $\tau$ the stopping time at which $d$ reaches $1$:
\begin{equation}
  \label{eq:tau}
  \tau=\min_{t\geq 0}\{d(t)\leq 1\},
\end{equation}
then we observe that $\tau$ varies between $22$ time units (fastest realization) and $50$ time units(slowest realization). Thus, finding a sharp decay rate for the NOLB dynamics seems challenging.

 Additionally, we would like to explore how the the radius, $r_{*}$, of the critical region affects the convergence.  Naively one might expect that lower values of $r_{*}$ result in faster convergence as agents are more free to move.  However, we find that this is not the case; intuitively since lower values of $r^{*}$ allow for more freedom of movement agents can have a higher degree of clustering which hinders the emergence of a consensus.  For each value of $r_{*}$ ranging from $0$ to $1$, we run $100$ simulations of the NOLB dynamics and estimate the stopping time $\tau$ \eqref{eq:tau}.  The results are plotted in Figure \ref{fig:1D_decay_d}-right.  We find that indeed $\tau$ is lower for higher values of $r_{*}$.  Interestingly, this effect seems to become less prominent for $r_{*}\geq 0.2$.

  \begin{figure}[ht]
    \centering
    \includegraphics[width=.47\textwidth]{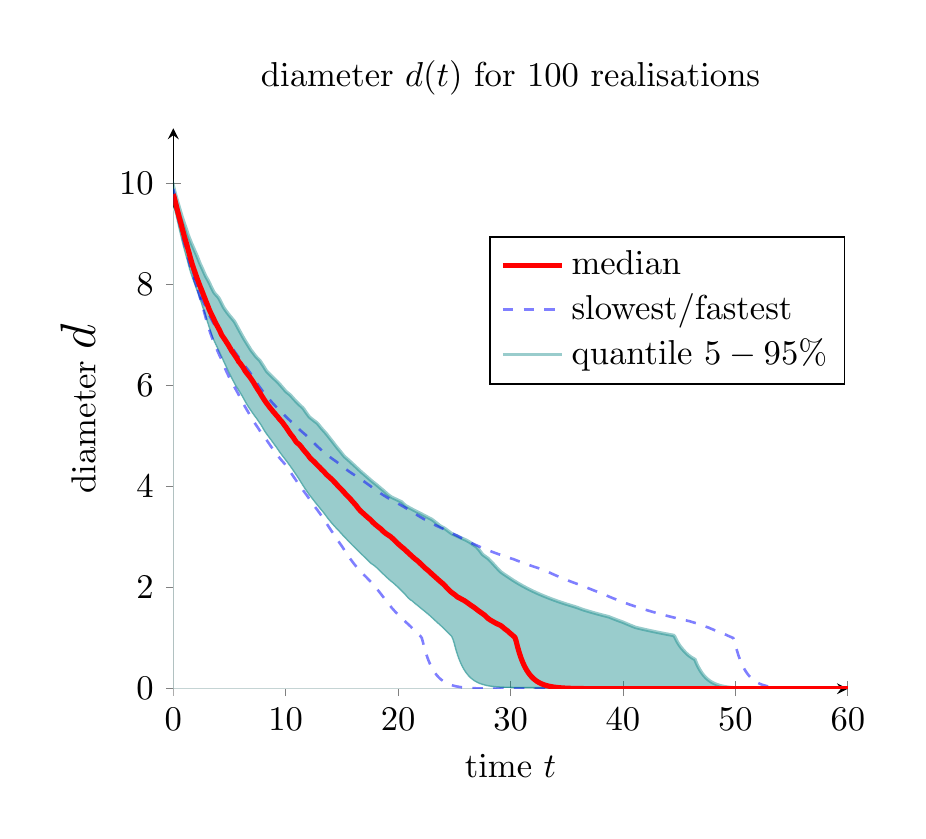}
    \includegraphics[width=.47\textwidth]{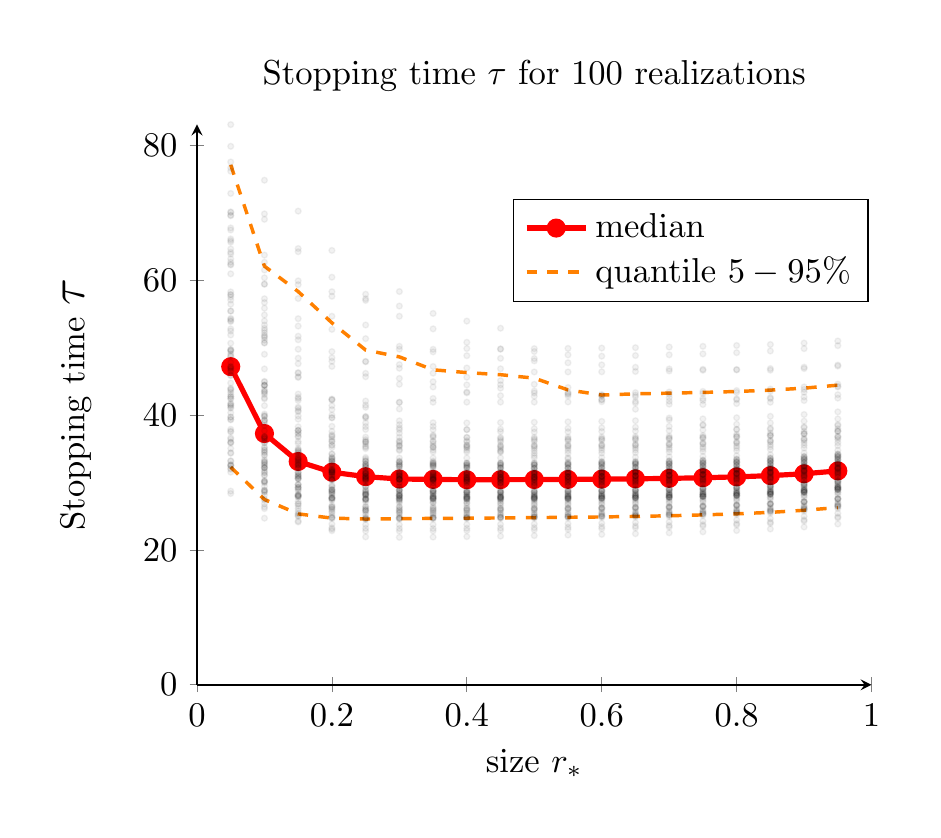}
    \caption{{\bf Left:} diameter $d(t)$ over time for $100$ realizations (quantile representation).  {\bf Right:} stopping time $\tau$ \eqref{eq:tau} depending on the size of the critical region $r_*$.}
    \label{fig:1D_decay_d}
  \end{figure}

\section{Relaxed no one left behind}

In this section we will investigate whether it is possible to weaken the constraints imposed by the NOLB dynamics and still maintain convergence to a consensus.  The critical ingredient in the argument used to show the convergence to consensus of the NOLB dynamics was the preservation of connectivity of the entire configuration of agents.  However, the dynamics introduced in \eqref{eq:no_one_left_behind} preserve connectivity between individual \textit{agents} by Proposition \ref{prop:distance_decaying} - once two agents begin interacting they continue to do so throughout the evolution of the dynamics as each agent "takes care of" every agent in its critical region.  However while this is clearly sufficient to preserve connectivity of the whole configuration, it isn't necessary.  Instead of maintaining direct connectivity between agents we need only to maintain a \textit{path} between them.

Instead of preventing all disconnections as in the NOLB dynamics, we could allow individual agents to disconnect as long as they remain connected to a mutual neighbor  - an agent does not have to "take care" of an agent in its behind region if one of its neighbors is already doing so. This intuition can be made rigorous via a description using the behind graph.  We name the resulting dynamics \textit{relaxed} no one left behind as we remove constraints while maintaining (global) connectivity. It is however unclear whether the new dynamics will lead to a faster convergence to consensus.

\subsection{Model introduction}

Before introducing the model, we formally define what we mean by a relaxed behind graph. We recall that we denote by $G=(V,E)$ and $G^{\mathcal{B}}=(V,E^{\mathcal{B}})$ respectively the interacting graph and the behind graph. $V=\{1\dots N\}$ is the set of all $N$ opinions while $E$ and $E^{\mathcal{B}}$ are the edge sets defined as:
\begin{itemize}
\item $(i,j)\in E$ if $\|{\bf x}_i-{\bf x}_j\|\leq 1$,
\item $(i,j)\in E^{\mathcal{B}}$ if $j\in \mathcal{B}_i$.
\end{itemize}
The behind graph $E^{\mathcal{B}}$ is a directed subgraph of $E$ (see remark \ref{rem:behind_graph}). To define the relaxed behind graph, we identify unnecessary edges. If two agents $i$ and $j$ are neighbors then it is possible that their behind regions overlap and therefore possible that a third agent, $k$, might be in both behind regions.  In terms of the behind graph this means that there is an edge from agent $i$ to agent $k$ \textit{and} from agent $j$ to agent $k$.  However since $i$ and $j$ are neighbors, only one of those edges needs to be present in the behind graph for the interaction graph of the configuration of agents to remain connected; $i$ does not need to take care of $k$ if $j$ is already doing so (or vice versa).  Therefore, we could remove one of those edges from the behind graph creating a new edge set $\widetilde{E}^{\mathcal{B}}$ (see Figure \ref{fig:relaxing_graph}); if the configuration of agents evolves according to the NOLB dynamics in terms of the relaxed edge set $\widetilde{E}^{\mathcal{B}}$ the interaction graph will still remained connected as paths between agents are maintained.

We now define formally a relaxed behind graph.
\begin{defn}
  Given a configuration of agents $\set{{\bf x}_{i}}_{i}\subseteq \Rn$, their corresponding behind graph $G^{\mathcal{B}} = (V, E^{\mathcal{B}})$ and interaction graph $G = (V,E)$, we say that $\widetilde{G}^{\mathcal{B}} = (V, \widetilde{E}^{\mathcal{B}})$ is a \textbf{relaxed behind graph} of $G^{\mathcal{B}}$ if:
  \begin{itemize}
  \item $\widetilde{G}^{\mathcal{B}}$ is a subgraph of $G^{\mathcal{B}}$
  \item for any $(i,j)\in E$, if $(i,k) \in \widetilde{E}^{B}$ then $(j,k) \notin \widetilde{E}^{B}$.
  \end{itemize}
\end{defn}

\begin{figure}[ht]
  \centering
  \includegraphics[scale = 2.2]{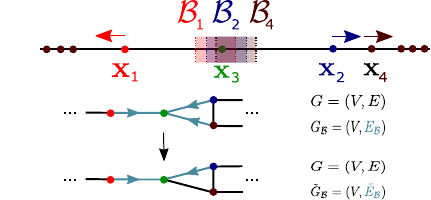}
  \caption{An example of how the behind graph can be relaxed while still ensuring that the interaction graph remains connected.  The interaction graph is represented by undirected and directed edges, the behind graph is represented by only the blue directed edges. Agent 3 is in the behind region of both agent 2 and agent 4 and agents 2 and 4 are connected in the interaction graph therefore we may remove the edge from agent 4 to agent 3.}
  \label{fig:relaxing_graph}
\end{figure}

Note that the relaxed behind graph is not unique as one has degrees of freedom in which edges are removed ($(i,k)$ is removed \textit{or} $(j,k)$).  However, given any full behind graph any two relaxed behind graphs will have the same number of edges.  We can now define a relaxed version of the NOLB dynamics.  Intuitively this new model is exactly the NOLB dynamics however instead of using the full behind graph it is defined in terms of a relaxed behind graph.

\begin{model}[RNOLB]\label{model:relaxed_no_one_left_behind}
  Let $\set{{\bf x}_{1},...,{\bf x}_{N}}\subseteq \Rn$ be a configuration of agents with behind graph $G^{B}$ and let $\widetilde{G}^{\mathcal{B}} = (V, \widetilde{E}^{\mathcal{B}})$ be a relaxed behind graph corresponding to $G^{B}$.  The \textbf{relaxed no one left behind (RNOLB)} dynamics are given by:
  \begin{equation}\label{eq:relaxed_no_one_left_behind}
    {\bf x}_{i}' =  P_{\widetilde{\mathcal{C}}_i}\big(\overline{\bf x}_{i} - {\bf x}_{i}\big)
  \end{equation}
  where $P_{\widetilde{\mathcal{C}}_{i}}: \Rn \rightarrow \widetilde{\mathcal{C}}_{i}$ is the projection operator associated to the cone of velocities $\widetilde{\mathcal{C}}_{i}$ given by:
  \begin{equation}
    \widetilde{\mathcal{C}}_{i} = \{ {\bf v}\in\mathbb{R}^d\,|\;\langle  {\bf v}, {\bf x}_{j}-{\bf x}_{i} \rangle \geq 0\quad \forall  j\;\text{such that}\;(i,j)\in \widetilde{E}^{\mathcal{B}}  \}.
  \end{equation}

\end{model}
We now present an algorithm to easily calculate the relaxed behind graph.  Intuitively, at each time, $t$, an order is randomly computed for the agents.  Then, according to the order each agent projects its velocity towards agents in its behind region that have not already been projected towards by neighboring agents earlier in the order; each agent "takes care of" agents in its behind region that have not already been taken care of by one of its neighbors.  In Figure \ref{fig:disconnect_reconnect} we demonstrate that under these dynamics that individual agents are indeed allowed to disconnect however the connection of the whole configuration is maintained.

\begin{algorithm}[ht]
  \caption{Compute relaxed behind graph}
  \begin{algorithmic}[1]
    \Procedure{Compute relaxed behind graph}{}
      \State Choose an order $\sigma \sim \text{Unif}(\text{permutations of}\; \set{1,\dots,N})$
    \For{$i\in \set{1,\dots,N}$}
      \For{$j\in \set{1,\dots,N}$}
        \If {$(i,j)\in E^{\mathcal{B}}\;\text{and there is no }k\;\text{such that } (i,k)\in E \text{ and } (k,j)\in\widetilde{E}^{\mathcal{B}}$}
        % \If {$(i,j)\in E^{\mathcal{B}}\;\text{and}\;(k,j)\notin\widetilde{E}^{\mathcal{B}}\;\forall k\;\text{such that}\; (i,k)\in E\;\text{and}\;\sigma(k)<\sigma(i)$}
           \State Add $(i,j)$ to $\widetilde{E}^{\mathcal{B}}$
          \EndIf
        \EndFor
      \EndFor
    \EndProcedure
  \end{algorithmic}
\end{algorithm}

%  Let $y\in \Rn$ and $\set{x_{k}}_{k} \subset \Rn$ we define
%   \begin{equation}\label{eq:getting_closer_cone}
%       \mathcal{C}(y, \set{x_{k}}) := \set{v\in\Rn | \langle v, x_{k} - y \rangle \geq 0},
%   \end{equation}
% so $\mathcal{C}(y, \set{x_{k}})$ is the set of directions in which one can move to get closer to all points in $\set{x_{k}}_{k}$ starting from $y$.  We can now define the dynamics.  We also introduce notation for the \textit{set of neighbors} of an agent.  Given a configuration of agents $\set{x_{1},...,x_{N}}$ we define:
% \begin{equation*}
%   N_{i} = \set{x_{j} | |x_{i} - x_{j}| \leq 1}.
% \end{equation*}
% We can now introduce the dynamics.

% \begin{model}[Relaxed no one left behind]
%   Let $\set{x_{1},...,x_{N}}$ be a configuration of agents.  Let $\set{\mathcal{O}_{t}}_{t}$ be a stochastic process where $\mathcal{O}_{t}\thicksim \text{Unif}(\text{permutations of }\set{1,...,N})$ for all $t$.  The \textbf{relaxed no one left behind dynamics (RNOLB)} are given:
%   \begin{equation}\label{eq:RNOLB}
%     \mathcal{P}_{\mathcal{C}(x_{i}, T_{I}^{t})}(\overline{x}_{i} - x_{i})
%   \end{equation}
%   where:
%   \begin{equation}\label{eq:taking_care_of}
%     T_{i}^{t} := \set{x_{j} | x_{j}\in \mathcal{B}_{i}\;\text{and}\; x_{j}\notin T_{k}^{t}\;\text{for all}\; x_{k}\in N_{i}\; \text{such that}\; \mathcal{O}_{t}(k) < \mathcal{O}_{t}(i)}
%   \end{equation}
% \end{model}

Notice that in the example of the NOLB dynamics in Figure \ref{fig:disconnect_reconnect} that the connectivity of the whole configuration of agents is maintained and further that once any two agents connect they remain connected.  In particular this is true for the agents corresponding to the red and blue trajectories.  However in the RNOLB dynamics these two agents become immediately disconnected.  Nonetheless, the connectivity of the whole configuration is maintained as the green agent "takes care" of the blue agent and preserves the existence of a path between the red and blue agents. Indeed, it is clear that two agents $i$ and $j$ connected by a path cannot become disconnected.
\begin{prop}
  The RNOLB dynamics maintain connectivity of the whole configuration of agents.
\end{prop}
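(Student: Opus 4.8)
The plan is to show that connectivity can be destroyed only by breaking an edge whose removal disconnects the interaction graph $G$ (a \emph{bridge}), and that the relaxed behind graph is engineered precisely so that such edges are never relaxed; hence the trapping mechanism of Proposition~\ref{prop:distance_decaying} still applies to them and they cannot break.

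First I would isolate the combinatorial heart of the construction. The relaxed behind graph produced by the Algorithm has the following \emph{covering} property: if $(i,j)\in E^{\mathcal{B}}$ but $(i,j)\notin\widetilde{E}^{\mathcal{B}}$, then the edge was skipped only because there already existed $k$ with $(i,k)\in E$ and $(k,j)\in\widetilde{E}^{\mathcal{B}}\subseteq E$. In that case $i-k-j$ is a path in $G$ avoiding the edge $(i,j)$, so $(i,j)$ is \emph{not} a bridge. Contrapositively, if $(i,j)$ is a bridge and $j\in\mathcal{B}_i$, then necessarily $(i,j)\in\widetilde{E}^{\mathcal{B}}$, i.e. agent $i$ keeps agent $j$ in its active cone $\widetilde{\mathcal{C}}_i$. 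This is the one place where the definition of the relaxed behind graph does real work.

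Next I would transfer the analytic trapping mechanism. Because $(i,j)\in\widetilde{E}^{\mathcal{B}}$ forces $\inner{P_{\widetilde{\mathcal{C}}_i}(\overline{\bf x}_i-{\bf x}_i)}{{\bf x}_j-{\bf x}_i}\geq 0$, agent $i$ never moves away from agent $j$; together with the symmetric bound for agent $j$ (obtained exactly as in the proof of Proposition~\ref{prop:distance_decaying}, using that $(j,i)\in\widetilde{E}^{\mathcal{B}}$ when $i\in\mathcal{B}_j$ and the projection estimate otherwise), one gets $\frac{d}{dt}|{\bf x}_i-{\bf x}_j|\leq 0$ whenever $i$ and $j$ are within the critical distance and $(i,j)$ is a bridge. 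The exit-time argument of Corollary~\ref{cor:preserve_connectivity} then shows that a bridge satisfying $|{\bf x}_i-{\bf x}_j|\leq 1$ retains $|{\bf x}_i-{\bf x}_j|\leq 1$ for as long as it remains a bridge. Finally, a first-time argument closes the proof: if connectivity were lost, let $t_1$ be the first disconnection time and $(A,B)$ the resulting cut; at $t_1$ every crossing edge sits at distance exactly $1$ and is about to break, and one shows that at least one crossing edge is protected in the above sense, contradicting that all of them break.

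The main obstacle is this last step when the cut $(A,B)$ is carried by several crossing edges simultaneously, none of which is individually a bridge of $G(t_1)$. For such a cut the clean ``a bridge is never relaxed'' dichotomy must be upgraded: starting from a crossing edge whose behind-constraint has been relaxed, the covering property hands back a substitute neighbor $k$ and hence another crossing edge carrying a \emph{retained} constraint, and one must guarantee that chasing these substitutes terminates at a genuinely protected crossing edge rather than cycling. I expect to control this by orienting the chase along a hyperplane separating $A$ from $B$ and tracking the agent of $B$ that is extremal toward $A$, whose behind-constraint toward the $A$-side cannot be relaxed away. Once a single protected crossing edge is produced, its distance is non-increasing by the mechanism above, contradicting the breaking of the cut and completing the proof.
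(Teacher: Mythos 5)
The paper itself offers no proof of this proposition beyond the one-line assertion preceding it (``two agents connected by a path cannot become disconnected''), so there is no official argument to measure you against; what matters is whether your outline closes the question, and it does not. Your combinatorial observation is correct and is the right starting point: if $(i,j)\in E^{\mathcal{B}}\setminus\widetilde{E}^{\mathcal{B}}$ then the algorithm supplies $k$ with $(i,k)\in E$ and $(k,j)\in\widetilde{E}^{\mathcal{B}}\subseteq E$, so the undirected edge $\{i,j\}$ is covered by the path $i$--$k$--$j$ and cannot be a bridge; hence bridges are never relaxed, and since bridgeness is symmetric both directed behind-edges over a bridge are retained whenever the corresponding critical-region memberships hold, which lets the argument of Proposition \ref{prop:distance_decaying} and Corollary \ref{cor:preserve_connectivity} go through for bridges. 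The proof fails, as you anticipate, exactly at the multi-edge cut, and your proposed repair does not yet work for two concrete reasons.

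First, a retained directed edge $(k,b)\in\widetilde{E}^{\mathcal{B}}$ constrains only $\mathbf{v}_k$, giving $\langle \mathbf{v}_k,\mathbf{x}_b-\mathbf{x}_k\rangle\geq 0$; it says nothing about $\mathbf{v}_b$. The reverse edge $(b,k)$ is relaxed independently (the behind graph is directed) and may be delegated to a neighbor of $b$ lying on $b$'s own side of the cut, in which case $\frac{d}{dt}|\mathbf{x}_k-\mathbf{x}_b|^2=-2\langle \mathbf{v}_k,\mathbf{x}_b-\mathbf{x}_k\rangle-2\langle \mathbf{v}_b,\mathbf{x}_k-\mathbf{x}_b\rangle$ can still be positive. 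So producing ``a single protected crossing edge'' in your sense does not by itself contradict the breaking of the cut; you need a crossing edge protected in \emph{both} directions, or a different monotone quantity. Second, the chase through substitutes need not land on a crossing edge at all: if $(a,b)$ with $a\in A$, $b\in B$ is relaxed in favor of $k$, the retained edge $(k,b)$ crosses the cut only when $k\in A$; when $k\in B$ you merely obtain the crossing interaction edge $(a,k)$ with no retained constraint attached to it, and for the extremal agent $b^*\in B$ the relaxation of $(b^*,a)$ can likewise be delegated to a neighbor $k\in A$ of $b^*$, making the retained edge $(k,a)$ internal to $A$ and useless across the cut. Until you exhibit either a termination argument for this chase that ends at a two-sided protected crossing edge, or a global Lyapunov-type quantity for the cut, the proposition remains unproved by your route; note also that your appeal to ``the symmetric bound obtained exactly as in Proposition \ref{prop:distance_decaying}'' silently assumes $(j,i)\in\widetilde{E}^{\mathcal{B}}$ whenever $i\in\mathcal{B}_j$, which is precisely what relaxation can destroy.
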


\begin{figure}[ht]
  \centering
    \includegraphics[scale = .8]{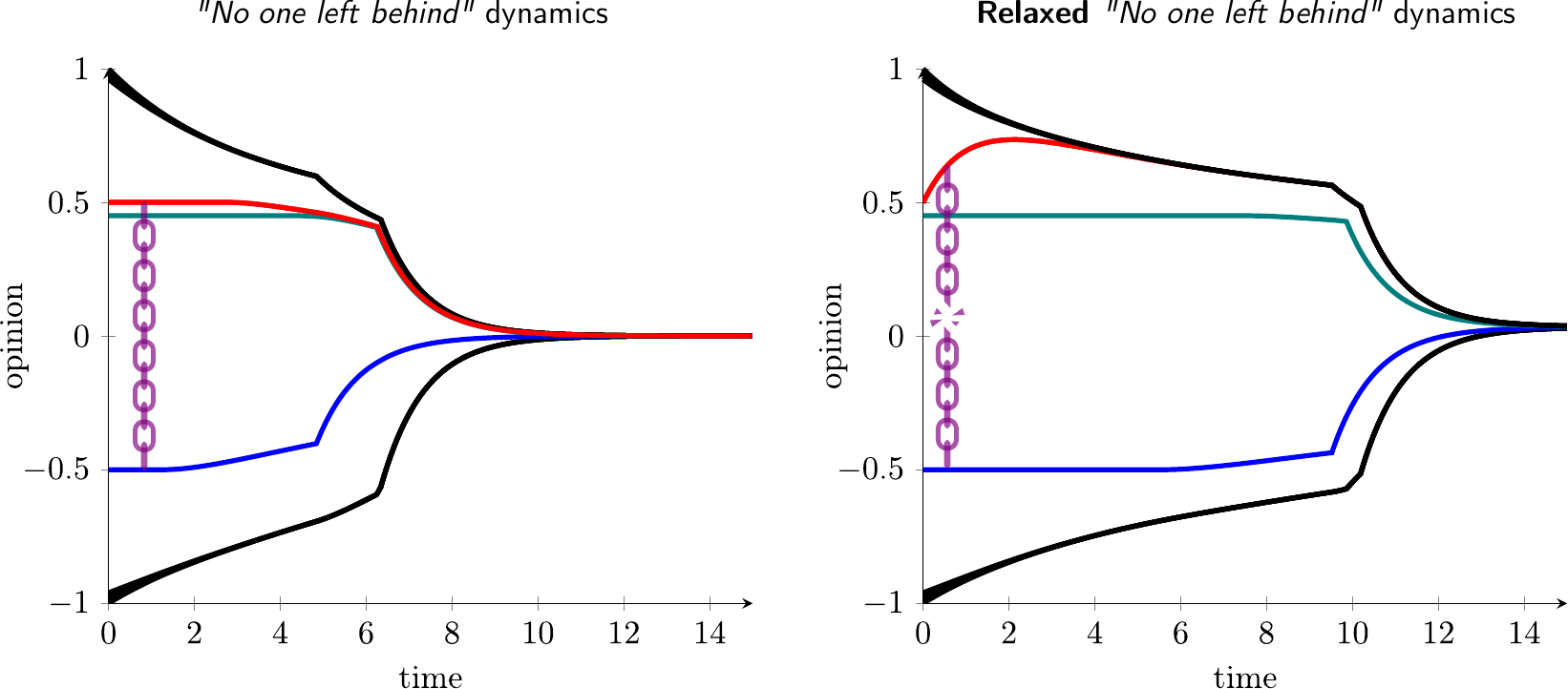}
    \caption{The NOLB dynamics do not allow the red agent to disconnect from the blue agent (illustrated with a purple chain).  The RNOLB dynamics allow this disconnection to occur but maintain connectivity of the whole configuration.}
    \label{fig:disconnect_reconnect}
\end{figure}

  % \begin{prop}
  %   The RNOLB dynamics maintain connectivity of the whole configuration of agents
  % \end{prop}
  % \begin{proof}
  %   Sketch: If $i$ and $j$ are not connected by a path then at some point a path between them had two bad direct neighbors disconnect which implies that no neighbors of the bad neighbors took care of each other, a contradiction to the definition of the dynamics.
  % \end{proof}

  \subsection{RNOLB as an interpolation between NOLB and bounded confidence}
  In this section we conduct a numerical experiment in one dimension to demonstrate that, in a sense, the RNOLB dynamics can be seen as an "interpolation" between the Hegselmann  Krause dynamics \eqref{eq:opinion_formation} and the NOLB dynamics defined in Model \ref{model:no_one_left_behind}.  One of the hallmarks of the bounded confidence dynamics is the formation of clusters of opinions. The requirement of the NOLB dynamics that agents not move if there is another agent in their critical region prevents the formation of clusters (see for example, Figures \ref{fig:example_withWithout_ctrl} and \ref{fig:2D_timelapse}).  Agents with opinions on the interior of the convex hull of opinions always have another agent in their critical region and are prevented from moving until the boundary of the convex hull has contracted sufficiently close to them.  We will see that the weaker conditions of the RNOLB model allow cluster formation to occur initially while still maintaining convergence to a consensus (for a connected initial condition).

  To measure the amount of clustering in a configuration of agents we introduce a simple metric.  Let $R = L/N$ where $L$ is the length of the range of possible opinions and $N$ is the number of agents.  Given a configuration of agents $\set{{\bf x}_{i}}_{i}$ and an agent in the configuration ${\bf x}_{j}$, we count the number of agents who are within $R$ of ${\bf x}_{j}$.  We then take the average over all agents in the configuration.  We will refer to this metric as the \textit{clustering number}.  If the configuration of agents is initially uniformly distributed the clustering number is $2$ (in one dimension).  If agents begin to cluster the clustering number should increase as agents begin to collect more neighbors within $R$.  Clearly, the maximum clustering number for any configuration is the number of agents in the configuration and if a consensus is reached that maximum will be attained.

  In the top three plots of Figure \ref{fig:interpolation_comparison} we show the long term behavior of the RNOLB, NOLB, and bounded confidence dynamics for the same initial condition.  As expected, the bounded confidence dynamics do not reach a consensus and the NOLB and RNOLB dynamics do. Additionally, we qualitatively observe the formation of distinct clusters in the bounded confidence dynamics and the lack of clusters in the NOLB dynamics.  Interestingly, in the RNOLB dynamics we initially observe the formation of clusters that are qualitatively very similar to those observed in the bounded confidence dynamics.  However, instead of remaining distinct (as in the bounded confidence model) these clusters eventually merge and a consensus is reached.
  \begin{figure}[ht]
    \includegraphics[scale = .8]{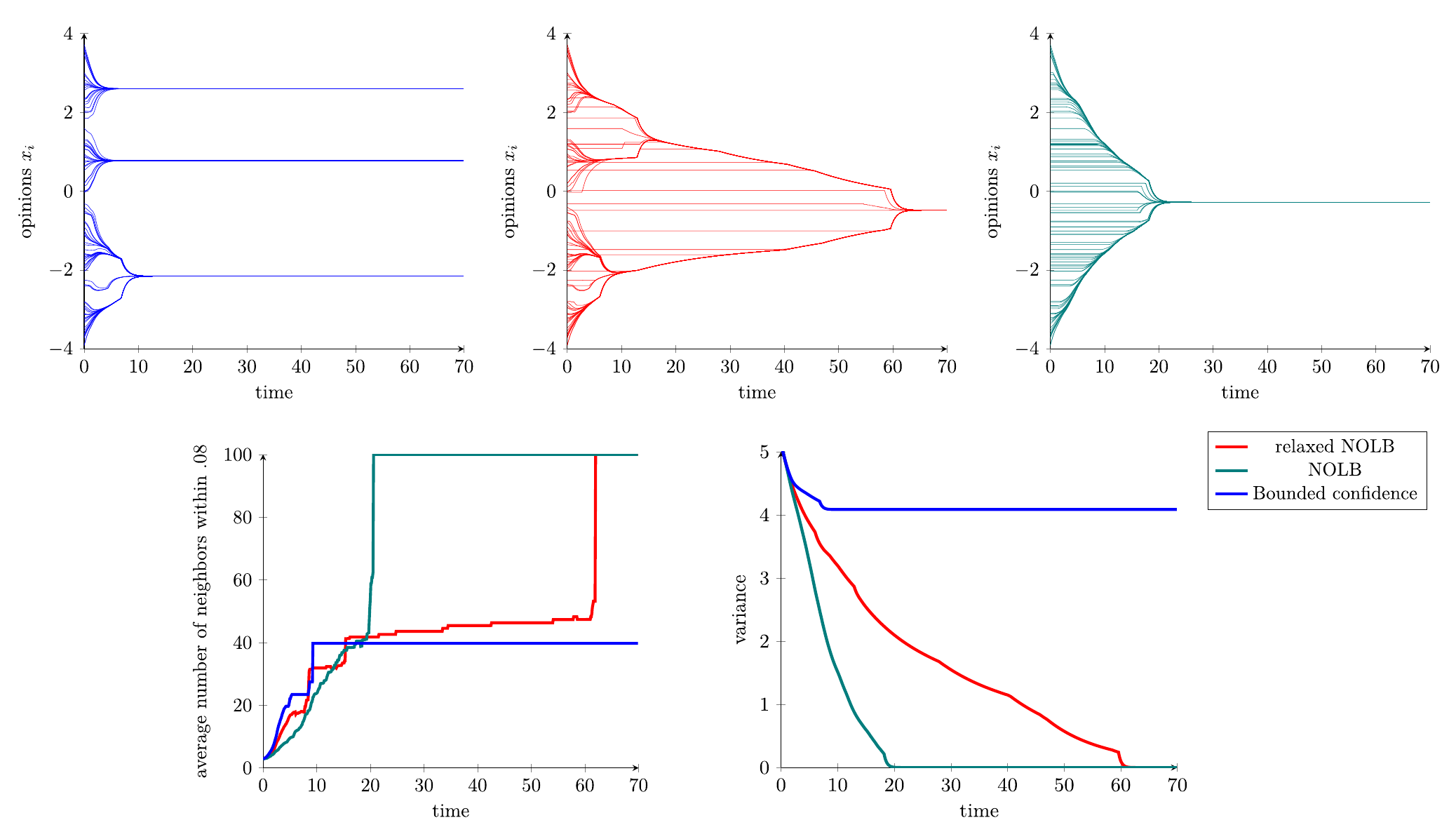}
    \caption{The RNOLB dynamics can be seen as an interpolation between NOLB and bounded confidence.}
    \label{fig:interpolation_comparison}
  \end{figure}

  This qualitative observation is supported by measuring the clustering number of each configuration as time evolves - this is plotted in the bottom left plot of Figure \ref{fig:interpolation_comparison}.  Notice that in the initial period of cluster formation (roughly before $t=10$) that the clustering number of the bounded confidence dynamics is the fastest to increase whereas the NOLB dynamics is the slowest which reflects the strong formation of clusters in the bounded confidence dynamics and weak cluster formation in NOLB. Notably, in this period the clustering number of the RNOLB dynamics increases more slowly than the clustering number of the bounded confidence dynamics but faster than the clustering number of the NOLB dynamics. This supports our qualitative observation that RNOLB is in a sense an interpolation of bounded confidence and NOLB as it has (at least initially) more clustering than NOLB as it allows for free movement of many agents on the interior of the convex hull, but less clustering than bounded confidence as it forces some "moderating" agents to maintain their position in order to preserve connectivity and eventually reach consensus.  We can also observe this interpolation in the evolution of the variance of each configuration which is shown in the bottom right plot of Figure \ref{fig:interpolation_comparison}.  Counter-intuitively, strong clustering causes the decay speed of the variance to reduce as agents in individual clusters can (at least initially) move away from each other.  This effect is observed in the example in Figure \ref{fig:interpolation_comparison} as the NOLB dynamics have the fastest decay in variance, the bounded confidence dynamics have the slowest, and the variance decay of RNOLB is faster than bounded confidence but slower than NOLB which reflects the varying amount of clustering observed in the three different models.  We also observe that due to the higher degree of clustering, the RNOLB dynamics are slower to converge to a consensus then the NOLB dynamics.  Despite this, we still observe exponential convergence once all agents are within the interaction range of each other.

\subsection{Diameter decay in RNOLB}

In our previous results concerning the NOLB dynamics we show rigorously that convergence to a consensus occurs in two stages; when the diameter of the configuration of agents is greater than one the rate is linear, afterwards it spontaneously becomes exponential.  Our estimates of these rates were fairly rough and in an attempt to discover whether the rates are faster in practice we simulated 100 realizations of the NOLB dynamics and found a large disparity in stopping times which suggests that our estimates are unlikely to be improved.  Here, we repeat this experiment for the RNOLB dynamics in order to investigate whether there is a similar phase transition in the convergence and disparity in stopping times.

We perform $100$ realizations of the RNOLB dynamics with initial conditions drawn from a uniform distribution on the interval [0, 10].  We then again compute the evolution of the diameter of the configuration $d(t)$ for all realizations and plot the decay of the median of the diameter, the slowest and fastest decay, and the 5\% and 95\% quantiles.  Here, we again observe two phases of convergence with clear exponential convergence again emerging once the diameter reaches $1$.  However, compared to the NOLB dynamics convergence is much slower and there is a much larger disparity between stopping times.  The fastest evolution reaches a diameter of 1 at $\approx 50$ time units whereas the slowest evolution takes greater than $200$ time units.  This suggests that it will be difficult to obtain tight convergence rates in the case of the RNOLB dynamics as well.

\begin{figure}[ht]
  \centering
  \includegraphics[scale=1]{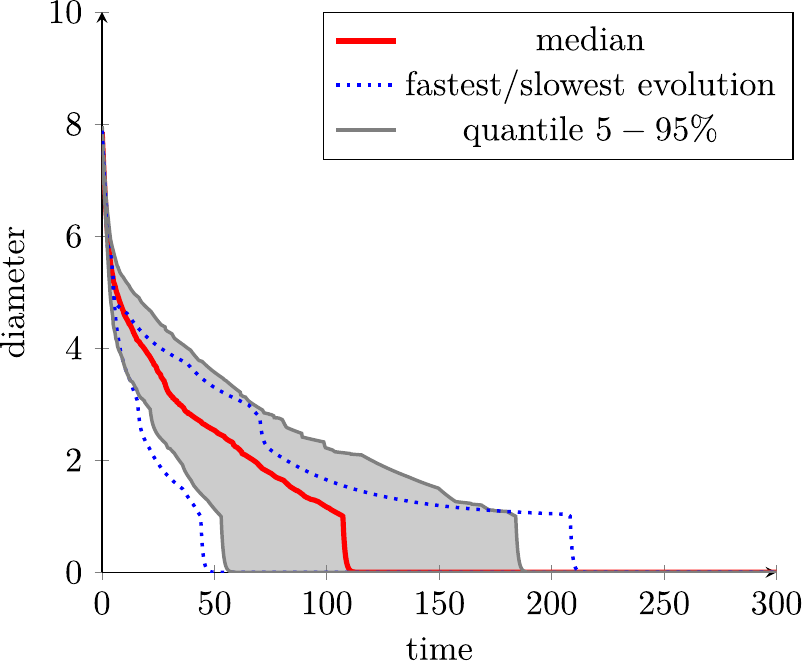}
  \caption{Diameter, $d(t)$ over time for 100 realizations (quantile representation).}
\end{figure}

\section{Conclusion and future work}

In this paper we studied variants of the Hegselmann -Krause bounded confidence dynamics introduced in \cite{hegselmann_opinion_2002}.  The modifications we introduced were aimed at mitigating the generic cluster forming behavior seen in the bounded confidence dynamics and inducing consensus among the agents.  Motivated by the attractive nature of the interaction in bounded confidence dynamics we introduced a variant dubbed \textit{No one left behind} (NOLB) that maintains connectivity between agents.  We rigorously demonstrated that this control is sufficient for unconditional convergence to a consensus regardless of the dimension of agent opinions.  Due to the nonlinear and discontinuous nature of the dynamics the argument relies on the interplay between two key properties of the dynamics; contractivity and the preservation of connectivity of the configuration of agents.  In one dimension we were able to derive explicit convergence rates that quantify how fast a consensus is reached.  Additionally, we conducted numerical experiments that suggest that tighter bounds on the convergence rates are likely not possible.

The NOLB dynamics maintain local, pairwise connectivity between agents, however our argument for unconditional convergence to a consensus relied only on global connectivity of the configuration of agents.  Motivated by this we introduced a second variant to the bounded confidence dynamics we dubbed \textit{Relaxed no one left behind} (RNOLB) aimed at maintaining the existence of paths between pairs of agents.  We find that while this modification still results in unconditional convergence to a consensus, it retains more of the qualitative features of the bounded confidence dynamics; notably the emergence of clusters in the beginning of its evolution.  For this reason the RNOLB dynamics can be regarded as an interpolation between the bounded confidence dynamics and the NOLB dynamics.  We presented numerical investigations into the variance of agent opinions and a metric we dub the \textit{clustering number} that further support this view.

To our knowledge, models with a bounded confidence type interaction (one that depends on an interaction radius) have only been studied in a Euclidean setting.  Motivated by this observation we hope to study the behavior of models with a bounded confidence style interaction in topology other than Euclidean space, eg. the circle.  It would be interesting to extended this study and investigate whether the corresponding NOLB and RNOLB dynamics result in consensus in such spaces.
Empirical studies of models of decentralized collective behavior outside of statistical physics have been challenging in the past due to lack of data that both describes their motion and their pairwise interactions.  However, there has been some progress in recent years in biological studies of swarming animals such as birds and ants \cite{ballerini_empirical_2008, buhl_disorder_2006, poissonnier_experimental_nodate}.  We believe the advent of social media provides a data source through which an empirical study of consensus in the context of opinion formation could be possible \cite{estrada_communicability_2015, kempe_maximizing_2003}.  A possible goal of such a study would be to confirm the phenomenon found in this study; connectivity in the interaction network of agents is a main driver of consensus.

\appendix
\section{Appendix}

\setcounter{theorem}{1}
\setcounter{lem}{0}

\begin{lem}
  \label{lem:project_help_1}
  Let $\set{u_{i}}_{i}\subseteq\Rn$ and consider $C = \set{v | \inner{v}{u_{i}}\geq 0\;\text{for all $i$}}$.  If $u$ is in $C$ then $\inner{P_{C}(x)}{u}\geq\inner{x}{u}$ for all $x$ in $\Rn$.
\end{lem}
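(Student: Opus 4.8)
The plan is to invoke the standard variational characterization of the metric projection onto a closed convex set and then to exploit the additional fact that $C$ is a cone. First I would observe that $C$ is a nonempty closed convex cone: it is the intersection of the homogeneous half-spaces $\set{v \mid \inner{v}{u_i} \geq 0}$, each of which is closed, convex, and invariant under multiplication by nonnegative scalars, and it contains $0$. Consequently the projection $P_{C}$ is well defined and single-valued, and its value $p := P_{C}(x)$ is characterized by the obtuse-angle inequality
\begin{equation*}
  \inner{x - p}{y - p} \leq 0 \qquad \text{for all } y \in C.
\end{equation*}

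Next I would extract an orthogonality relation from the cone structure, which is the step that turns the generic projection inequality into the sharper statement we need. Since $C$ is a cone, both $0 \in C$ and $2p \in C$. Substituting $y = 0$ into the obtuse-angle inequality gives $\inner{x-p}{p} \geq 0$, while substituting $y = 2p$ gives $\inner{x-p}{p} \leq 0$; together these force
\begin{equation*}
  \inner{x - p}{p} = 0.
\end{equation*}
Feeding this identity back into the obtuse-angle inequality collapses it to $\inner{x - p}{y} \leq 0$ for every $y \in C$, i.e.\ the residual $x - p$ lies in the polar cone of $C$.

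The conclusion is then immediate. Taking $y = u$ in the last inequality, which is legitimate precisely because $u \in C$ by hypothesis, yields $\inner{x - p}{u} \leq 0$, that is
\begin{equation*}
  \inner{x}{u} \leq \inner{p}{u} = \inner{P_{C}(x)}{u},
\end{equation*}
which is exactly the claimed inequality. The only points requiring care are the verification that $C$ is genuinely a cone (so that $0$ and $2p$ are admissible test points) and the bookkeeping of signs in the variational inequality; once these are settled the argument is essentially one line. I do not anticipate any substantive obstacle here, as the whole content is the homogeneity of $C$ combined with the defining property of the projection onto a convex set.
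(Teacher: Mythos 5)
Your proof is correct, but it takes a genuinely different route from the paper. You use only the variational (obtuse-angle) characterization of the projection onto a closed convex set, combined with the homogeneity of $C$: testing against $y=0$ and $y=2p$ yields the orthogonality $\inner{x-p}{p}=0$, which collapses the variational inequality to the statement that $x - P_C(x)$ lies in the polar cone of $C$, and the claim follows by pairing with $u\in C$. The paper instead exploits the polyhedral description of $C$: it views $P_C(x)$ as the solution of a linearly constrained quadratic program, invokes the Karush--Kuhn--Tucker conditions to write $P_C(x) = x + \sum_i \lambda_i u_i$ with $\lambda_i \geq 0$, and then pairs with $u$, using $\inner{u_i}{u}\geq 0$. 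Your argument is more general --- it works for an arbitrary nonempty closed convex cone and never uses the generators $u_i$ beyond the fact that $u\in C$ --- and it sidesteps any appeal to optimality conditions. The paper's argument buys an explicit multiplier representation of the projection residual as a nonnegative combination of the $u_i$, which is slightly more information but is not needed for the lemma. Both proofs are complete and sound.
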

\begin{proof}
  $P_{C}(x)$ is by definition the solution to the minimization problem:
  \begin{align*}
    &\text{minimize}\;f(y) = \frac{1}{2}\|y - x\|^{2} \\
    &\text{subject to}\; g_{i}(y) = \inner{y}{u_{i}}\geq 0
  \end{align*}
  Since $C$ is closed and convex there exists a unique minimizer $y^{*}$ and since the constraints are linear the Karush-Kuhn-Tucker conditions imply that:
  \begin{align*}
    \nabla f(y^{*}) = \sum_{i} \lambda_{i}\nabla g_{i}(y^{*})
  \end{align*}
  where $\lambda_{i}\geq 0$.  Therefore:
  \begin{align*}
    y^{*} = x + \sum_{i}\lambda_{i}u_{i}
  \end{align*}
  which implies by definition of $C$ that:
  \begin{align*}
    \inner{P_{C}(x)}{u} &= \inner{x + \sum_{i}\lambda_{i}u_{i}}{u} = \inner{x}{u} + \sum_{i}\lambda_{i}\inner{u_{i}}{u} \geq \inner{x}{u}
  \end{align*}
  as desired.
\end{proof}

\begin{lem}
  \label{lem:delta_T}
  For any $0<r_*<1$, $\eta>0$ and $N>0$, there exist $T>0$ and $\delta>0$ such that:
  \begin{eqnarray}
    \delta+2T &\leq& \min(r_*,1-r_*) \\               % |x_p2-x_p1|\geq1-r_*, |x_p-x_p1|\leq1-r_*
    \frac{T^2}{2N^2} \eta\big(\eta (1-\delta-2T)-2N(\delta+2T)\big) &\geq& \delta. % x_p connects with x_p2
  \end{eqnarray}
                                                        %                                                         {\color[rgb]{0,.5,0} It seems that there is no need for the '2' (see proof theorem):\\
      %       $\frac{T^2}{2N^2} \eta\big(\eta (1-\delta-2T)-2N(\delta+{\bf 1}T)\big) \geq \delta$
      %       }
\end{lem}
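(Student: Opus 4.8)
The plan is to treat this as an elementary two-parameter feasibility problem and to make the choices in a definite order: first fix $T$ small, then choose $\delta$ even smaller. The only structural observation needed is that the bracketed quantity $\eta(1-\delta-2T) - 2N(\delta+2T)$ is continuous in $(\delta,T)$ and tends to $\eta > 0$ as $(\delta,T)\to(0,0)$, so it stays bounded away from zero once both parameters are small. This is what makes the prefactor $\tfrac{T^2\eta}{2N^2}$ able to dominate the right-hand side $\delta$.

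Concretely, I would first pick $T>0$ small enough that two conditions hold simultaneously: $2T \leq \tfrac12\min(r_*,1-r_*)$, reserving room for $\delta$ in the first inequality, and $\eta(1-2T)-4NT \geq \eta/2$. The latter is achievable because its left-hand side converges to $\eta$ as $T\to0$; for instance any $T \leq \eta/(4\eta+8N)$ works. With $T$ now frozen, set $F(\delta) = \tfrac{T^2}{2N^2}\eta\bigl(\eta(1-\delta-2T)-2N(\delta+2T)\bigr)$, which is exactly the left-hand side of the second inequality viewed as a function of $\delta$. Evaluating at $\delta=0$ and using the lower bound on the bracket gives $F(0) \geq \tfrac{T^2\eta^2}{4N^2} > 0$.

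The final step is a continuity argument. Since $F$ is a polynomial in $\delta$, the function $G(\delta) := F(\delta)-\delta$ is continuous with $G(0)=F(0)>0$, so there exists $\delta_0>0$ with $F(\delta)\geq\delta$ for all $\delta\in(0,\delta_0]$. Choosing any $\delta$ with $0<\delta\leq\min\bigl(\delta_0,\;\min(r_*,1-r_*)-2T\bigr)$ — the second bound being strictly positive by the choice of $T$ — then satisfies both required inequalities at once, which proves the lemma.

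I expect no genuine obstacle here: the claim is a soft existence statement and the quantities involved are polynomials, so no monotonicity or optimization is needed, only the sign of $F$ at $\delta=0$. The single point that requires care is the ordering of the choices: $T$ must be fixed before $\delta$ is sent toward zero, since both the positivity of the bracket and the lower bound $F(0)\geq \tfrac{T^2\eta^2}{4N^2}$ rely on $T$ sitting away from the degenerate regime. An alternative would be to couple the parameters explicitly, e.g.\ taking $\delta = \tfrac{T^2\eta^2}{8N^2}$ and verifying both inequalities directly for small $T$, but the continuity route avoids the bookkeeping and is the cleaner presentation.
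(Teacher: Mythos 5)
Your argument is correct and rests on the same observation as the paper's proof: the bracketed quantity tends to $\eta>0$ as both parameters vanish, so the prefactor $\tfrac{T^2}{2N^2}\eta$ eventually dominates $\delta$. The paper simply couples the parameters by setting $\delta=T^3$ and sending $T\to 0$ (one of the alternatives you mention), whereas you fix $T$ first and then shrink $\delta$ by continuity; the difference is purely presentational.
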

\begin{proof}

      %       We need to find a solution $(\delta_{*},T_{*})$ such that above inequality holds.
  We let $\delta=T^3$ and show that for $T>0$ sufficiently small both equations are satisfied. Indeed, the substitution leads to:
  \begin{eqnarray}
    \label{eq:delta_T_eq1_bis}
    T^3+2T &\leq& \min(r_*,1-r_*) \\
    \label{eq:delta_T_eq2_bis}
    \frac{1}{2N^2} \eta\big(\eta (1-T^3-2T)-2N(T^3+2T)\big) &\geq& T
  \end{eqnarray}
  Since $\min(r_*,1-r_*)>0$ and $T^3+2T \stackrel{T \to 0}{\longrightarrow} 0$, there exists $T_1>0$ such that \eqref{eq:delta_T_eq1_bis} is satisfied for $0<T\leq T_1$. Similarly, for the equation \eqref{eq:delta_T_eq2_bis}, we notice that
  \begin{displaymath}
    \frac{1}{2N^2} \eta\big(\eta (1-T^3-2T)-2N(T^3+2T)\big) \;\;\stackrel{T \to 0}{\longrightarrow}\;\; \frac{\eta^2}{2N^2} \;\;> 0.
  \end{displaymath}
  Therefore, there exists $T_2>0$ such that \eqref{eq:delta_T_eq2_bis} is satisfied for $0<T\leq T_2$. Taking $T=\min(T_1,T_2)$ and $\delta=T^3$, we deduce a solution to \eqref{eq:delta_T_eq1}-\eqref{eq:delta_T_eq2}.
\end{proof}

\bibliography{bib}

\begin{thebibliography}{10}

\bibitem{ballerini_empirical_2008}
Michele Ballerini, Nicola Cabibbo, Raphael Candelier, Andrea Cavagna, Evaristo
  Cisbani, Irene Giardina, Alberto Orlandi, Giorgio Parisi, Andrea Procaccini,
  Massimiliano Viale, and Vladimir Zdravkovic.
\newblock Empirical investigation of starling flocks: a benchmark study in
  collective animal behaviour.
\newblock {\em Animal Behaviour}, 76(1):201--215, July 2008.

\bibitem{blondel_krauses_2009}
V.~Blondel, J.~Hendrickx, and J.~Tsitsiklis.
\newblock On {Krause}'s multi-agent consensus model with state-dependent
  connectivity.
\newblock {\em IEEE transactions on Automatic Control}, 54(11):2586--2597,
  2009.

\bibitem{blondel_2r_2007}
Vincent~D. Blondel, Julien~M. Hendrickx, and John~N. Tsitsiklis.
\newblock On the 2r conjecture for multi-agent systems.
\newblock In {\em 2007 {European} {Control} {Conference} ({ECC})}, pages
  874--881, July 2007.

\bibitem{buhl_disorder_2006}
J.~Buhl, D.~J.~T. Sumpter, I.~D. Couzin, J.~J. Hale, E.~Despland, E.~R. Miller,
  and S.~J. Simpson.
\newblock From {Disorder} to {Order} in {Marching} {Locusts}.
\newblock {\em Science}, 312(5778):1402--1406, June 2006.

\bibitem{caponigro2013sparse}
Marco Caponigro, Massimo Fornasier, Benedetto Piccoli, and Emmanuel Tr{\'e}lat.
\newblock Sparse stabilization and optimal control of the cucker-smale model.
\newblock {\em Mathematical Control and Related Fields}, 3(4):447--466, 2013.

\bibitem{castellano_statistical_2009}
C.~Castellano, S.~Fortunato, and V.~Loreto.
\newblock Statistical physics of social dynamics.
\newblock {\em Reviews of Modern Physics}, 81(2):591--646, May 2009.

\bibitem{deffuant_mixing_2000}
G.~Deffuant, D.~Neau, F.~Amblard, and G.~Weisbuch.
\newblock Mixing beliefs among interacting agents.
\newblock {\em Advances in Complex Systems}, 03(01n04):87--98, January 2000.

\bibitem{estrada_communicability_2015}
E.~Estrada, E.~Vargas-Estrada, and H.~Ando.
\newblock Communicability angles reveal critical edges for network consensus
  dynamics.
\newblock {\em Physical Review E}, 92(5):052809, November 2015.

\bibitem{garimella_political_2018}
Kiran Garimella, Gianmarco De~Francisci~Morales, Aristides Gionis, and Michael
  Mathioudakis.
\newblock Political {Discourse} on {Social} {Media}: {Echo} {Chambers},
  {Gatekeepers}, and the {Price} of {Bipartisanship}.
\newblock In {\em Proceedings of the 2018 {World} {Wide} {Web} {Conference}},
  {WWW} '18, pages 913--922, Republic and Canton of Geneva, Switzerland, 2018.
  International World Wide Web Conferences Steering Committee.
\newblock event-place: Lyon, France.

\bibitem{gilbert_blogs_2009}
E.~Gilbert, T.~Bergstrom, and K.~Karahalios.
\newblock Blogs are {Echo} {Chambers}: {Blogs} are {Echo} {Chambers}.
\newblock In {\em 2009 42nd {Hawaii} {International} {Conference} on {System}
  {Sciences}}, pages 1--10, January 2009.
\newblock ISSN: 1530-1605.

\bibitem{goldie_using_2014}
David Goldie, Matthew Linick, Huriya Jabbar, and Christopher Lubienski.
\newblock Using {Bibliometric} and {Social} {Media} {Analyses} to {Explore} the
  “{Echo} {Chamber}” {Hypothesis}.
\newblock {\em Educational Policy}, 28(2):281--305, March 2014.

\bibitem{hegselmann_opinion_2002}
R.~Hegselmann and U.~Krause.
\newblock Opinion dynamics and bounded confidence: models, analysis and
  simulation.
\newblock {\em Journal of Artificial Societies and Social Simulation}, 5(3),
  2002.

\bibitem{hiriart2012fundamentals}
Jean-Baptiste Hiriart-Urruty and Claude Lemar{\'e}chal.
\newblock {\em Fundamentals of convex analysis}.
\newblock Springer Science \& Business Media, 2012.

\bibitem{jabin_clustering_2014}
P-E. Jabin and S.~Motsch.
\newblock Clustering and asymptotic behavior in opinion formation.
\newblock {\em Journal of Differential Equations}, 257(11):4165--4187, 2014.

\bibitem{kempe_maximizing_2003}
D.~Kempe, J.~Kleinberg, and Éva Tardos.
\newblock Maximizing the spread of influence through a social network.
\newblock In {\em Proceedings of the ninth {ACM} {SIGKDD} international
  conference on {Knowledge} discovery and data mining}, pages 137--146. ACM,
  2003.

\bibitem{krause_discrete_nodate}
U.~Krause.
\newblock {\em A discrete nonlinear and non-autonomous model of consensus
  formation}.

\bibitem{lorenz_continuous_2007}
J.~Lorenz.
\newblock Continuous opinion dynamics under bounded confidence: a survey.
\newblock {\em International Journal of Modern Physics C}, 18(12):1819--1838,
  December 2007.

\bibitem{lorenz_consensus_2006}
Jan Lorenz.
\newblock Consensus {Strikes} {Back} in the {Hegselmann}-{Krause} {Model} of
  {Continuous} {Opinion} {Dynamics} {Under} {Bounded} {Confidence}, January
  2006.

\bibitem{motsch_heterophilious_2014}
S.~Motsch and E.~Tadmor.
\newblock Heterophilious {Dynamics} {Enhances} {Consensus}.
\newblock {\em SIAM Review}, 56(4):577--621, January 2014.

\bibitem{olfati-saber_consensus_2007}
R.~Olfati-Saber, J.~A. Fax, and R.~M. Murray.
\newblock Consensus and {Cooperation} in {Networked} {Multi}-{Agent} {Systems}.
\newblock {\em Proceedings of the IEEE}, 95(1):215--233, January 2007.

\bibitem{piccoli2019sparse}
Benedetto Piccoli, Nastassia~Pouradier Duteil, and Emmanuel Tr{\'e}lat.
\newblock Sparse control of hegselmann--krause models: Black hole and
  declustering.
\newblock {\em SIAM Journal on Control and Optimization}, 57(4):2628--2659,
  2019.

\bibitem{poissonnier_experimental_nodate}
Laure-Anne Poissonnier, Sebastien Motsch, Jacques Gautrais, Jerome Buhl, and
  Audrey Dussutour.
\newblock Experimental investigation of ant traffic under crowded conditions.
\newblock {\em eLife}, 8.

\bibitem{quattrociocchi_echo_2016}
Walter Quattrociocchi, Antonio Scala, and Cass~R. Sunstein.
\newblock Echo {Chambers} on {Facebook}.
\newblock {SSRN} {Scholarly} {Paper} ID 2795110, Social Science Research
  Network, Rochester, NY, June 2016.

\bibitem{saber_consensus_2003}
R.~Saber and R.~Murray.
\newblock Consensus protocols for networks of dynamic agents.
\newblock In {\em Proceedings of the 2003 {American} {Control} {Conference},
  2003}, volume~2, pages 951--956. IEEE, Piscatway, NJ, June 2003.

\bibitem{spanos_dynamic_2005}
D.~Spanos, R.~Olfati-Saber, and R.~Murray.
\newblock Dynamic consensus on mobile networks.
\newblock In {\em {IFAC} world congress}, pages 1--6. Citeseer, 2005.

\bibitem{vicsek_novel_1995}
Tamás Vicsek, András Czirók, Eshel Ben-Jacob, Inon Cohen, and Ofer Shochet.
\newblock Novel {Type} of {Phase} {Transition} in a {System} of {Self}-{Driven}
  {Particles}.
\newblock {\em Physical Review Letters}, 75(6):1226--1229, August 1995.

\bibitem{weber_deterministic_2019}
Dylan Weber, Ryan Theisen, and Sebastien Motsch.
\newblock Deterministic {Versus} {Stochastic} {Consensus} {Dynamics} on
  {Graphs}.
\newblock {\em Journal of Statistical Physics}, 176(1):40--68, July 2019.

\bibitem{xia_opinion_2011}
H.~Xia, H.~Wang, and Z.~Xuan.
\newblock Opinion {Dynamics}: {A} {Multidisciplinary} {Review} and
  {Perspective} on {Future} {Research}.
\newblock {\em International Journal of Knowledge and Systems Science (IJKSS)},
  2(4):72--91, October 2011.

\bibitem{yu_second-order_2010}
W.~Yu, G.~Chen, M.~Cao, and J.~Kurths.
\newblock Second-{Order} {Consensus} for {Multiagent} {Systems} {With}
  {Directed} {Topologies} and {Nonlinear} {Dynamics}.
\newblock {\em IEEE Transactions on Systems, Man, and Cybernetics, Part B
  (Cybernetics)}, 40(3):881--891, June 2010.

\end{thebibliography}
\bibliographystyle{plain}

\end{document}